\algnewcommand\Failure{\textbf{Failure}}
\algnewcommand\True{\textbf{True}}
\algnewcommand\False{\textbf{False}}
\algnewcommand\Land{\textbf{and }}
\algnewcommand\Or{\textbf{or }}
\algnewcommand\algorithmicsmatch{\textbf{match}}
\algnewcommand\algorithmiccase{\textbf{case}}
\algnewcommand\algorithmicdefault{\textbf{default}}
\definecolor{typecolor}{HTML}{B00040}
\definecolor{actioncolor}{rgb}{0.58, 0.0, 0.82}
\newcommand{\bind}[0]{\textcolor{brown}{\textsc{Bind}}}
\newcommand{\subhist}[0]{\textcolor{brown}{\textsc{SubHist}}}
\newcommand{\subconc}[0]{\textcolor{brown}{\textsc{SubConc}}}
\newcommand{\histconj}[0]{\textcolor{brown}{\textsc{HistConj}}}
\newcommand{\overa}[1]{\textcolor{red}{\{}#1\textcolor{red}{\}}}
\newcommand{\under}[1]{\textcolor{blue}{[}#1\textcolor{blue}{]}}
\newcommand{\nextd}[0]{\textcolor{brown}{\textsc{new}}}
\newcommand{\bluet}[1]{\textcolor{blue}{#1}}
\newcommand{\oranget}[1]{\textcolor{orange}{#1}}
\theoremstyle{definition}
\newtheorem{definition}{Definition}[section]
\newtheorem{notation}{Notation}[section]
\newtheorem{example}{Example}[section]
\newtheorem{lemma}{Lemma}[section]
\newtheorem{theorem}{Theorem}[section]
\newtheorem{note}{Note}[section]
\lstdefinestyle{OCaml}{
    language=ML,
    numbers=left,
    numbersep=8pt,
    breaklines=true,
    xleftmargin=2em,
    basicstyle=\ttfamily\footnotesize,
    keywordstyle=\color[rgb]{0.0, 0.5, 0.0}\bfseries,
    keywordstyle=[2]\color[rgb]{0.6, 0.0, 0.0}\bfseries,
    stringstyle=\color[rgb]{0.6, 0.0, 0.0},
    commentstyle=\color{gray},
    morekeywords={match,with,let,in,rec,type,fun,function,val,sig,struct,module,open,of,and,or,not},
    morekeywords=[2]{int,string,bool,float,nat,list,array,option,unit,char},
    escapeinside={??},
    alsoletter={'},
    morestring=[b]',
    alsoletter={(},
    literate=
        {()}{{\textcolor[rgb]{0.0, 0.5, 0.0}{()}}}1
        {[]}{{\textcolor[rgb]{0.0, 0.5, 0.0}{[]}}}1
}
\begin{document}\sloppy

\title{Coverage Types for Resource-Based Policies}

\author{Angelo Passarelli}
\email{a.passarelli4@studenti.unipi.it}
\orcid{0009-0009-5714-4922}
\author{Gian-Luigi Ferrari}
\email{gian-luigi.ferrari@unipi.it}
\orcid{0000-0003-3548-5514}
\affiliation{%
  \institution{University of Pisa}
  \department{Computer Science Department}
  \city{Pisa}
  \state{Tuscany}
  \country{Italy}
}


\begin{abstract}
    Coverage Types provide a suitable type mechanism that integrates under-approximation logic to support Property-Based Testing. They are used to type the return value of a function that represents an input test generator. This allows us to statically assert that an input test generator not only produces valid input tests but also generates all possible ones, ensuring completeness.
    
    In this paper, we extend the coverage framework to guarantee the correctness of Property-Based Testing with respect to resource usage in the input test generator. This is achieved by incorporating into Coverage Types a notion of effect, which represents an over-approximation of operations on relevant resources. Programmers can define resource usage policies through logical annotations, which are then verified against the effect associated with the Coverage Type.
\end{abstract}

\begin{CCSXML}
<ccs2012>
   <concept>
       <concept_id>10002978.10002986.10002989</concept_id>
       <concept_desc>Security and privacy~Formal security models</concept_desc>
       <concept_significance>500</concept_significance>
       </concept>
   <concept>
       <concept_id>10002978.10002986.10002987</concept_id>
       <concept_desc>Security and privacy~Trust frameworks</concept_desc>
       <concept_significance>500</concept_significance>
       </concept>
   <concept>
       <concept_id>10002978.10002986.10002990</concept_id>
       <concept_desc>Security and privacy~Logic and verification</concept_desc>
       <concept_significance>300</concept_significance>
       </concept>
   <concept>
       <concept_id>10003752.10003790.10011740</concept_id>
       <concept_desc>Theory of computation~Type theory</concept_desc>
       <concept_significance>500</concept_significance>
       </concept>
   <concept>
       <concept_id>10003752.10003790.10003800</concept_id>
       <concept_desc>Theory of computation~Higher order logic</concept_desc>
       <concept_significance>300</concept_significance>
       </concept>
 </ccs2012>
\end{CCSXML}

\ccsdesc[500]{Security and privacy~Formal security models}
\ccsdesc[500]{Security and privacy~Trust frameworks}
\ccsdesc[300]{Security and privacy~Logic and verification}
\ccsdesc[500]{Theory of computation~Type theory}
\ccsdesc[300]{Theory of computation~Higher order logic}

\keywords{under-approximation, over-approximation, history expressions, effects, refinements types, resource policies, function as a service}


\maketitle

\section{Introduction}\label{sec:intro}
Resource management and security are essential aspects of software development, as proper handling of these factors can prevent issues ranging from minor bugs to major security vulnerabilities. Here, the term resource refers not only to local resources—such as files, sockets, memory, and network connections—but also, and more importantly, to remote computational entities, such as remote APIs, service entry points, and database connections. The reason for this inclusion is straightforward. For example, even an unopened file could introduce a vulnerability that attackers might exploit. These seemingly simple oversights can be difficult to detect, even at runtime, especially if they occur in rarely accessed execution paths. In such cases, the issue may only become apparent too late, after the software has already been released to users. Implementing robust resource management mechanisms throughout the entire software development lifecycle not only helps prevent bugs and crashes but also strengthens the security and reliability of the software.

To manage resources effectively, software engineering methodologies implement a combination of policies and best practices. Resource management can be approached in two main ways: dynamically \cite{dynamic1-logicguard,dynamic2-java,dynamic3-trace-prop} and statically \cite{static3-types-effects,static1-costa,static2}.

Dynamic resource management involves continuously testing and monitoring software in various environments to identify and address resource-related issues during runtime. This approach exploits tools such as stress testing and runtime monitoring, which are capable of detecting problems like memory leaks, unclosed file handles, and excessive API calls in real-time. By identifying these issues as they occur, developers can promptly address them before they negatively affect the end-user experience. On the other hand, static resource management focuses on analysing the codebase to identify potential violations of resource management policies before the software is executed. By using static analysis tools, developers can catch issues early in the development process, which significantly reduces the risk of bugs and enhances the overall quality of the software \cite{taint}. Several static analyses have been proposed to verify the correct use of program resources \cite{resource-usage,time-regions,enforcing,flow-sensitive,path-sensitive,typestate}. Most of these employ a type-based method to ensure resource safety properties \cite{dynamic4-marriot}. These approaches exploits a type system augmented with suitable annotation about resource usage to guide the analysis.

For the sake of simplicity, we focus on remote APIs. Since the body of a remote function is unknown to us, we cannot predict its behaviour concerning resource usage. Consequently, we cannot define the function reliably. Therefore, it is the responsibility of the API provider to specify not only the parameter and return types but also the function's latent effects. These latent effects must be assumed to reliably characterize the function’s behaviour in relation to the resources it utilizes.

In our approach, we refer to the latent effect of a function as \emph{History Expressions} \cite{history2,history3,history}. Formally, a History Expression is a term within a suitable Basic Process Algebra (BPA) \cite{bpa}. Intuitively, a History Expression represents sequences and non-deterministic choices regarding resource actions, potentially extending infinitely.

In this paper, we focus on a static, type-based verification approach that enables the static analysis of resources by defining usage policies through first-order logic formulas. Our approach integrates two distinct lines of research:

The first research direction involves the definition of a \emph{type and effect system} \cite{types-effects-general,static3-types-effects} for functions, where History Expressions capture the effect by describing the set of possible behaviours of a function concerning resource usage. In other words, a History Expression provides an \emph{over-approximation} of the function's behaviour with respect to resources. Usage policies are then verified against each behaviour in this over-approximation. Unlike the approach in \cite{history}, our method defines usage policies as first-order logic formulas, allowing for efficient validation using specialized theorem provers such as \emph{Z3} \cite{z3}.

The second research direction extends the concept of Coverage Types \cite{coverage} by incorporating History Expressions. Coverage Types capture the set of values that an expression is guaranteed to produce, in contrast to the traditional approach where types represent the potential range of values an expression might yield. Notably, the concept of Coverage Types integrates "must-style" reasoning principles, which are inherently under-approximate \cite{IL1,IL2}, making them a fundamental aspect of the type system.

The extended version of Coverage Types seeks to retain the key characteristics of both approaches outlined above. In particular, we adopt the Coverage Type framework as the foundation for Property-Based Testing (PBT) of programs \cite{PBT1,PBT2}, ensuring that the types reliably capture the behaviour of the program with respect to specified properties. Simultaneously, we introduction of History Expressions facilitates the verification of resource usage within programs, providing a mechanism for tracking and reasoning about resource handling. To demonstrate our approach, we utilize a core programming language as a case study, specifically, a functional programming language that has been enhanced with primitives for executing resource operations. This combination allows us to effectively test properties and verify resource handling in a cohesive and integrated manner.

It is important to note that in our type system, History Expressions are decoupled from the return type logic, whether it involves over- or under-approximation. This separation is a crucial feature, as it allows History Expressions to behave as a versatile formal technique that can be applied across various type systems. However, the goal of verifying properties related to resource usage is not limited to the context of Property-Based Testing, but is applicable to a wide range of programs. Indeed, our approach effectively demonstrates the integration of two distinct logical abstract paradigms, namely over- and under-approximation, within a single type system. Specifically, under-approximation is employed for the return type of an expression (represented by the Coverage Type), while over-approximation governs the type associated with resource behaviour (captured by the History Expression). This dual approach allows for a flexible verification framework, capable of addressing both precise guarantees and broader resource behaviour predictions within the same system.

The paper is organized as follows:
\begin{enumerate}
    \item In Section \ref{sec:overview} we provides an informal, intuitive, step-by-step presentation of the key design choices for the types used, illustrated with an example that effectively demonstrates the semantics of the constructs.
    \item Section \ref{sec:history} formally introduces the concept of \emph{History Expressions} and its formal semantics.
    \item Section \ref{sec:policies} formally presents possible ways of defining policies on a History Expression. 
    \item Sections \ref{sec:tuple_type} and \ref{sec:type_system} focus on introducing the formal relationship between the return type and the behaviour type of functions. Specifically, we extend the language $\lambda^{\textbf{TG}}$ and its type system \cite{coverage} to incorporate History Expressions. In Section \ref{sec:type_system} we present the operational semantics of the language and the main results regarding the correctness of the type system.
    \item Section \ref{sec:algorithm} extends bidirectional typing algorithms \cite{bidirectional-typing} of $\lambda^{\textbf{TG}}$ \cite{coverage}, enabling properties such as subtyping to be verified in a structural, non-semantics-dependent manner.
    \item Section \ref{sec:discussion} summarizes the main contributions of this paper. Finally, Section \ref{sec:conclusion} discusses potential future developments of this work.
\end{enumerate}

\section{Overview}\label{sec:overview}
Before formally presenting our work and the main results, we start with an informal, example-driven prologue. This will help illustrate the intuitive ideas behind the key constructs we are about to introduce.

\paragraph{History Expressions}
Usually, a function, whether local or remote, has a signature of the following form:
\begin{equation}
    f: \tau_1 \longrightarrow \tau_2
\end{equation}
Where $\tau_1$ represents the type of the formal parameters of $f$, while $\tau_2$ is the return type. When we refer to \emph{local} functions, we naturally have access to the function’s body, allowing us to directly infer its type. However, if $f$ is a \emph{remote} service, the body is immaterial and the type cannot be inferred by us. Instead, it is provided by the (API's) provider, and we assume that the information about the function’s type given by the provider is reliable and trusted.

Our goal here is to incorporate additional information about a function (or, more generally, an expression) that are related to how the resources within it are used.
The added information will be statically encoded as a History Expression denoted by $H$. 
The type associated with a function, as shown below,
\begin{equation}
    f: \tau_1 \overset{H}{\longrightarrow} \tau_2
\end{equation}
includes $H$, which is referred to as the latent effect of the function. In other words, we are working within a type and effect system \cite{types-effects-general,static3-types-effects}.

History Expressions were originally introduced in \cite{history} to handle the secure orchestration of properties of functional services and to manage secure resource usage. Our formulation of History Expressions differs from the original version presented in \cite{history}, as we incorporate the use of type qualifiers within them. Type qualifiers are first-order logic formulas that provide an abstract description of the values associated with a type. The concept of type qualifiers in History Expressions is derived from the notion of Refinement Types \cite{refinement}, which typically allow a type qualifier to be paired with a basic type (such as $int$, $bool$, $int\;list$, etc.) to convey additional information about the type's values. Coverage Types \cite{coverage} offer a specific instance of Refinement Types. Note that the introduction of type qualifiers enables more compact representations of histories. Furthermore, because History Expressions capture the effects of a function's behaviour on its resources, it is essential to link these effects on resources with the values returned by the function (the return type).

We have already pointed out that History Expressions provide an over-approximation of the actual behaviour of a function with respect to resources. Therefore, the introduction of type qualifiers allows for a more refined, lower over-approximation, enabling a more precise "skimming" or selection of specific values within a type to be considered.

\paragraph{Coverage Types} 
Coverage Types \cite{coverage} are built on the foundation of Refinement Types \cite{refinement}, which enable the creation of custom types by associating qualifiers with base types. These qualifiers provide an over-approximation of the values that an expression, linked to the type, may return.
Coverage Types were introduced to address the challenges of Property-Based Testing (PBT), where suitable auxiliary functions, known as input test generators, are used to generate input cases that satisfy specific properties (usually constraints on the program's inputs). 

The output of these generators is then fed as input into the program undergoing PBT. In a testing scenario, it is desirable to cover as many input cases as possible, including edge cases and rare scenarios. A key requirement is the ability to statically verify the completeness of an input test generator: it must not only produce correct input cases but also potentially \emph{all} relevant cases.

Rather than focusing on proving the correctness of a program, Coverage Types shift the emphasis towards proving the completeness of the generator functions in relation to the data they handle. Coverage Types achieve this by employing an under-approximation approach. The type associated with an expression provides a guarantee about the values it returns. Specifically, it ensures that all values satisfying the predicate associated with the type will be generated in at least one execution of the expression. It is important to note that the predicate does not necessarily represent all values returned by each execution. However, we are guaranteed that all values satisfying the predicate are covered, in line with the principles of under-approximation logic \cite{IL1}.

Here, we introduce \emph{Extended Coverage Types}, which are characterized by the inclusion of two type qualifiers. The first qualifier corresponds to the standard Coverage Type, while the second provides additional information that specifies \emph{exactly} the values returned by the expression: both \emph{all and only} those values. This extension stems from the need to incorporate \emph{correctness} in resource usage, which is tracked through History Expressions. From a technical standpoint, Extended Coverage Types involves handling under-approximation within an otherwise over-approximated type.

Our goal is to statically verify the correct usage of resources through a type that directly corresponds to a History Expression. In a type system that follows the idea of over-approximation for the return type, this would pose no issues; it would simply require pairing the return type with the behavioural type.

\begin{wrapfigure}{r}{0.4\textwidth}
    \centering
    \begin{lstlisting}
match bool_gen () with
| 0 -> 'a'
| 1 -> 'b'
    \end{lstlisting}
    \caption{}
    \label{fig:exmp2}
\end{wrapfigure}

In Coverage Types, however, the type represents the guarantee of coverage for certain values, while other values may exist along paths that were disregarded during the typing process. For example, consider the example in Figure \ref{fig:exmp2}, the \verb|patter-matching| can be typed by the Coverage Type:

\begin{equation}
    \under{v: char \;|\; v = \text{'a'}}
\end{equation}

The issue arises from excluding the path where the expression reduces to $b$. If this value is later used in an action or an external function call recorded in a History Expression, the correctness property of the type system would be compromised. This is because the expected history would no longer provide a valid over-approximation.

We address this issue by introducing a new Coverage Type, enhanced with a qualifier that ensures not only completeness (coverage) but also correctness. While this might seem like a significant change that could disrupt the type system and its denotations, this is not the case. The only rule affected is the one governing the subtyping relationship, as that is where paths are discarded.

To keep consistency, in this new type—referred to as the \textbf{Extended Coverage Type} (ECT), the second qualifier in the sub-typing rule must remain semantically identical. As we will demonstrate, if an expression is typed according to the type system rules of $\lambda^{\textbf{TG}}$ \cite{coverage} without applying the sub-typing rule, the type qualifier will precisely represent the values returned at runtime.

For instance, the \emph{ECT} associated with the expression of the previous example is:

\begin{equation}
    \under{v: char \;|\; v = \text{'a'} \;|\; v = \text{'a'} \;\lor\; v = \text{'b'}}
\end{equation}

\vspace{0pt}
\begin{wrapfigure}{l}{0.60\textwidth}
    \begin{lstlisting}
let f = ?\typeact{new}$_{file}$? ()
in
let _ = ?\typeact{open}? f
in
let rec g (n: int) (f: ?\typekw{file}?) : int list =
    if n < 0 then [] else (
        if bool_gen () then [] else (
            let x = int_range (0, n)
            in
            let _ = ?\typeact{write}? f x
            in
            x :: (g (x - 1) f)
        )
    )
in
g 10 f
    \end{lstlisting}
    \caption{Code that creates and opens a new file, and writes in a series of randomly generated numbers between 0 and 10 including extremes.}
    \label{fig:exmp1}
\end{wrapfigure}

We now consider a more expressive example. The code in Figure \ref{fig:exmp1} uses OCaml-like syntax and begins by creating and opening a new file. After that, a recursive function, \verb|g|, is defined. This function takes two parameters: an integer \verb|n| and a file \verb|f|, and writes to the file the number generated by the primitive \verb|int_range|. The \verb|int_range| function accepts a pair of numbers and randomly generates, following a uniform probability distribution, a number between the two given values. The generated number is then concatenated with the results of subsequent recursive calls, until either \verb|n| becomes less than \verb|0|, or early termination occurs via the \verb|bool_gen| primitive, which randomly generates a boolean value.This block of code produces \emph{all} lists in descending order without repeated elements, using numbers between \verb|0| and \verb|10|.

To ensure completeness in generation, early termination is essential; otherwise, lists that satisfy the criteria, such as \verb|[10]| and \verb|[7, 5]|, would never be generated. Finally, the function is applied to \verb|10| and the newly created file.

The return type will obviously be a \verb|int list| type. What we are now interested in is determining the program's behaviour with respect to resource usage. The History Expression, the latent effect, that our type system will be able to associate with the code on the left is as follows:

\begin{equation}
    \begin{gathered}
        new_{file}(X) \cdot open(X) \cdot \\
        \cdot \mu G\bluet{(}n{:}(int{:}v \geq 0), \;f{:}(file{:}\top_{file})\bluet{)}\oranget{(}\epsilon + write(v = f, \; v \geq 0 \;\land\; v \leq n) \cdot \\
        \cdot G\bluet{(}n{:}(\exists x. x \geq 0 \;\land\; v \leq n \;\land\; v = x - 1), \;f{:}(v = f)\bluet{)}\oranget{)} \cdot G(v = 10, \;v = X)
    \end{gathered}
\end{equation}

The effect described above is an over-approximation of the behaviour of the code in Figure 
\ref{fig:exmp1}. First, a file is created and associated with an arbitrary identifier, $X$. Then, the file is opened, and the latent effect of the recursive function \verb|g| is defined. This is achieved using recursion (via the $\mu$ construct), allowing us to express the following:

\begin{enumerate}
    \item The identifier $G$ to be used for recursive calls.
    \item The list of parameters that the function takes as input, with associated types.
    \item The latent effect of the recursive function includes the possibility of either terminating the recursion through the special empty action $\epsilon$, or writing the generated number $x$ to the file $f$. Since $x$ can be any number between $0$ and $n$, we achieve an over-approximation by incorporating the qualifier associated with the variable $x$ into the \verb|write| action. The recursive call, $G$, takes the same file and $x - 1$ as parameters. Because the exact value of $x$ is unknown, the qualifier has the role of representing all the possible values that can be obtained, meaning all values that satisfy the predicate.
\end{enumerate}

As expected, the application of the recursive function $G$ takes place at the end of the history, since the $\mu$ primitive defines a recursive function but does not apply it. The appearance of the identifier $G$ with its associated parameters, upon which the function is invoked, marks the starting point for the unfolding of the recursion.

The Coverage Type associated with the previous code fragment in Figure \ref{fig:exmp1} is as follows:
\begin{equation}
    \begin{gathered}
        \under{v: int\;list \;|\; \forall u, mem(v, u) \Longrightarrow 0 \leq u \leq 10 \;\land\; dec\_sorted(v) \;\land\; \\
        \neg(\exists w, \forall u, mem(v, u) \Longleftrightarrow mem(w, u) \;\land\; shorter(w, v))}
    \end{gathered}
\end{equation}

We now comment on the Coverage Type above. Following \cite{coverage}, type qualifiers are enriched with \emph{method predicates} such as $mem$, $dec\_sorted$ and $shorter$, namely certain \emph{uninterpreted functions} of which we only know the name and arity. The predicate $mem(v, u)$ returns \verb|true| if the element $u$ belongs to the data structure $v$; $dec\_sorted$ checks whether the structure is ordered in the descending direction; whereas $shorter$ returns \verb|true| if the length in terms of elements of the first structure is less than the second.

Intuitively, all values that satisfy the predicate are returned by the program. In fact, it is clear that any list meeting the constraints outlined below will certainly be generated in at least one execution of the code in Figure \ref{fig:exmp1}:
\begin{enumerate}
    \item All items in the list are between 0 and 10;
    \item The list must be ordered in descending order;
    \item There must not exist a list $w$ that contains exactly the same elements as the list in question but is shorter in length. If such a list exists, it would mean that the current list contains at least one duplicate. However, as previously mentioned, the code does not generate lists with repeated elements.
\end{enumerate}

A main issue with Coverage Types, however, is their under-approximation flavour. This allows certain paths that are actually reachable to be dropped, precisely because they are defined as an under-approximation of the values returned by the program.

For instance, the code in Figure \ref{fig:exmp1} can also be associated with the Coverage Types:
\begin{equation}
    \begin{gathered}
        \under{v: int\;list \;|\; \forall u, mem(v, u) \Longrightarrow u = 5 \;\lor\; v = 9 \;\land\; dec\_sorted(v) \;\land\; \\
        \neg(\exists w, \forall u, mem(v, u) \Longleftrightarrow mem(w, u) \;\land\; shorter(w, v))}
    \end{gathered}
\end{equation}
\begin{equation}
    \begin{gathered}
        \under{v: int\;list \;|\; \forall u, mem(v, u) \Longrightarrow 0 \leq u \leq 7 \;\land\; dec\_sorted(v) \;\land\; \\
        \neg(\exists w, \forall u, mem(v, u) \Longleftrightarrow mem(w, u) \;\land\; shorter(w, v))}
    \end{gathered}
\end{equation}
Note that the lists satisfying the predicate of the first type above are only \verb|[]|, \verb|[5]|, \verb|[9]| and \verb|[9, 5]|. These lists are guaranteed to be generated in at least one execution of the code, and therefore, the type provides a correct under-approximation of the program's behaviour. Indeed, according to the structural properties of Coverage Types \cite{coverage}, the following subtype relation applies:
\begin{equation}
    \begin{gathered}
        \under{v: int\;list \;|\; \forall u, mem(v, u) \Longrightarrow 0 \leq u \leq 10 \;\land\; dec\_sorted(v) \;\land\; \\
        \neg(\exists w, \forall u, mem(v, u) \Longleftrightarrow mem(w, u) \;\land\; shorter(w, v))} \\
        <: \\
        \under{v: int\;list \;|\; \forall u, mem(v, u) \Longrightarrow u = 5 \;\lor\; v = 9 \;\land\; dec\_sorted(v) \;\land\; \\
        \neg(\exists w, \forall u, mem(v, u) \Longleftrightarrow mem(w, u) \;\land\; shorter(w, v))}
    \end{gathered}
\end{equation}
However, if we delete paths that are nevertheless valid, the linkage with the over-approximation provided by the effect, is no longer valid! This is why we introduce Extended Coverage Types which essentially adds an additional qualifier where no path has been dropped, and this qualifier will be used to bind variables within History Expressions.

\section{History Expressions}\label{sec:history}
In this section, we introduce History Expressions and outline their semantics.

\begin{definition}[History Expression]
    A \textbf{\emph{History Expression}}, denoted $H$, is defined by the following grammar:
    \begin{equation}
        \begin{split}
            H, H' ::=& \;Val_H \;|\; Exp_H \;|\; H \cdot H' \;|\; H + H' \\
            Exp_{H} ::=& \;\alpha(\overline{b{:}\phi}) \;|\; F(\overline{a{:}(b{:}\phi})) \;|\; call(\phi; \;\overline{a{:}(b{:}\psi})) \;|\; \mu F(\overline{a{:}(b{:}\phi}))(H_F) \\
            Val_{H} ::=& \;\epsilon \;|\; \alpha(\overline{v}) \;|\; F(\overline{v}) \;|\; new_r(X) \;|\; get(F)
        \end{split}
        \label{eq:hist_gramm}
    \end{equation}
\end{definition}

The notation $\alpha(\overline{b{:}\phi})$ represents the application of the action named $\alpha$ on a resource, operating over a set of values with base type $b$, specifically those that satisfy the predicate $\phi$. A similar notation is used for the application of an external function named $F$. In this case, each parameter (represented by the base type-qualifier pair) is linked to the corresponding argument in the function signature.

We first comment on expressions. The expression $call$ represents the operation of invoking external functions. However, this expression supports a form of call by property, meaning that the function being called is one that satisfies the predicate $\phi$. Lastly, the $\mu$ notation denotes the declaration of a recursive function, with $H_F$ representing its latent effect.

We move move to values. Values consists of the following:
    \begin{itemize}
    \item The empty history $\epsilon$, indicating the occurrence of no relevant action;
    \item The application of an external function on a set of values; 
    \item The creation of a new resource of type $r$ with the identifier $X$ bound to it; 
    \item The value $get(F)$, which is a constant bound to the API described by the name $F$.
\end{itemize}

Intuitively, the inclusion of type qualifiers in actions and API calls allows for a more refined over-approximation. This topic will be explored further when we present the type system in Section \ref{sec:type_system}.

We can introduce the substitution operator $[\cdot/\cdot]$ on identifiers in History Expressions. This will be mainly useful in local \emph{non} recursive functions, as in these the creation of resources will be allowed, but when the latent effect of a function becomes active it will be necessary to change all the identifiers of the resources created within them, because if there were another call to that same function, the same identifiers would be reused!

\newpage

The substitution $[\cdot/\cdot]$ is then defined as follows:

\begin{equation}
    \begin{split}
        (H \cdot H')[Y/X] =&\; H[Y/X] \cdot H'[Y/X] \\
        (H + H')[Y/X] =&\; H[Y/X] + H'[Y/X] \\
        \alpha(\overline{b_i{:}\phi_i})[Y/X] = &\; \alpha(\overline{b_i{:}\phi_i[Y/X]}) \\
        call(\phi; \;\overline{a_i{:}(b_i{:}\psi_i)})[Y/X] = &\; call(\phi[Y/X]; \;\overline{a_i{:}(b_i{:}\psi_i[Y/X])}) \\
        (\mu F(\overline{a_i{:}({b_i{:}\phi_i})})(H_F) =& \;\mu(F[Y/X])(\overline{a_i{:}({b_i{:}\phi_i[Y/X]})})(H_F[Y/X]) \\
        F(\overline{a_i{:}(b_i{:}\phi_i}))[Y/X] =& \; F[Y/X](\overline{a_i{:}(b_i{:}\phi_i[Y/X]})) \\
        \epsilon[Y/X] =& \;\epsilon \\
        new_r(Z)[Y/X] = &\; new_r(Z[Y/X]) \\
        \alpha(\overline{v})[Y/X] =& \;\alpha(\overline{v[Y/X]}) \\
        F(\overline{v})[Y/X] =& \;F[Y/X](\overline{v[Y/X]}) \\
        get(F)[Y/X] =& \;get(F[Y/X])
    \end{split}
    \label{eq:subst}
\end{equation}

The $\mu$ construct serves as a variable binder for recursive function identifiers, hence this requires defining the identity of History Expressions up-to alpha conversion.

\subsection{$\alpha$-conversion History Expression}\label{subsec:alpha}

We can define the following two rules for the $\alpha$-conversion of History Expressions:
\begin{enumerate}
    \item $H' \cdot \mu F(\overline{a{:}(b{:}\phi)})(H_F) \cdot H'' \;=_{\alpha}\; H' \cdot \mu G(\overline{a{:}(b{:}\phi)})(H_F[G/F]) \cdot H''[G/F]$ \\
    $\text{if} \quad H' \cdot \mu F(\overline{a{:}(b{:}\phi)})(H_F) \cdot H''$ is the longest History Expression containing $\mu F$ and \\
    $G \notin \textsc{Bound}(H') \cup \textsc{Bound}(H_F) \cup \textsc{Bound}(H'')$ where $\textsc{Bound}(H)$ represents all the identifiers used within $H$.
    \item $H' \cdot \mu F(\overline{a{:}(b{:}\phi)})(H_{rec}) \cdot H'' \cdot H''' \;=_{\alpha}\;  H' \cdot H'' \cdot \mu F(\overline{a{:}(b{:}\phi)})(H_{rec}) \cdot H'''$ \\
    $\text{if} \quad \forall \;call(\psi; \;\overline{c{:}(b{:}\theta)}) \in H'', \; \neg \psi[v \mapsto F] \;\land\; \forall\; G(\overline{d{:}(b{:}{\sigma})}) \in H'', \; G \neq F$
\end{enumerate}

Rule (1) allows the identifier of a recursive function to be changed if it has not been used globally in the history in which it is contained.
Rule (2), on the other hand, permits a recursive function declaration to be shifted to the right, provided that the subsequent history ($H''$) contains no API type calls to the function.

\subsection{Equality Relation}
We introduce an equational theory of History Expressions. The equality relation $=$ over History Expressions is the least congruence including $\alpha$-conversion such that:

\begin{figure}[ht]
    \begin{equation*}
        \begin{gathered}
            \epsilon \cdot H = H = H \cdot \epsilon \quad \quad H = H + H \quad \quad H + H' = H' + H \\
            H \cdot (H' \cdot H'') = (H \cdot H') \cdot H'' \quad \quad H + (H' + H'') = (H + H') + H'' \\
            H \cdot (H' + H'') = (H \cdot H') + (H \cdot H'') \quad \quad (H' + H'') \cdot H = (H' \cdot H) + (H'' \cdot H)
        \end{gathered}
    \end{equation*}
    \caption{Equality Axioms of History Expressions}
    \label{fig:hist_eq}
\end{figure}

\subsection{Denotation}

We now focus on the denotation of History Expressions. We start by noting that the management of resource creations within recursive functions and APIs will not be taken into account in the definition of denotation. These constraints will be introduced in Section \ref{sec:type_system}, where the type system is presented, but we will discuss the motivations at the end of this paragraph.

We first introduce an auxiliary function that allows a variable to be bound within the qualifiers of the history. The function, \bind, is inductively defined on the structure of $H$, and its definition is shown in Figure \ref{eq:bind}.

With $\Phi(\tau_x)$ we indicate that only the qualifier of the passed type is taken. In the definition of this function, we note that the discourse about type independence made earlier applies: when we bind a variable in the history, we do not care if the type associated with it is an under- or over-approximation, but only the qualifier is taken into account.

\begin{figure}[H]
    \begin{equation*}
        \begin{split}
            \bind(x{:}\tau_x, \;H \cdot H') =& \; \bind(x{:}\tau_x, \;H) \cdot \bind(x{:}\tau_x, \;H') \\
            \bind(x{:}\tau_x, \;H + H') =& \; \bind(x{:}\tau_x, \;H) + \bind(x{:}\tau_x, \;H') \\
            \bind(x{:}\tau_x, \;\alpha(\overline{b_i{:}\phi_i})) =& \; \alpha(\overline{b_i{:}\exists x. \Phi(\tau_x)[v \mapsto x] \;\land\; \phi_i}) \\
            \bind(x{:}\tau_x, \;call(\phi; \;\overline{a_i{:}(b_i{:}\psi_i}))) =& \; call(\exists x. \Phi(\tau_x) \;\land\; \\
            & \phi; \;\overline{a_i{:}(b_i{:}\exists x. \Phi(\tau_x)[v \mapsto x] \;\land\; \psi_i})) \\
            \bind(x{:}\tau_x, \;\mu F(\overline{a_i{:}({b_i{:}\phi_i})})(H_F)) =& \;\mu F(\overline{a_i{:}({b_i{:}\exists x. \Phi(\tau_x)[v \mapsto x] \;\land\; \phi_i})}) \\
            & (\bind(x{:}\tau_x, \;H_F)) \\
            \bind(x{:}\tau_x, \;F(\overline{a_i{:}(b_i{:}\phi_i)})) =& \;F(\overline{a_i{:}(b_i{:}\exists x.\phi_i)}) \\
            \bind(x{:}\tau_x, \;\epsilon) =& \;\epsilon \\
            \bind(x{:}\tau_x, \;new_r(X)) =& \; new_r(X) \\
            \bind(x{:}\tau_x, \;\alpha(\overline{v})) =& \;\alpha(\overline{v}) \\
            \bind(x{:}\tau_x, \;F(\overline{v_i})) =& \;F(\overline{v_i}) \\
            \bind(x{:}\tau_x, \;get(F)) =& \;get(F)
        \end{split}
        \end{equation*}
    \caption{Definition of function \bind\ which aims to bind a variable within the qualifiers in the history.}
    \label{eq:bind}
\end{figure}

Furthermore, it is possible to execute the \bind\ function on several types by introducing the following equality:

\begin{equation}
    \bind(x_1{:}\tau_1, \dots, x_n{:}\tau_n, \;H) \doteq \bind(x_1{:}\tau_1, \dots, x_{n-1}{:}\tau_{n-1}, \;\bind(x_n{:}\tau_n, \;H))
\end{equation}

Below we are going to present a number of definitions and notations that will be much used throughout the presentation of the paper.

\begin{notation}[$Rid$]
    With the set $Rid$, we are going to denote the set of all possible resource identifiers, both local and remote.
\end{notation}

\begin{definition}[Resource Context]
    The \emph{resource context} is defined as follows:
    \begin{equation*}
        \Delta \subset Rid \cup (Rid \mapsto \tau)
    \end{equation*}
    This set will play two roles:
    \begin{enumerate*}[label=(\roman*)]
        \item makes it possible to keep track of resource identifiers already in use, that is, those that are already associated with resources that have already been created, so that when a new resource is created (through the \verb|new|), it has a new identifier;
        \item the second function is to contain the association between the resource identifiers used to represent the API, and the signature, that is, the $\tau$ type of the external function.
    \end{enumerate*}
    Obviously, in normal contexts, at first $\Delta$ is populated only by the mapping $Rid \mapsto \tau$ for each API that will be available within the program.
\end{definition}

\begin{notation}[$\eta$]
    With the symbol $\eta$, we shall indicate a history that is no longer reducible, which can only be composed of concatenations of the values $Val_H$ presented in \ref{eq:hist_gramm}.
\end{notation}

\begin{notation}[$\uparrow$]
    The notation $\uparrow$, associated with a History Expression $H$, allows each $new_r(X)$ present in $H$ to be assigned the symbol $\uparrow$: this is necessary because when calculating the denotation of $H$ we will have to take into account, in a set, the identifiers of the resources on which the \verb|new| was actually called, and since this in our language is a terminal value, we use this special symbol to indicate that we must first add the identifier to the set and then stop with the reduction.
    For example, if $H$ were equal to:
    \begin{equation}
        (new_r(X) + new_r(Y)) \cdot \alpha(r: v = X \;\lor\; v = Y)
    \end{equation}
    Applying $\uparrow$ to $H$ would return:
    \begin{equation}
        (new_r(X)^{\uparrow} + new_r(Y)^{\uparrow}) \cdot \alpha(r: v = X \;\lor\; v = Y)
    \end{equation}
    In this way, having to choose between the creation of $X$ and that of $Y$, assuming we choose the former, the identifier $X$ will be inserted in our set, and then during the reduction of the expression $\alpha$ we will know that only $X$ and not $Y$ will have to be taken into account, thus generating only the value $\alpha(X)$.

    The symbol $\uparrow$ is also associated with any expression indicating a call of an external function, of the form $F(\overline{a_i{:}(b_i{:}\phi_i}))$: in this case, however, the semantics of the symbol will indicate that the latent effect of $F$ has not yet been inserted into the main history. 
    
    The application of the $\uparrow$ symbol does not take place within the effects of the recursive functions linked to the $\mu$ construct: $\uparrow$ will be applied each time the effect becomes active.
\end{notation}

\begin{definition}[History Expression's Denotation without a Context]\label{def:history_den}
    The denotation $\llbracket \cdot \rrbracket$ of an \emph{History Expression} $H$, in a context of resources $\Delta$, is the set of all histories $\eta$ such that $H^{\uparrow}$ is reduced (in one or more steps) to $\eta$, and $\eta$ is terminal, i.e. it is no longer reducible.
    \begin{equation}
        \llbracket H \rrbracket = \{ \eta \;|\; (\varnothing, \;\varnothing, \;H^{\uparrow}) \rightarrow^* (\Omega, \;\Upsilon, \;\eta) \;\land\; (\Omega, \;\Upsilon, \;\eta) \nrightarrow \}
    \end{equation}
\end{definition}

We note how the reduction consists of a triple of the form:
\begin{equation}
    (\Omega, \;\Upsilon, \;H)
\end{equation}
Where $\Omega$ will denote the set in which the identifiers of the resources actually created during the calculation of a single history $\eta$ will be held. While with $\Upsilon$ we shall denote a list of substitutions specified by the notation $\{\cdot/\cdot\}$ that concern only the identifiers of external functions that actually represent recursive functions. This list will be the construct that will allow true recursion, as it will allow each identifier representing a recursive call to be substituted for the behaviour of the recursive function itself.

The relation $\rightarrow$ for $H$ is defined in Figure \ref{fig:denot_history}. Rule No. 6 makes it possible to add to the set $\Omega$ the identifier of a resource that has just been created and which may be used in actions following its creation. The two rules below constraint that in a concatenation, the whole of the term on the left must first be reduced and then the term on the right. This is necessary because histories have a property of temporal ordering between actions and, referring back to the previous rule concerning the \verb|new|, in order to be able to use a resource it will be necessary for it to have first been created and then added to the $\Omega$ set.

\begin{figure}[H]
    \begin{equation*}
        \begin{gathered}
            (\Omega, \;\Upsilon, \;\epsilon) \nrightarrow \quad \quad 
            (\Omega, \;\Upsilon, \;\alpha(\overline{v})) \nrightarrow \quad \quad
            (\Omega, \;\Upsilon, \;get(F)) \nrightarrow \quad \quad
            (\Omega, \;\Upsilon, \;F(\overline{v})) \nrightarrow \\ \\
            (\Omega, \;\Upsilon, \;new_r(X)) \nrightarrow \quad \quad
            \frac{
                (\Omega, \;\Upsilon, \;\eta) \nrightarrow \quad (\Omega, \;\Upsilon, \;\eta') \nrightarrow
            }{
                (\Omega, \;\Upsilon, \;\eta \cdot \eta') \nrightarrow
            } \\ \\
            (\Omega, \;\Upsilon, \;new_r(X)^{\uparrow}) \rightarrow (\Omega \cup \{X\}, \;\Upsilon, \;new_r(X)) \\ \\
            \frac{
                (\Omega, \;\Upsilon, \;H) \rightarrow (\Omega', \;\Upsilon', \;H'')
            }{
                (\Omega, \;\Upsilon, \;H \cdot H') \rightarrow (\Omega', \;\Upsilon', \;H'' \cdot H')
            } \quad \quad
            \frac{
                (\Omega, \;\Upsilon, \;\eta) \nrightarrow \quad (\Omega, \;\Upsilon, \;H) \rightarrow (\Omega', \;\Upsilon', \;H')
            }{
                (\Omega, \;\Upsilon, \;\eta \cdot H) \rightarrow (\Omega', \;\Upsilon', \;\eta \cdot H')
            } \\ \\
            (\Omega, \;\Upsilon, \;H + H') \rightarrow (\Omega, \;\Upsilon, \;H) \quad \quad
            (\Omega, \;\Upsilon, \;H + H') \rightarrow (\Omega, \;\Upsilon, \;H') \\ \\
            \frac{
                \begin{gathered}
                    \forall i, Val_i = \{u: b_i \;|\; \phi_i[v \mapsto u] \;\land\; (u \in Rid \Longrightarrow u \in \Omega \;\lor\; \Delta(u)\downarrow)\} \\
                    H = \begin{cases}
                        \underset{\overline{u} \in \underset{i}{\prod} Val_i}{\bigoplus} \alpha(\overline{u}) & \text{if}\;\forall i, Val_i \neq \varnothing \\
                        \epsilon & \text{otherwise}
                    \end{cases}
                \end{gathered}
            }{
                (\Omega, \;\Upsilon, \;\alpha(\overline{b_i{:}\phi_i})) \rightarrow (\Omega, \;\Upsilon, \;H)
            } \\ \\
            \frac{
                Api = \{ F \;|\; \phi[v \mapsto F] \;\land\; \Delta(F)\downarrow \} \quad H = \begin{cases}
                    (\underset{F \in Api}{\bigoplus} F(\overline{a_i{:}(b_i{:}\psi_i}))^{\uparrow}) & \text{if}\;Api \neq \varnothing \\
                    \epsilon & \text{otherwise}
                \end{cases}
            }{
                (\Omega, \;\Upsilon, \;call(\phi; \;\overline{a_i{:}(b_i{:}\psi_i))}) \rightarrow (\Omega, \;\Upsilon, \;H)
            } \\ \\
            \frac{
                \Delta(F) = \overline{\tau} \rightarrow (\tau_F, \; H_F) \quad H_F^{\star} = \bind(\overline{a_i: \under{v: b_i \;|\; \psi_i}}, \;H_F)
            }{
                (\Omega, \;\Upsilon, \;F(\overline{a_i{:}(b_i{:}\psi_i}))^{\uparrow}) \rightarrow (\Omega, \;\Upsilon, \;(F\Upsilon)(\overline{a_i{:}(b_i{:}\psi_i})) \cdot {H_F^{\star}}^{\uparrow})
            } \\ \\
            \frac{
                \begin{gathered}
                    \forall i, Val_i = \{u: b_i \;|\; \phi_i[v \mapsto u] \;\land\; (u \in Rid \Longrightarrow u \in \Omega \;\lor\; \Delta(u)\downarrow)\} \\
                    H = \begin{cases}
                        \underset{\overline{u} \in \underset{i}{\prod} Val_i}{\bigoplus} \alpha(\overline{u}) & \text{if}\;\forall i, Val_i \neq \varnothing \\
                        \epsilon & \text{otherwise}
                    \end{cases}             
                \end{gathered}
            }{
                (\Omega, \;\Upsilon, \;F(\overline{a_i{:}(b_i{:}\phi_i)})) \rightarrow (\Omega, \;\Upsilon, \;H)
            } \\ \\
            (\Omega, \;\Upsilon, \;H(\overline{a_i{:}(b_i{:}\psi_i)}) \rightarrow (\Omega, \;\Upsilon, \;{\bind(\overline{a_i{:}(b_i{:}\psi_i)}, \;H)}^{\uparrow}) \\ \\
            (\Omega, \;\Upsilon, \;\mu F(\overline{a_i{:}({b_i{:}\phi_i})})(H_F)) \rightarrow (\Omega, \;\Upsilon\{H_F/F\}, \;\epsilon)
        \end{gathered}
    \end{equation*}
    \caption{Reduction Relation for History Expressions}
    \label{fig:denot_history}
\end{figure}

After the two rules for nondeterministic choice, we present the reduction rule for invoking an action on \textbf{all} those values satisfying each predicate $\phi_i$. We note how not only are put into a set, for each argument $i$ of the action, all those values of type $b_i$ that satisfy $\phi_i$; but in the case where the value under consideration is an identifier of a resource - and thus $b_i$ will be a resource type - this either must be present in $\Omega$ - it was created by means of a \verb|new| - or in the case where the identifier represents an external function, this must be defined in $\Delta$. Finally, if each parameter $i$ has associated at least one value on which to perform the action (so each $Val_i$ is different from the empty set), we create a vector $\overline{u}$ for each possible combination between the values of each parameter $i$ and place them in nondeterministic choice. Otherwise, if there is at least one parameter that does not have any of the admissible values associated with it, it means that the action within the programme can never occur (this is guaranteed by the type qualifiers, which represent a correct over-approximation for each parameter). The same mechanism was used for APIs call reduction as can be seen in rule No. 15.

The next rule allows the calls of a set of external functions invoked on the same parameters to be unpacked into as many non-deterministic choices as possible. The APIs taken into consideration, in addition to having to satisfy the predicate $\phi$, must also be defined in the context of the resources $\Delta$.

Below, we find the rule that allows the latent effect of an external function to be active: the effect is then taken from $\Delta$ and all the parameters passed to the call are bound to it. In the case where the identifier $F$ is in fact associated with a recursive function, one (and only one) of the substitutions in $\Gamma$ will be successful (and as we will see in the type system rules in Section \ref{sec:type_system} the latent effect associated with a recursive function in $\Delta$ will be $\epsilon$).

The last two rules handle recursion. The term to be reduced in the first of the two is a direct consequence of the reduction of external functions with no active latent effect: thus $H$ will be the latent effect of a recursive function, inserted thanks to one of the reductions in $\Gamma$, and will be consequently bound to the parameters on which the recursive function was called. The last rule simply adds the substitution between the latent effect and the identifier of the recursive function declared through the $\mu$ construct to the $\Gamma$ list.

\begin{definition}[History Expression's Denotation under a Context]
    The denotation of a history $H$, taking into account already having a context $\Gamma$ to bind the non-quantified variables in the action qualifiers, is inductively defined on $\Gamma$ as follows:
    \begin{equation}
        \begin{gathered}
            {\llbracket H \rrbracket}_{\varnothing} = \llbracket H \rrbracket \\
            {\llbracket H \rrbracket}_{x: \tau_x,\Gamma} = {\llbracket \bind(H, \;x{:}\tau_x) \rrbracket}_{\Gamma}
        \end{gathered}
    \end{equation}
\end{definition}

That is, for the empty context, the denotation is the same as the context-free one defined above since the variables will all be bound; whereas, for the inductive step, if the context consists of an association $x:\tau_x$ and another set of associations contained in $\Gamma$, we evaluate the denotation in $\Gamma$, but of the history in which all the qualifiers in the actions are bound existentially to a variable $x$ of type $\tau_x$.

\begin{theorem}[Correctness of Denotation to Type Qualifiers]
    Given a History Expression $H$ and an interpretation $\mathcal{I}$ that maps each qualifier in $H$ with the value assigned by the interpretation itself, it holds that:
    \begin{equation}
        \forall \eta, \;\Gamma, \;H. \;\eta \in \llbracket H \rrbracket_{\Gamma} \Longrightarrow \exists \mathcal{I}. \;\mathcal{I} \models \Phi(H) \;\land\; \eta \in \llbracket H(\mathcal{I}) \rrbracket_{\Gamma}
    \end{equation}
    This property tells us that for each terminal history $\eta$ that belongs to a History Expression $H$, there exists at least one interpretation $\mathcal{I}$ that satisfies the qualifiers taken into account by $H$ such that, by applying the substitution within $H$, $\eta$ belongs to the denotation of $H$ with inside instead of type qualifiers the values, surely correct since they belong to a valid interpretation. Consequently, the values in $\eta$ will also be correct with respect to the type qualifiers as this belongs to a denotation (the second) which will not use the reduction rules on the calculation of values, this because there are directly within $H$. 
    
    \paragraph{Interpretation} Since the qualifiers have no variable name or explicit identifier associated with them, we will use a number as an identifier in the definition of the interpretations, which will represent the index of the position of the qualifier itself within the History Expression with respect to the others, starting from left to right.
    
    Applying an interpretation $\mathcal{I} = \{\overline{i = v_i}\}$ to a History Expression $H$ replaces each qualifier indexed by $i$ with the expression $v = v_i$. This actually contradicts what was said earlier about the rules on calculating qualifiers not being used, because in the end $v = v_i$ is still a qualifier. In fact in reality these rules will be used, but the type qualifier becomes trivial, the only value that satisfies $v = v_i$ is $v_i$ itself!
    
    Finally, it should be pointed out that the only qualifiers that will not require mapping are those encapsulated within the $\mu$ recursion construct, since as already mentioned this is not used for the semantic purposes of History Expressions, but only to verify well-formedness in the type system rules as we shall see later.
\end{theorem}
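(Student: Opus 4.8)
The plan is to prove the statement by peeling off the context first and then inducting on the reduction that witnesses membership in the denotation. By the inductive definition of $\llbracket H \rrbracket_\Gamma$ we have $\llbracket H \rrbracket_{x:\tau_x,\Gamma} = \llbracket \bind(x{:}\tau_x, H) \rrbracket_\Gamma$, so an induction on the length of $\Gamma$ reduces everything to the case $\Gamma = \varnothing$, at the cost of replacing $H$ by $\bind(\overline{x_i{:}\tau_i}, H)$. Since \bind\ only strengthens each action/call qualifier $\phi_i$ to $\exists x.\, \Phi(\tau_x)[v\mapsto x] \land \phi_i$ and leaves the already-concrete $Val_H$ forms untouched, the set of qualifier positions of $H$ is preserved, so proving the claim for the bound history yields the claim for $H$ in context $\Gamma$. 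From here assume $\Gamma = \varnothing$, so $\eta \in \llbracket H \rrbracket$ unfolds to a reduction $(\varnothing,\varnothing,H^{\uparrow}) \rightarrow^* (\Omega,\Upsilon,\eta)$ with $(\Omega,\Upsilon,\eta)\nrightarrow$.

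The core observation about the reduction relation of Figure~\ref{fig:denot_history} is that the only rules consulting a qualifier $\phi_i$ are those reducing $\alpha(\overline{b_i{:}\phi_i})$, $F(\overline{a_i{:}(b_i{:}\phi_i)})$ and $call(\phi;\ldots)$, and in each of them the resulting concrete form is obtained by picking $\overline{u} \in \prod_i Val_i$ with $Val_i = \{u : b_i \mid \phi_i[v\mapsto u] \land (u \in Rid \Rightarrow u \in \Omega \lor \Delta(u)\downarrow)\}$, after which the $+$-rules select one such $\overline{u}$. So I would walk the reduction sequence and, for each qualifier position $i$ of $H$ that is not inside a $\mu$-body, record in $\mathcal{I}$ the value $v_i$ actually chosen at the (unique, since concatenation forces a left-to-right order and each position is eliminated exactly once along the path) step where position $i$ is eliminated; for the recursion-internal qualifiers the accompanying note tells us no mapping is needed. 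By construction every recorded $v_i$ satisfies $\phi_i$ together with the resource-identifier side condition, which is precisely $\mathcal{I} \models \Phi(H)$.

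It remains to show $\eta \in \llbracket H(\mathcal{I}) \rrbracket$, i.e. that the reduction can be replayed after substituting $v = v_i$ for the $i$-th qualifier. Re-running the reduction of $H(\mathcal{I})^{\uparrow}$, at the step eliminating position $i$ the set $Val_i$ becomes $\{u : b_i \mid u = v_i \land (u \in Rid \Rightarrow u \in \Omega \lor \Delta(u)\downarrow)\}$; since the same $\Omega$ occurs at that point of the original reduction and $v_i$ was an admissible choice there, $Val_i = \{v_i\}$, nonempty, so the $+$-choice collapses to the single branch matching the original path. For every other nondeterministic decision — branch of a $+$ not originating from a substituted position, choice of API in a $call$, unfolding of a $\mu$ recorded in $\Upsilon$, insertion into $\Omega$ by a $new_r(X)^{\uparrow}$ step — I would make the same choice as in the original reduction; the side conditions of those rules depend only on $\Omega$, $\Upsilon$ and $\Delta$, which evolve identically. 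Hence $(\varnothing,\varnothing,H(\mathcal{I})^{\uparrow}) \rightarrow^* (\Omega,\Upsilon,\eta)$, the final configuration is terminal by the same clauses of Figure~\ref{fig:denot_history} as before, and $\eta \in \llbracket H(\mathcal{I}) \rrbracket$.

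The main obstacle is the interaction with recursion and with spliced-in latent effects. A single syntactic qualifier position inside a $\mu$-body $H_F$ is re-instantiated — possibly to different values — each time the body is unfolded, so it genuinely cannot be pinned to one value in $\mathcal{I}$; the way out is that we only need $\eta \in \llbracket H(\mathcal{I})\rrbracket$, not determinism, so the $\mu$-internal qualifiers are left untouched and the matching unfoldings of the original run are replayed verbatim. Relatedly, the $F(\ldots)^{\uparrow}$ rule splices $\bind(\overline{a_i{:}\under{v{:}b_i\mid\psi_i}}, H_F)$ into the history, introducing qualifiers not syntactically in $H$; here I would either restrict the index set of $\mathcal{I}$ to the qualifiers of $H$ proper and argue as above that the spliced qualifiers stay handled inside the semantics (their $Val_i$'s remain nonempty along the replayed path because they did in the original), or, for a uniform account, track positions in the fully expanded history and note the induction still terminates since each non-recursive API's latent effect is expanded finitely. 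Some care is also needed that applying $\mathcal{I}$ commutes with the existential wrappers introduced by \bind\ and with the identifier substitutions $[Y/X]$ and $F\Upsilon$; this is routine once one fixes which occurrence carries which index, but getting that bookkeeping right across the $\mu$ rule is the delicate part of the argument.
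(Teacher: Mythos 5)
Your proposal follows essentially the same route as the paper's proof: both build $\mathcal{I}$ from the values actually selected at each qualifier-eliminating step of the witnessing reduction and then argue that the reduction replays on $H(\mathcal{I})$; your handling of the context via \bind\ and of recursion is, if anything, more explicit than the paper's. The one point the paper covers that you gloss over is what to assign to qualifiers that are never eliminated along the chosen path (those in dropped $+$-branches, and the case $Val_i = \varnothing$ where the action collapses to $\epsilon$): the paper assigns an arbitrary satisfying value, falling back on a special symbol $\bot$ for contradictory qualifiers, so that $\mathcal{I} \models \Phi(H)$ still holds.
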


\begin{proof}
    Having used inference rules for the definition of denotation, the demonstration can proceed by induction:
    \begin{itemize}
        \item For the first five rules, the demonstration is immediate as we are already operating on values that are not altered. For the next rule (concatenation of two terminal histories) the same.
        \item The event of creation of a resource is not of interest to us, as it does not encapsulate any type qualifiers, but directly presents a value (the identifier of the new resource) that is preserved in the reduction, and is added to the set $\Omega$. This last piece of information is important for some of the following points.
        \item For the following two concatenation rules we simply use the inductive hypothesis on the reduction of the History Expressions $H$ into $H''$ and $H'$ respectively, since they are in the premises. We conclude by stating that the property on the qualifiers is respected.
        \item On the other hand, for non-deterministic choice rules, here we are simply dropping a history so within the chosen interpretation $\mathcal{I}$ for the qualifiers in it can be assigned any value, still respecting the type qualifier, but independent of the terminal history $\eta$.
        \item Let us move on to the first notable rule, which concerns the reduction of the action $\alpha$. We must first differentiate between the case in which the type of values is a resource and the case in which it is not. In the latter case, what we compute in the premises is precisely the set of \textbf{all} values satisfying each predicate, and then we place \textbf{any} of these values (actually a vector of values in the case where the action has more than one parameter, but the transition to this case is w.l.g.) in non-deterministic choice. Proceeding in the reduction we will then have to choose only one of these values $v_i$ (because of the rules on the operator $+$) and this will belong to the terminal history $\eta$ (encapsulated in the event $\alpha$). Within the interpretation $\mathcal{I}$ we can then choose as the value for the qualifier at the position of $\alpha$ just $v_i$ since it satisfies it. For resource types, when we go to compute the set of values, we simply add another condition to the qualifier $\phi$ (which allows us to have identifiers that actually exist or APIs that have been defined), only restricting the starting set.

        In the case where the value generated is $\epsilon$, it will mean that there will be at least one qualifier that represents a contradiction, so there can be no value that satisfies it: as the value for the interpretation $\mathcal{I}$ we will choose the special symbol $\bot$, which in the semantics of the interpretation allows us to skip checking the qualifier in question.
        
        \paragraph{Note} In practice, when we will integrate History Expressions with the Coverage Types type system, the variables can never have a type with a contradictory qualifier associated with them, as this would lead to a context not being in good form \cite{coverage}. On the merits, such a qualifier (e.g. trivially $v = 1 \;\land\; v = 2$) is possible to find during the typing phase (just think of the function application of a parameter that does not respect the type of the argument), but since it will represent the presence of a type-matching error, it will lead to an error by failing to infer a type. \\

        It is worth noting, as this last discussion makes us reflect on why in the theorem the implication does not also hold the other way around, i.e. that for every valid interpretation there is a terminal history that belongs to the denotation: this is not possible with resource types in mind, in fact one could take in $\mathcal{I}$ identifiers that satisfy the type qualifier, but which have not been defined or created.
        \item For the remaining rules using qualifiers, such as the one on the single API call, the demonstration is identical, as the premises are the same. For the multiple call of APIs the calculation is also similar, as done above we restrict the set of values that satisfy the qualifier, in this case with those that also represent defined APIs. Basically, as a value for each qualifier in $\mathcal{I}$ we always take the one chosen during the reduction.
        \item Finally, for the last three rules that do not perform calculations on the qualifiers, we note how these, regardless of the operations they perform, which are irrelevant for the choice of values in $\mathcal{I}$, always leave the type qualifiers unchanged by only dragging them into other constructs or sets, such as $\Omega$ for recursion, or by binding them into other histories through \bind.
    \end{itemize}
\end{proof}

\begin{note}[Creation of resources in recursive functions and APIs]
    As previously announced, there will be constraints on the latent effects of APIs and recursive functions for which the creation of any type of resource within them will not be permitted. This implementation choice was made for two reasons:
    \begin{enumerate}[label=(\roman*)]
        \item The former is mainly related to recursive functions, and the reason for this is that unfolding a recursive function that at each iteration \emph{could} create new resources is dangerous: we do not know how many times the recursive call could be invoked, statically the number of invocations is potentially infinite! Despite the fact that in the type system of $\lambda^{\textbf{TG}}$ there is a control on recursion that imposes a decreasing ordering relation between the first parameter of the current function and the one present in each recursive call \cite{coverage}, however, the resources we have at our disposal are always limited in quantity, so even if the recursive function after $n$ calls terminates, this $n$ could be sufficiently large to create, for example, a disproportionate number of files, or network connections, realising, deliberately or otherwise, a \emph{DOS} \cite{dos} attack. The same goes for APIs: due to the principle of compositionality, an API could call other APIs within it, perhaps using recursive functions, and encounter the same problem.
        Let us imagine that we have the following recursive History Expression $H$ associated with any function\footnote{In our type system this syntax is \textbf{illegal} both because in Section \ref{sec:type_system} we will see that a recursive function with this history cannot be typed, and also because if we were to calculate the denotation of this history this \textbf{not} would be a correct over-approximation of the function.}:
        \begin{equation}
            H \equiv \mu F(n{:}(v > 0))(new_{file}(X) \cdot F(n{:}(v = n - 1)))
        \end{equation}
        Although, as already mentioned, the rules guarantee that the recursion is finite - in fact $n$ descends by $1$ at each iteration and when it is less than or equal to $0$ the recursion stops - from the point of view of resources and their limited availability this represents a problem. Consider a potential call of $F$ with $n$ equal to $10,000$: although this is feasible, it would lead to the creation of $10,000$ files, which could cause problems on the memory device!
        \item The second reason, is related to a \emph{ROP} (Return Oriented Programming) \cite{rop} attack, which a potential attacker could carry out on our code. Should this be able to find an entry point in the program to inject and execute malicious code, such as through \emph{buffer overflow} \cite{buffer}, by checking the stack it can manage and change the address of the return function, subsequently creating a fairly long chain of function calls, in which for instance it always returns to the same function, which may be the one that creates one or more resources. In spite of this, in order not to limit the expressiveness of the language too much, we will still allow the creation of resources within local non-recursive functions, while for recursive functions, or for API calls, instead of creating them directly within them, it will be possible to use the \verb|new| before their invocation and pass the desired resources as parameters.
    \end{enumerate}
\end{note}

\begin{proof}[Correctness of $\alpha$-conversion rules] After having introduced the denotation of History Expressions, we can demonstrate the two rules concerning $\alpha$-conversion presented above:
    \begin{itemize}
        \item For rule (1), it is necessary to make the assumption that the identifiers associated with the recursive functions \textbf{not} can be used explicitly within programmes, but must only be used by the compiler. This constraint will actually be present in our language and will be formally defined in the well-formedness rules of the type system in Section \ref{sec:type_system}. Thus, it is not possible for these identifiers to be present within type qualifiers, except in the qualifier of the expression $call$ and consequently also as an identifier in the single API call $F(\overline{a_i{:}(b_i{:}\phi_i)})$.
        The first part of the History Expression, $H'$, is identical. The second part, the recursive construct, although the identifier is changed from $F$ to $G$, this will be reduced to $\epsilon$. But there is the side effect that adds the substitution $\{H_F/F\}$ to the list $\Omega$. So according to the denotation in Figure \ref{fig:denot_history}, if $F$ in $H''$ is a recursive function, then the substitution in $\Omega$ will be applied by replacing the identifier $F$ with the History Expressions associated with the body of the recursive function, and in any terminal history $\eta$ it will not be possible to find any reference to $F$ in the values, due to the assumption made at the beginning. Consequently it is possible to change the identifier $F$ in all occurrences in $H''$ - and in the body of $H_F$ since it can become active in $H''$ through substitutions - to another $G$ such that it is globally fresh. For $H'$, however, if it is correct, there can be no reference to $F$.
        \item For rule (2), the demonstration is simple. Since the recursive construct is reduced to $\epsilon$, we will have that:
        \begin{equation}
            H' \cdot \epsilon \cdot H'' \cdot H''' = H' \cdot H'' \cdot \epsilon \cdot H'''
        \end{equation}
        According to the rules of equality in Figure \ref{fig:hist_eq}. The only denotation that might change, however, is only that of H‘’ since in its evaluation it will now no longer have in $\Omega$ the recursive function identified by $F$. But if there are no calls to $F$ in it, this substitution will never be used.
    \end{itemize}
\end{proof}

\section{Policies}\label{sec:policies}

In this section, we explore different approaches to defining policies and introduce two widely applicable relationships.

\begin{definition}[Ordering between events]
    The relation $<_{\eta}$ defines an ordering between two events in the terminal history $\eta$, The relation is defined as follows:
    \begin{equation}
        \alpha(X) <_{\eta} \beta(Y) \Longleftrightarrow \exists \; \eta_1, \eta_2, \eta_3. \; \eta = \eta_1 \cdot \alpha(X) \cdot \eta_2 \cdot \beta(Y) \cdot \eta_3
    \end{equation}
\end{definition}

\begin{definition}[Ownership of an event]
    The ownership relation of an event in a terminal history $\eta$ is defined by:

    \begin{equation}
        \alpha(X) \in \eta \Longleftrightarrow \exists \; \eta_1, \eta_2. \; \eta = \eta_1 \cdot \alpha(X) \cdot \eta_2
    \end{equation}
\end{definition}

\begin{example}
    For example, the policy, discussed at the beginning of the introduction on reading and writing files, is defined by making use of the relationships presented above:
    
    \begin{equation}
        \begin{gathered}
            \forall \; \eta \in \llbracket H \rrbracket .\; \forall \; read(X) \in \eta .\; \exists \; open(X) . \; open(X) <_{\eta} read(X) \;\land\; \\ 
            \nexists \; close(X) \in \eta .\; \; open(X) <_{\eta} close(X) <_{\eta} read(X)
        \end{gathered}
    \end{equation}

  One can use the same pattern to represent the same policy but on write operations (\verb|write|).
\end{example}

\section{Language \& May-Types with Must-Types}\label{sec:tuple_type}
In this section, we introduce our programming language, a functional language, along with its type system, which is based on Coverage Types. Figure \ref{fig:language} illustrates the syntax of the language $\lambda^{\textbf{TG}}$ originally introduced in \cite{coverage}, which has been extended incorporating enhancements in both terms and values, as well as in types.

\begin{figure}[ht]
    \centering
    \begin{tabularx}{\textwidth}{rcX}
        \textbf{Variables} &  & $x, f, u, \dots$ \\
        \textbf{Identifiers} &  & $X, Y, Z, \dots$ \\
        \textbf{Data Constructors} & $d ::= $ & $()$ \textsf{\;\textbar\; true \;\textbar\; false \;\textbar\; O \;\textbar\; S \;\textbar\; Cons \;\textbar\; Nil \;\textbar\; Leaf \;\textbar\; Node} \\
        \textbf{Constants} & $c ::= $ & $\mathbb{B} \;|\; \mathbb{N} \;|\; \mathbb{Z} \;|\; \dots \;|\; d\;\overline{c}$ \\
        \textbf{Operators} & $op ::= $ & $d$ \;\textbar\; $+$ \;\textbar\; $==$ \;\textbar\; $<$ \;\textbar\; \textsf{mod} \;\textbar\; \textsf{nat\_gen} \;\textbar\; \textsf{int\_gen} \;\textbar\; $\dots$ \\
        \textbf{Actions} & $\alpha ::= $ & $read$ \;\textbar\; $write$ \;\textbar\; $open$ \;\textbar\; $close$ \;\textbar\; $\dots$ \\
        \textbf{Values} & $v ::= $ & $c \;|\; op \;|\; x \;|\; \lambda x{:}t{.}e \;|\; \textsf{fix}\;f{:}t{.}\lambda x{:}t{.}e$ \\
        \textbf{Terms} & $e ::= $ & $v$ \;\textbar\; \textsf{err} \;\textbar\; $\texttt{\small{let}} \; x = e \; \texttt{\small{in}} \; e$ \;\textbar\; $\texttt{\small{let}} \; x = op\;\overline{v} \; \texttt{\small{in}} \; e$ \;\textbar\; $\texttt{\small{let}} \; x = v\;v \; \texttt{\small{in}} \; e$ \;\textbar\; $\texttt{\small{match}}\;v\;\texttt{\small{with}}\;\overline{d\;\overline{y} \rightarrow e}$ \;\textbar\; $\texttt{\small{let}}\;x = \alpha \; \overline{v}\;\texttt{\small{in}}\;e \;|\; \texttt{\small{let}}\;x = \texttt{\small{new}}_r \; ()\;\texttt{\small{in}}\;e \;|\; \texttt{\small{let}}\;x = \texttt{\small{get}}\; F \;\texttt{\small{in}}\;e \;|\; \texttt{\small{let}}\;x = f\;\overline{v}\;\texttt{\small{in}}\;e$ \\
        \textbf{Resource Types} & $r ::= $ & $file$ \;\textbar\; $socket$ \;\textbar\; $api$ \;\textbar\; $\dots$ \\
        \textbf{Base Types} & $b ::= $ & $r$ \;\textbar\; $unit$ \;\textbar\; $bool$ \;\textbar\; $nat$ \;\textbar\; $int$ \;\textbar\; $b\;list$ \;\textbar\; $b\;tree$ \;\textbar\; $\dots$ \\
        \textbf{Basic Types} & $t ::= $ & $b$ \;\textbar\; $t \rightarrow t$ \\
        \textbf{Method Predicates} & $mp ::= $ & $emp$ \;\textbar\; $hd$ \;\textbar\; $mem$ \;\textbar\; $\dots$ \\
        \textbf{Literals} & $l :: = $ & $c$ \;\textbar\; $x$ \\
        \textbf{Propositions} & $\phi ::= $ & $l$ \;\textbar\; $\bot$ \;\textbar\; $\top_b$ \;\textbar\; $op(\overline{l})$ \;\textbar\; $mp(\overline{x})$ \;\textbar\; $\neg \phi$ \;\textbar\; $\phi \;\land\; \phi$ \;\textbar\; $\phi \;\lor\; \phi$ \;\textbar\; $\phi \Longrightarrow \phi$ \;\textbar\; $\forall u{:}b . \phi$ \;\textbar\; $\exists u{:}b . \phi$ \\
        \textbf{Refinement Types} & $\tau ::= $ & $\under{v: b \;|\; \phi \;|\; \phi}$ \;\textbar\; $\overa{v: b \;|\; \phi}$ \;\textbar\; $x{:}\tau \rightarrow \kappa$ \\
        \textbf{Function Types} & $\kappa ::= $ & $x{:}\tau \rightarrow \kappa \;|\; \pi$ \\
        \textbf{History Types} & $\pi ::= $ & ($\tau, \;H)$ \\
        \textbf{Type Contexts} & $\Gamma ::= $ & $\varnothing$ \;\textbar\; $\Gamma,x{:}\tau$\\
    \end{tabularx}
    \caption{Syntax of $\lambda^{\textbf{TG}}$ extended for resource use.}
    \label{fig:language}
\end{figure}

Identifiers have been introduced as elements of the the set $Rid$, which is an infinite, countable set (although resources are always finite in any given context). Actions, represented by the symbol $\alpha$, are essentially functions. This means that their types must be predefined. It is crucial that the sets of identifiers and actions remain disjoint, as the inferable history in the type system rules will differ depending on the case.

The set of terms includes new forms of \verb|let-bindings|:
\begin{enumerate*}[label=(\roman*)]
    \item the application of an action
    \item the creation of a new resource
    \item the association to a variable of an external function
    \item the ability to apply a set of values to a variable. This is necessary for the application of APIs after they have been assigned to a variable using the keyword \verb|get|. 
\end{enumerate*}

Also types have been enriched. First, types for resources denoted by $r$ were added to the base types. In the refinement types, the return type of an arrow type has been modified to a function type. $\kappa$. This either recursively enforces the addition of an argument to the function or supports termination inserting the return type, which is now a pair type called a history type, denoted by $\pi$.
Here, $H$ represents the function's latent effect, and $\pi$ becomes the type that must be associated with each expression.

We introduce a number of properties and demonstrations on this extension. First we would like to demonstrate that if when typing an expression the rule \textsc{TSub} is not used then the associated Coverage Type will represent \textbf{\emph{all and only}} the values returned by the expression.

\begin{lemma}
    In the function \emph{Ty}, which associates a constant and an operator with its own type - and used in the rules \textsc{TConst} and \textsc{TOp} - the Coverage Type represents not only some of the values to which the expression reduces, but also represents all of them.
    \label{lemma:ty}
\end{lemma}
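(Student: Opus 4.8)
The plan is to proceed by a straightforward case analysis on the syntactic shape of the constant $c$ or operator $op$ whose type is given by the function \emph{Ty}, and to show in each case that the qualifier attached by \emph{Ty} is an \emph{exact} characterization of the result set, not merely a sound over-approximation. First I would recall from \cite{coverage} the defining clauses of \emph{Ty}: for a constant $c$ of base type $b$ it returns (in the extended syntax of this paper) a type of the form $\under{v{:}b \mid v = c \mid v = c}$, and for a first-order operator $op$ with argument types $\overline{b_i}$ and result type $b$ it returns a dependent arrow ending in a refinement whose qualifier is the logical graph of $op$, e.g.\ $v = op(\overline{x_i})$. The claim to establish is that the set $\{\,v : b \mid \phi[v]\,\}$ denoted by this qualifier coincides with the set of values to which the corresponding expression actually reduces under the operational semantics of Section \ref{sec:type_system}.

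The key steps, in order, are: (1) for the constant case, observe that the expression $c$ reduces in zero steps to $c$ and to nothing else, while the qualifier $v = c$ is satisfied by exactly one value, namely $c$; hence the two sets are equal as singletons. (2) For the operator case, fix a closed instantiation of the arguments $\overline{v_i}$ with $v_i : b_i$; the expression $\texttt{let } x = op\;\overline{v_i} \texttt{ in } x$ (or the operator applied directly) is deterministic and reduces to the single value $op(\overline{v_i})$, which is by definition the unique witness of the qualifier $v = op(\overline{v_i})$. (3) Assemble these pointwise facts into the statement about the dependent function type by noting that the exactness is preserved argument-by-argument, since each $\under{v{:}b_i \mid v = x_i \mid v = x_i}$ binder contributes an exact characterization of its slot. (4) Finally, address the generator operators such as \textsf{nat\_gen} and \textsf{int\_gen} separately: for these, \emph{Ty} assigns the full type $\under{v{:}nat \mid \top \mid \top}$ (resp.\ over $int$), and one must check that the operational semantics indeed allows the generator to reduce to \emph{every} value of that base type — which is exactly the design intent of a test generator and is part of the semantics given in \cite{coverage}. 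So in every clause of \emph{Ty} the qualifier is both a lower bound (coverage, the original Coverage-Type property) and an upper bound (no spurious values), which is what "all and only" means.

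The main obstacle I anticipate is not the constant or deterministic-operator cases — those are essentially immediate — but making precise the claim for the nondeterministic generator operators, because there the qualifier $\top_b$ must be matched against the reduction relation's ability to produce the \emph{entire} carrier of $b$. This requires appealing to the specific operational rule for each built-in generator (presented in Section \ref{sec:type_system}) and arguing that its reduct ranges over all of $\mathbb{N}$, $\mathbb{Z}$, etc.; if the base type is infinite this is a statement about the reduction relation rather than a finite check, so I would phrase it as: for every $c : b$ there is a reduction $op\;\overline{v} \to c$, which together with type soundness (every reduct has type $b$) gives exact coverage. A secondary subtlety is purely bookkeeping: in the extended syntax a Coverage Type now carries \emph{two} qualifiers, and I must confirm that \emph{Ty} sets the second (the "correctness" qualifier of the ECT) equal to the first in every base case — so that the lemma's conclusion about "all and only" is literally read off the second component. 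Once that convention is fixed, the proof is a finite enumeration over the productions for $c$ and $op$ with no residual analytic content.
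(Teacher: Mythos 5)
Your proof is correct and follows essentially the same route as the paper's: a case analysis on the clauses of \emph{Ty}, showing that for a constant the qualifier $v = c$ is a singleton matching the unique reduct, and for an operator the return qualifier is the exact graph of the operation in its arguments (or, for data constructors, a conjunction of method predicates that pins the result down uniquely). You are in fact slightly more thorough than the paper, whose proof only exhibits the deterministic-operator and constructor cases and leaves the nondeterministic generators (\textsf{nat\_gen}, \textsf{int\_gen}) implicit, whereas you correctly isolate them as the one place where ``all and only'' becomes a nontrivial claim about the reduction relation ranging over the entire carrier of the base type.
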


\begin{proof}
    The demonstration is divided for both cases:
    \begin{itemize}
        \item A constant $c$ will obviously be a value, so the Coverage Type associated with it will be of the form: $\under{v: b \;|\; v = c}$ for some base type $b$. Trivially, the qualifier represents all and only the values returned by the expression $c$, which being a value will only be $c$ itself.
        \item The type associated with an operator will be an arrow type. Again, the return type will correspond exactly to all and only the values returned by the function. In fact, the Coverage Type corresponding to the return type of an operator can be defined in two ways:
        \begin{itemize}
            \item Or by using the operator within the qualifier, e.g. $\text{Ty}(+) = a{:}\overa{v: int \;|\; \top_{int}} \rightarrow b{:}\overa{v: int \;|\; \top_{int}} \rightarrow \under{v: int \;|\; v = a + b}$.
            \item Or inductively on the base type structure of the returned value, e.g. for binary trees we define the type for leaves: $\text{Ty}(Leaf) = \under{v: b\;tree \;|\; emp(v)}$; and for nodes: $\text{Ty}(Node) = r{:}\overa{v: b \;|\; \top_{b}} \rightarrow l{:}\overa{v: b\;tree \;|\; \top_{b\;tree}} \rightarrow l{:}\overa{v: b\;tree \;|\; \top_{b\;tree}} \rightarrow \under{v: int \;|\; root(v, \;r) \;\land\; left(v, \;l) \;\land\; right(v, \;r)}$.
        \end{itemize}
        In conclusion, we can state that the return type of operators and constructors also depends exactly on the parameters passed, and the values satisfying its qualifier are \emph{\textbf{all and only}} those returned.
    \end{itemize}
\end{proof}

\begin{theorem}
    Typing an expression $e$ with a Coverage Type $\tau$, according to the type system rules of $\lambda^{\textbf{TG}}$, \emph{without the use of the \textsc{TSub} rule} and \emph{taking each branch into account when typing a} \verb|pattern-matching|; either leads to an error situation (even though using all the rules it might be possible to associate $\tau$ with $e$), or if the typing phase ends successfully, it means that $\tau$ not only represents some values that will surely be returned by $e$, but also describes \textbf{all} them.

    The same applies to the typing of a function $f$ with an arrow type $\overline{\tau} \rightarrow \under{v: b \;|\; \phi}$: in this case the property is shifted from the entire function to just the return type.
\end{theorem}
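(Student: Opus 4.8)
The plan is to prove the statement by structural induction on the typing derivation in the \emph{restricted} system --- the one obtained from $\lambda^{\textbf{TG}}$ by deleting the subtyping rule \textsc{TSub} and by requiring that every branch of a \verb|match| be assigned a type. Throughout the induction I would maintain the invariant that the context associates to each variable a type that exactly covers the values the variable may take; this invariant is preserved because a context is extended only in the \verb|let|-, $\lambda$-, $\textsf{fix}$- and \verb|match|-rules, where the new entry is either a type proved exact by the induction hypothesis or the canonical ``top'' coverage type of a base type, which trivially covers all values of that type. The base cases \textsc{TConst} and \textsc{TOp} are precisely Lemma~\ref{lemma:ty}, and the degenerate case $\textsf{err}$ is immediate, its coverage type being unsatisfiable while $\textsf{err}$ produces no value.

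For the inductive step I would argue rule by rule. For a \verb|let|-binding $\mathtt{let}\;x = e_1\;\mathtt{in}\;e_2$ --- and symmetrically for the operator-, application-, action-, \verb|new|- and \verb|get|-bindings --- the induction hypothesis yields an exact coverage type $\tau_1$ for the bound subterm and, in the context extended by $x{:}\tau_1$, an exact coverage type for $e_2$; since the operational semantics reduces the binding by first reducing $e_1$ to some value $c$ covered exactly by $\tau_1$ and then reducing $e_2[c/x]$, the set of values reachable by the whole binding equals the union over all such $c$ of the sets reachable by $e_2[c/x]$, which is exactly what the substituted, existentially closed qualifier carried up by the rule describes. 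The resource primitives are the easy cases: an action returns its designated value, $\mathtt{new}_r$ a fresh identifier, and $\mathtt{get}\,F$ the API constant, each captured exactly by the corresponding rule. For the function values $\lambda x{:}t{.}e$ and $\textsf{fix}\,f{:}t{.}\lambda x{:}t{.}e$ the claim reduces by definition to exactness of the return type, which is the type derived for the body $e$ in the extended context and is exact by the induction hypothesis --- this is the sense in which, for functions, ``the property is shifted to the return type.''

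The conceptual core is the \verb|match| case, and this is exactly where the hypothesis ``taking each branch into account'' is used. When the scrutinee $v$ reduces to a value of the form $d_i\,\overline{c}$, the operational semantics selects the $i$-th branch and reduces its body with the pattern variables bound to $\overline{c}$; hence the set of values the \verb|match| may produce is the \emph{union}, over the enabled branches, of the sets produced by the branch bodies. Because no branch is discarded, the rule's conclusion qualifier is the disjunction of the branch qualifiers, each exact by the induction hypothesis, and a disjunction of exact descriptions of a family of sets is an exact description of their union. Had \textsc{TSub} been available, or a branch been dropped, this disjunction could be replaced by a strictly weaker qualifier; this is precisely why both ingredients must be excluded. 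The ``error situation'' alternative of the theorem corresponds to the case where some branch, or some required return type, cannot be obtained without appealing to \textsc{TSub}, and it needs no separate argument, being merely a statement of incompleteness of the restricted system.

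The step I expect to be the main obstacle is the function-application rule $\mathtt{let}\;x = v\,\overline{v'}\;\mathtt{in}\;e$, together with the recursive rule for $\textsf{fix}$. For application one must show that substituting the (exact) coverage types of the actual arguments into the dependent return type of the callee --- and simultaneously into its latent effect $H$, so that the resulting history type $\pi$ stays consistent --- again yields an exact coverage type; this calls for a small auxiliary lemma asserting that the denotation of a refinement type is stable under substitution of exactly-covering argument types. For $\textsf{fix}$ the apparent difficulty is circularity: typing the body presupposes a type for the recursive occurrence of $f$, and one needs \emph{that} type to be exact as well. I would break the circularity with an inner induction on the first argument, exploiting the decreasing-argument guard that $\lambda^{\textbf{TG}}$ imposes on recursion, so that exactness of the return type at argument $n$ depends only on exactness at strictly smaller arguments, which grounds the induction.
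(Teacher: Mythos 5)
Your proposal is correct and follows essentially the same route as the paper's proof: induction on the typing rules with the base cases discharged by Lemma~\ref{lemma:ty}, the observation that \verb|let|-bindings and applications preserve exactness through substitution, and the \verb|match| case as the crux, where requiring every branch forces the conclusion qualifier to be the exact disjunction of the (exact) branch qualifiers. You are in fact more explicit than the paper about the two genuinely delicate points --- the substitution-stability lemma needed for application and the inner induction on the decreasing argument needed to ground exactness for \textsf{fix} --- which the paper dispatches with a brief appeal to ``composition'' and to the premises, so your added detail strengthens rather than diverges from the argument.
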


\begin{proof}
    The proof is presented inductively on the rule structure of the type system of $\lambda^{\textbf{TG}}$ \cite{coverage}:
    \begin{itemize}
        \item For the rules \textsc{TConst} and \textsc{TOp}, the demonstration is straightforward as it has already been proven in the previous lemma that the function \emph{Ty} satisfies the desired property. Axiomatically, we state that the rule \textsc{TErr} also conforms to this property since the qualifier $\bot$ can only be associated with divergent computations.
        \item For the rules \textsc{TVarBase}, \textsc{TVarFun}, \textsc{TFun} and \textsc{TFix}, the proof is straightforward since if the property holds for the expression in the premises it will also hold for the expression in the consequences.
        \item For the rule \textsc{TEq}, if the property holds for the type $\tau$, it will also hold for $\tau'$. Because it holds that the first is a subtype of the other, and vice versa, so the qualifiers of the two types are semantically equal, and denote the same set of expressions. Nothing is proved for the rule $\textsc{TSub}$ because it is not used in the assumptions of the theorem.
        \item For the rule \textsc{TApp}, if the qualifiers in the type of $a$ and $v_2$ match without using the rule \textsc{TSub}, the qualifier will represent all and only the values returned by $v_2$. Since we assume that the property is valid for the function type of $v_1$, substitution, for \emph{composition}, will allow the property to be unaltered for $\tau_x$. The last step is to assume that this also holds for $e$ and the type $\tau$, and under these conditions it will also hold. The same reasoning can be done for the rules \textsc{TAppOp}, \textsc{TAppFun} and \textsc{TLetE}.
        \item Let us therefore assume that the \textsc{TMerge} rule is cascaded over \textbf{all} the branches of a \verb|pattern-matching|, typed individually via the \textsc{TMatch} rule. The latter is the only rule for which the property will not apply. But it cannot be the first one to be applied because under the assumptions of the theorem it is not allowed; consequently, to type a \verb|pattern-matching| it will be necessary to start from the rule \textsc{TMerge}, and having to apply it on all the branches, the types $\tau_1$ and $\tau_2$ associated with each $e$ will not respect the property of covering all and only the values returned, but will cover the branches taken into consideration. Since the cascading rule must therefore be applied to all branches, the \verb|pattern-matching| will be typed with a Coverage Type whose qualifier will indicate \textbf{\emph{all and only}} the values it returns.
    \end{itemize}

    The proof for the typing of a function is similar, it is only necessary to first apply a series of \textsc{TFun} to populate the context with the parameters of the function ($\overline{\tau_i}$) and then proceed in an identical manner to the above with the body of the function.
\end{proof}

The next section will extend the type system of $\lambda^{\textbf{TG}}$ in order to:
\begin{enumerate}
    \item Add the \emph{History Expressions} to the type of each expression.
    \item Replace the Coverage Types with the \emph{Extended Coverage Types} so that the \textsc{TSub} rule can be used, but at the same time knowing the full set of values returned by an expression. This will allow for correct over-approximation when binding the type of a variable into an effect type (\emph{History Expressions}).
\end{enumerate}

\section{A Static Type System for Resource Policies}\label{sec:type_system}
Some type system rules will be based on the relationships between type denotations. As we shall see later, although new types are used, these are simply extensions or compositions of others that already exist. That is, the type pairs is composed of an Extended Coverage Type, which in turn is based on Coverage Types. Our choice has been not to create new denotations for these types - this has only been done with History Expressions as already seen in Section \ref{sec:history} - but to use existing ones. The motivation for this choice is to be able to prove the correctness of the type system in a faster and simpler way, guaranteeing the preservation of the properties on the Coverage Types. Therefore, since denotations on Coverage Types are based on a context of only function types with no history and traditional Coverage Types \cite{coverage}, we will make use of two definitions that allow us to transform the context:

\begin{definition}[Context Changes $\phi$]
    \begin{equation}
        \begin{gathered}
            (x{:}\tau_x, \;\Gamma)_{\phi} \equiv x{:}(\tau_x)_{\phi}, \;\Gamma_{\phi} \\
            (\tau, \;H)_{\phi} \equiv \tau_{\phi} \\
            \under{v: b \;|\; \phi \;|\; \psi}_{\phi} \equiv \under{v: b \;|\; \phi} \\
            \overa{v: b \;|\; \phi}_{\phi} \equiv \overa{v: b \;|\; \phi} \\
            (\overa{v: b \;|\; \phi} \rightarrow \kappa)_{\phi} \equiv \overa{v: b \;|\; \phi} \rightarrow (\kappa_{\phi}) \\
            ((a{:}\tau_a \rightarrow \kappa_b) \rightarrow \kappa)_{\phi} \equiv (a{:}(\tau_a)_{\phi} \rightarrow (\kappa_b)_{\phi}) \rightarrow \kappa_{\phi} \\
        \end{gathered}
    \end{equation}
\end{definition}

With these two definitions, we can modify the reference context by adapting it to the definitions of the denotations on the original Coverage Types, working either on the first qualifiers of the Extended Coverage Types ($\phi$), or on the second ones ($\psi$), as required.

\begin{definition}[Context Changes $\psi$]
    \begin{equation}
        \begin{gathered}
            (x{:}\tau_x, \;\Gamma)_{\psi} \equiv x{:}(\tau_x)_{\psi}, \;\Gamma_{\psi} \\
            (\tau, \;H)_{\psi} \equiv \tau_{\psi} \\
            \under{v: b \;|\; \phi \;|\; \psi}_{\psi} \equiv \under{v: b \;|\; \psi} \\
            \overa{v: b \;|\; \phi}_{\psi} \equiv \overa{v: b \;|\; \phi} \\
            (\overa{v: b \;|\; \phi} \rightarrow \kappa)_{\psi} \equiv \overa{v: b \;|\; \phi} \rightarrow (\kappa_{\psi}) \\
            ((a{:}\tau_a \rightarrow \kappa_b) \rightarrow \kappa)_{\psi} \equiv (a{:}(\tau_a)_{\psi} \rightarrow (\kappa_b)_{\psi}) \rightarrow \kappa_{\psi} \\
        \end{gathered}
    \end{equation}
\end{definition}

\subsection{Auxiliary Rules}

In Figure \ref{fig:well-form} we find the extended rules of well-formedness. Two rules have been added: the \textsc{WfHistory} that separately checks the property for the return type and the behaviour type; and the \textsc{WfHistExp} that checks for a History Expression that the qualifiers in each of its component expressions are bound formulae in the context $\Gamma$. For recursive function declarations, the property is checked recursively on the extended context with the parameters defined in the $\mu$ construct.

Furthermore, for the well-formedness of $H$, we require that each identifier of a recursive function be used only in the History Expression representing its body and here only for recursive calls (via $F$ or $call$). In fact $F$ is not really an API! Additionally, it is required as already explained that no new resources are created in the recursive functions.

A further check is made on the APIs defined in $\Delta$ which also must not contain the creation of resources within them.

The rule \textsc{WfBase}, for Extended Coverage Types only adds the closed formula constraint for the second predicate $\psi$.

Here we note the reason why the list of arguments with associated full type (base type and qualifier) was also presented in the definition of recursive functions in History Expressions: as we have seen in the calculation of denotation in Figure \ref{fig:denot_history}, this information is useless, but it is necessary, in the rules just discussed, to verify the well-formedness relationship (\textbf{WF}).

Figure \ref{fig:subtyping} shows the extensions for the rules concerning the verification of subtyping relations. The rule \textsc{SubUBase} is the only one worthy of comment: we keep the denotations for the traditional Coverage Types, and thus verify as in the original rule that the denotation of the type with qualifier $\phi_1$ is subtyped by that with qualifier $\phi_2$. We verify the same relation on the histories $H_1$ and $H_2$. Instead, the denotations on the Coverage Types constructed on $\psi_1$ and $\psi_2$ we want to be equal, i.e. semantically $\psi_1$ is equal to $\psi_2$. This intuitively reflects the role of this added qualifier, which represents all the values actually returned by an expression: we can neither add new ones as we would be inserting guarantees that are not respected, nor can we remove them as we would lose useful values for the correct over-approximation of the histories.

Finally, in Figure \ref{fig:disjunction}, we find the rules useful for joining types belonging to several branches of the same \verb|pattern-matching|. We note that compared to the original disjunction rule \cite{coverage}, here we did not have to use the denotations of the Coverage Types for the function types and preferred, as with the subtyping rules, to proceed structurally on the individual types that make up the function. This choice is constrained by the fact that a function may have as an argument another function that therefore has a latent effect. Separating in this case the History Expressions from the various return types becomes more difficult. Here, however, we have also changed the rules for Coverage Types, as we have to prove that the rules in Figure \ref{fig:disjunction} are correct.

\begin{figure}[H]
    \textbf{Well-Formedness}
    
    \begin{equation}
        \frac{
            \Gamma \vdash^{\textbf{WF}} \tau \quad \Gamma \vdash^{\textbf{WF}} H
        }{
            \Gamma \vdash^{\textbf{WF}} (\tau, \; H)
        }
        \tag{\textsc{WfHistory}}
        \label{eq:wfhistory}
    \end{equation}
    
    \begin{equation}
        \frac{
            \begin{gathered}
                \Gamma \equiv \overline{x_i:\overa{v: b_{x_i} \;|\; \phi_{x_i}}, y_j:\under{v: b_{y_j} \;|\; \phi_{y_j} \;|\; \psi_{y_j}}, z:(a:\tau_a \rightarrow \tau_b)} \\
                \forall \; \alpha(\overline{r_k{:}\phi_k}) \in H, \; \forall k, \;(\overline{\;\exists x_i:b_{x_i}, \exists y_j : b_{y_j}}, \exists v : r_k, \phi_k) \; \text{is a Boolean predicate} \\
                \forall \; call(\phi; \;\overline{a_h{:}(b_h{:}\psi_h)}) \in H, \;(\overline{\;\exists x_i:b_{x_i}, \exists y_j : b_{y_j}}, \exists v : api, \phi) \;\text{is a Boolean predicate} \;\land \\
                \land\; \forall h, (\overline{\;\exists x_i:b_{x_i}, \exists y_j : b_{y_j}}, \exists v : b_h, \psi_h) \;\text{is a Boolean predicate} \\
                \forall \; F(\overline{a_m{:}(b_m{:}\sigma_m)}) \in H, \; \forall m, (\overline{\;\exists x_i:b_{x_i}, \exists y_j : b_{y_j}}, \exists v : b_m, \sigma_m) \;\text{is a Boolean predicate} \\
                \forall \;\mu F(\overline{a_l{:}({b_l{:}\theta_l})})(H_F) \in H, \; \Gamma, \overline{a_l: \overa{v: b_l \;|\; \theta_l}} \vdash^{\textbf{WF}} H_F \;\land \\
                \land\; \forall \;G(\overline{a{:}(b{:}\phi)}) \in H, \;G = F \;\Longrightarrow\; G(\overline{a{:}(b{:}\phi)}) \in H_F \;\land \\ \land\; \forall \;call(\phi; \;\overline{a{:}(b{:}\psi)}) \in H, \;\phi[v \mapsto F] \;\Longrightarrow\; call(\phi; \;\overline{a{:}(b{:}\psi)}) \in H_F \;\land \\
                \land\; \forall \;Val_H \in H_F. \;\forall r. \;\forall \;X \in Rid. \;\neg(Val_H = \texttt{\small{new}}_r(X)) \;\land\; \forall \;\alpha(\overline{r{:}\theta}) \in H, \;\forall \;\theta, \;\neg(\theta[v \mapsto F]) \\
                \forall \;(F, \;\overline{\tau} \rightarrow (\tau_F, \;H_F)) \in \Delta, \;\forall \;Val_H \in H_F, \;\forall r. \;\forall \;X \in Rid, \;\neg(Val_H = \texttt{\small{new}}_r(X)) \\
            \end{gathered} 
        }{
            \Gamma \vdash^{\textbf{WF}} H
        }
        \tag{\textsc{WfHistExp}}
        \label{eq:wfhexp}
    \end{equation}
    
    \begin{equation}
        \frac{
            \begin{gathered}
                \Gamma \equiv \overline{x_i:\overa{v: b_{x_i} \;|\; \phi_{x_i}}, y_j:\under{v: b_{y_j} \;|\; \phi_{y_j} \;|\; \psi_{y_j}}, z:(a:\tau_a \rightarrow \tau_b)} \\
                (\overline{\forall x_i: b_{x_i}, \exists y_j: b_{y_j}}, \forall v: b, \phi) \; \text{is a Boolean predicate} \quad \forall j, \text{err} \notin \llbracket \under{v: b_{y_j} \;|\; \phi_{y_j}} \rrbracket_{\Gamma_{\phi}} \\
                (\overline{\forall x_i: b_{x_i}, \exists y_j: b_{y_j}}, \forall v: b, \psi) \; \text{is a Boolean predicate}
            \end{gathered}
        }{
            \Gamma \vdash^{\textbf{WF}} \under{v: b \;|\; \phi \;|\; \psi}
        }
        \tag{\textsc{WfBase}}
        \label{eq:wfbase}
    \end{equation}
    \begin{multicols}{2}
        \raggedcolumns
        \begin{equation}
            \frac{
                \Gamma, x:\overa{v: b \;|\; \phi} \vdash^{\textbf{WF}} \kappa
            }{
                \Gamma \vdash^{\textbf{WF}} x:\overa{v: b \;|\; \phi} \rightarrow \kappa
            }
            \tag{\textsc{WfArg}}
            \label{eq:wfarg}
        \end{equation}
    
        \columnbreak
    
        \begin{equation}
            \frac{
                \Gamma \vdash^{\textbf{WF}} (a: \tau_a \rightarrow \kappa_b) \quad \Gamma \vdash^{\textbf{WF}} \kappa
            }{
                \Gamma \vdash^{\textbf{WF}} (a: \tau_a \rightarrow \kappa_b) \rightarrow \kappa
            }
            \tag{\textsc{WfRes}}
            \label{eq:wfres}
        \end{equation}
    \end{multicols}
    \caption{Extension of the auxiliary rules of Well-Formedness.}
    \label{fig:well-form}
\end{figure}

\begin{figure}[H]
    \textbf{Subtyping}
    
    \begin{equation}
        \frac{
            \llbracket \under{v: b \;|\; \phi_1} \rrbracket_{\Gamma_{\phi}} \subseteq \llbracket \under{v: b \;|\; \phi_2} \rrbracket_{\Gamma_{\phi}} \quad \llbracket H_1 \rrbracket_{\Gamma_{\psi}} \subseteq \llbracket H_2 \rrbracket_{\Gamma_{\psi}} \quad \llbracket \under{v: b \;|\; \psi_1}\rrbracket_{\Gamma_{\psi}} = \llbracket \under{v: b \;|\; \psi_2} \rrbracket_{\Gamma_{\psi}}
        }{
            \Gamma \vdash (\under{v: b \;|\; \phi_1 \;|\; \psi_1}, \; H_1) <: (\under{v: b \;|\; \phi_2 \;|\; \psi_2}, \; H_2)
        }
        \tag{\textsc{SubUBase}}
        \label{eq:subu}
    \end{equation}

    \begin{equation}
        \frac{
            \llbracket \overa{v: b \;|\; \phi_1} \rrbracket_{\Gamma_{\phi}} \subseteq \llbracket \overa{v: b \;|\; \phi_2} \rrbracket_{\Gamma_{\phi}}
        }{
            \Gamma \vdash \overa{v: b \;|\; \phi_1} <: \overa{v: b \;|\; \phi_2}
        }
        \tag{\textsc{SubOBase}}
        \label{eq:subo}
    \end{equation}

    \begin{equation}
        \frac{
            \Gamma \vdash \tau_{21} <: \tau_{11} \quad \Gamma, x: \tau_{21} \vdash \kappa_{12} <: \kappa_{22}
        }{
            \Gamma \vdash x: \tau_{11} \rightarrow \kappa_{12} <: x:\tau_{21} \rightarrow \kappa_{22}
        }
        \tag{\textsc{SubArr}}
        \label{eq:subarr}
    \end{equation}
    \caption{Extension of the auxiliary rules of subtyping.}
    \label{fig:subtyping}
\end{figure}

\begin{figure}[h]
    \textbf{Disjunction}
    
    \begin{equation}
        \frac{
            \begin{gathered}
                \llbracket \under{v: b \;|\; \phi_1} \rrbracket_{\Gamma_{\phi}} \cap \llbracket \under{v: b \;|\; \phi_2} \rrbracket_{\Gamma_{\phi}} = \llbracket \under{v: b \;|\; \phi_3} \rrbracket_{\Gamma_{\phi}} \\
                \llbracket \under{v: b \;|\; \psi_1} \rrbracket_{\Gamma_{\psi}} \cap \llbracket \under{v: b \;|\; \psi_2} \rrbracket_{\Gamma_{\psi}} = \llbracket \under{v: b \;|\; \psi_3} \rrbracket_{\Gamma_{\psi}}
            \end{gathered}
        }{
            \Gamma \vdash \under{v: b \;|\; \phi_1 \;|\; \psi_1} \;\lor\; \under{v: b \;|\; \phi_2 \;|\; \psi_2} = \under{v: b \;|\; \phi_3 \;|\; \psi_3}
        }
        \tag{\textsc{Disjunction Under}}
    \end{equation}

    \begin{equation}
        \frac{
            \llbracket \overa{v: b \;|\; \phi_1} \rrbracket_{\Gamma_{\phi}} \cup \llbracket \overa{v: b \;|\; \phi_2} \rrbracket_{\Gamma_{\phi}} = \llbracket \overa{v: b \;|\; \phi_3} \rrbracket_{\Gamma_{\phi}}
        }{
            \Gamma \vdash \overa{v: b \;|\; \phi_1} \;\lor\; \overa{v: b \;|\; \phi_2} = \overa{v: b \;|\; \phi_3}
        }
        \tag{\textsc{Disjunction Over}}
    \end{equation}

    \begin{equation}
        \frac{
            \llbracket H_1 \rrbracket_{\Gamma_{\psi}} \cup \llbracket H_2 \rrbracket_{\Gamma_{\psi}} = \llbracket H_3 \rrbracket_{\Gamma_{\psi}}
        }{
            \Gamma \vdash H_1 \;\lor\; H_2 = H_3
        }
        \tag{\textsc{Disjunction History}}
    \end{equation}

    \begin{equation}
        \frac{
            \begin{gathered}
                \Gamma \vdash \tau_1 \;\lor\; \tau_2 = \tau_3 \quad \Gamma \vdash H_1 \;\lor\; H_2 = H_3
            \end{gathered}
        }{
            \Gamma \vdash (\tau_1, \;H_1) \;\lor\; (\tau_2, \;H_2) = (\tau_3, \;H_3)
        }
        \tag{\textsc{Disjunction Pair}}
    \end{equation}

    \begin{equation}
        \frac{
            \Gamma \vdash \tau_{a_1} \;\land\; \tau_{a_2} = \tau_{a_3} \quad \Gamma, a{:}\tau_{a_3} \vdash \kappa_1 \;\lor\; \kappa_2 = \kappa_3
        }{
            \Gamma \vdash a{:}\tau_{a_1} \rightarrow \kappa_1 \;\lor\; a{:}\tau_{a_2} \rightarrow \kappa_2 = a{:}\tau_{a_3} \rightarrow \kappa_3
        }
        \tag{\textsc{Disjunction Arrow}}
    \end{equation}

    \begin{equation}
        \frac{
            \begin{gathered}
                \llbracket \under{v: b \;|\; \phi_1} \rrbracket_{\Gamma_{\phi}} \cup \llbracket \under{v: b \;|\; \phi_2} \rrbracket_{\Gamma_{\phi}} = \llbracket \under{v: b \;|\; \phi_3} \rrbracket_{\Gamma_{\phi}} \\
                \llbracket \under{v: b \;|\; \psi_1} \rrbracket_{\Gamma_{\psi}} \cup \llbracket \under{v: b \;|\; \psi_2} \rrbracket_{\Gamma_{\psi}} = \llbracket \under{v: b \;|\; \psi_3} \rrbracket_{\Gamma_{\psi}}
            \end{gathered}
        }{
            \Gamma \vdash \under{v: b \;|\; \phi_1 \;|\; \psi_1} \;\land\; \under{v: b \;|\; \phi_2 \;|\; \psi_2} = \under{v: b \;|\; \phi_3 \;|\; \psi_3}
        }
        \tag{\textsc{Conjunction Under}}
    \end{equation}

    \begin{equation}
        \frac{
            \llbracket \overa{v: b \;|\; \phi_1} \rrbracket_{\Gamma_{\phi}} \cap \llbracket \overa{v: b \;|\; \phi_2} \rrbracket_{\Gamma_{\phi}} = \llbracket \overa{v: b \;|\; \phi_3} \rrbracket_{\Gamma_{\phi}}
        }{
            \Gamma \vdash \overa{v: b \;|\; \phi_1} \;\land\; \overa{v: b \;|\; \phi_2} = \overa{v: b \;|\; \phi_3}
        }
        \tag{\textsc{Conjunction Over}}
    \end{equation}

    \begin{equation}
        \frac{
            \llbracket H_1 \rrbracket_{\Gamma_{\psi}} \cap \llbracket H_2 \rrbracket_{\Gamma_{\psi}} = \llbracket H_3 \rrbracket_{\Gamma_{\psi}}
        }{
            \Gamma \vdash H_1 \;\land\; H_2 = H_3
        }
        \tag{\textsc{Conjunction History}}
    \end{equation}

    \begin{equation}
        \frac{
            \begin{gathered}
                \Gamma \vdash \tau_1 \;\land\; \tau_2 = \tau_3 \quad \Gamma \vdash H_1 \;\land\; H_2 = H_3
            \end{gathered}
        }{
            \Gamma \vdash (\tau_1, \;H_1) \;\land\; (\tau_2, \;H_2) = (\tau_3, \;H_3)
        }
        \tag{\textsc{Conjunction Pair}}
    \end{equation}

    \begin{equation}
        \frac{
            \Gamma \vdash \tau_{a_1} \;\lor\; \tau_{a_2} = \tau_{a_3} \quad \Gamma, a{:}\tau_{a_3} \vdash \kappa_1 \;\land\; \kappa_2 = \kappa_3
        }{
            \Gamma \vdash a{:}\tau_{a_1} \rightarrow \kappa_1 \;\land\; a{:}\tau_{a_2} \rightarrow \kappa_2 = a{:}\tau_{a_3} \rightarrow \kappa_3
        }
        \tag{\textsc{Conjunction Arrow}}
    \end{equation}
    \caption{Extension of auxiliary disjunction rules.}
    \label{fig:disjunction}
\end{figure}

\begin{proof}[Correctness of Disjunction Rules for Coverage Types]
    The correctness of the procedure of unpacking the function and finding a disjunct (or conjunct) type for each individual type refinement and then recomposing the function structurally is actually already demonstrated in the original work when discussing the algorithmic procedures \textsc{\textcolor{blue}{Disj}} and \textsc{\textcolor{blue}{Conj}} \cite{coverage}. Here, however, we do not want to proceed in an algorithmic way (exploiting properties on qualifiers) and so we still rely on denotations showing that the procedure is the same. What we do know is that, for example given two Coverage Types $\under{v: b \;|\; \phi_1}$ and $\under{v: b \;|\; \phi_2}$, and a type context $\Gamma$, the following relationships apply:
    
    \begin{equation}
        \begin{gathered}
            \Gamma \vdash \under{v: b \;|\; \phi_1 \;\lor\; \phi_2} <: \under{v: b \;|\; \phi_1} \\
            \Gamma \vdash \under{v: b \;|\; \phi_1 \;\lor\; \phi_2} <: \under{v: b \;|\; \phi_2}
        \end{gathered}
    \end{equation}

    Thus, by definition of the subtyping relationship, it holds that:
    \begin{equation}
        \begin{gathered}
            \llbracket \under{v: b \;|\; \phi_1 \;\lor\; \phi_2} \rrbracket_{\Gamma_{\phi}} \subseteq \llbracket \under{v: b \;|\; \phi_1} \rrbracket_{\Gamma_{\phi}} \\
            \llbracket \under{v: b \;|\; \phi_1 \;\lor\; \phi_2} \rrbracket_{\Gamma_{\phi}} \subseteq \llbracket \under{v: b \;|\; \phi_2} \rrbracket_{\Gamma_{\phi}}
        \end{gathered}
    \end{equation}

    These two relationships lead to the statement that:
    
    \begin{equation}
         \llbracket \under{v: b \;|\; \phi_1} \rrbracket_{\Gamma_{\phi}} \cap \llbracket \under{v: b \;|\; \phi_2} \rrbracket_{\Gamma_{\phi}} \subseteq \llbracket \under{v: b \;|\; \phi_1 \;\lor\; \phi_2} \rrbracket_{\Gamma_{\phi}}
    \end{equation}

    But we know that $\under{v: b \;|\; \phi_1 \;\lor\; \phi_2}$ is the disjoint type of the two sets, which leads us to state that its denotation corresponds exactly to the intersection of the two sets:

    \begin{equation}
        \llbracket \under{v: b \;|\; \phi_1} \rrbracket_{\Gamma_{\phi}} \cap \llbracket \under{v: b \;|\; \phi_2} \rrbracket_{\Gamma_{\phi}} = \llbracket \under{v: b \;|\; \phi_1 \;\lor\; \phi_2} \rrbracket_{\Gamma_{\phi}}
    \end{equation}

    A dual demonstration can be performed on the conjunction of two Coverage Types leading to the following relations:

    \begin{equation}
        \begin{gathered}
            \Gamma \vdash \under{v: b \;|\; \phi_2} <: \under{v: b \;|\; \phi_1 \;\land\; \phi_2} \quad \quad \Gamma \vdash \under{v: b \;|\; \phi_1} <: \under{v: b \;|\; \phi_1 \;\land\; \phi_2} \\
            \llbracket \under{v: b \;|\; \phi_2} \rrbracket_{\Gamma_{\phi}} \subseteq \llbracket \under{v: b \;|\; \phi_1 \;\land\; \phi_2} \rrbracket_{\Gamma_{\phi}} \quad \quad \llbracket \under{v: b \;|\; \phi_1} \rrbracket_{\Gamma_{\phi}} \subseteq \llbracket \under{v: b \;|\; \phi_1 \;\land\; \phi_2} \rrbracket_{\Gamma_{\phi}} \\
            \llbracket \under{v: b \;|\; \phi_1} \rrbracket_{\Gamma_{\phi}} \cup \llbracket \under{v: b \;|\; \phi_2} \rrbracket_{\Gamma_{\phi}} = \llbracket \under{v: b \;|\; \phi_1 \;\land\; \phi_2} \rrbracket_{\Gamma_{\phi}}
        \end{gathered}
    \end{equation}

    We can also verify that these relations also apply to over-approximation types.

    During this demonstration, we have also shown implicitly that when checking the disjunction of a function the addition of the conjoined type of the preceding parameters to the context is correct, since in the subtyping rules this is done, but it is these that lead us to the intersection (or union) of the denotations of the disjoined (or conjoined) types! And these denotations are precisely computed under the same $\Gamma$ context where the conjoined (or disjoint) type will be found.

    In conclusion, we can say that the entire new set of disjunction rules is correct since we have shown that:
    \begin{enumerate*}[label=(\roman*)]
        \item structural separation of types is correct; 
        \item the assumptions concerning the intersection and union of the denotations for disjunction and conjunction (used for function parameters that are in contravariance to the main operator) respectively are correct; 
        \item finally, the evaluation of the denotations of the individual function types in the extended contexts with the previously disjointed (or conjoined) type is also correct.
    \end{enumerate*}
\end{proof}

\subsection{Type System}

Now, in Figures \ref{fig:type-system-1} and \ref{fig:type-system-2} we present the type system rules of the extended $\lambda^{\textbf{TG}}$ language for resource usage. We will show only a few selected ones, however most of them, the most noteworthy. However, it is possible to find the remaining set of rules in Appendix \ref{app:type-system}.

The special expression \verb|err| denoting divergent computations can be typed only with the contradictory qualifier $\bot$ (which is also the qualifier representing the exact values of the expression), while the effect produced is the special action $err()$.

As demonstrated in the Lemma \ref{lemma:ty} the $Ty$ function indicates not only the values it will surely return, but also specifies all of them, so to align with the syntax of Extended Coverage Types we need only replicate the qualifier a second time. The effect produced by just invoking a constant (\textsc{TConst}) or an action (\textsc{TAction}), without applying it, is null. The same is also true for variables. In the rule \textsc{TVarBase} the tautological qualifier $v = x$ is also used for the second qualifier: $x$ will be present in $\Gamma$ with an Extended Coverage Types associated with it, so the first and second $v = x$ will refer to the first and second qualifiers present in the type of $x$ in $\Gamma$, respectively.

For the rules \textsc{TSub} and \textsc{TEq} the only change is the use of the pair type $\pi$ instead of a traditional Coverage Type.

Even the mere declaration of a function is typed with the empty history (\textsc{TFun}). However, we introduce a new rule called \textsc{TFunFlat} that allows us to eliminate empty effects associated with parameters when a function has at least two. For example, it will be possible to change the type of a function declaration from (using only the rule \textsc{TFun}):
\begin{equation}
    (a{:}\tau_a \rightarrow (b{:}\tau_b \rightarrow (c{:}\tau_c \rightarrow (\tau, \;H), \;\epsilon), \;\epsilon), \;\epsilon)
\end{equation}
To one that eliminates all associated empty effects from the second to the last parameter:
\begin{equation}
    (a{:}\tau_a \rightarrow b{:}\tau_b \rightarrow c{:}\tau_c \rightarrow (\tau, \;H), \;\epsilon)
\end{equation}

The \textsc{TMatch} rule now prohibits being able to tip only one branch (or at any rate a smaller number than the total number of branches) of a \verb|pattern-matching|. We are obliged to make this change otherwise the property of representing \emph{all and only} the values returned by the expression in the second qualifier of an Extended Coverage Type would not be respected.
We note how the return type of each $v_i$ and $d_i(\overline{y})$ can be different (because of the $\psi$ qualifier of Extended Coverage Types), but we introduce the notation $\langle \cdot \rangle$ to transform Extended Coverage Types into traditional Coverage Types.

Despite this change, it is still possible to drop branches of a \verb|pattern-matching| by applying the \textsc{TSub} rule after the \textsc{TMatch} rule. In this way we retain the properties of the old type system but we are also able to identify in the second type qualifier \emph{all and only} the values returned to construct a proper over-approximation of the effects.

The rule for defining recursive functions is one that has been changed the most. In this case, in order to implement recursion also at the level of the History Expressions, we treat the function as an API: in this way the API calls (which will represent the recursive calls) will be present in the History associated with the recursive construct $\mu$, and as seen in the definition of denotations (Figure \ref{fig:denot_history}) this strategy allows us to implement a form of recursion. The History of the function associated with the identifier $F$ in $\Delta$ will be $\epsilon$ as it is not to be used in the calculation of the denotation of History Expressions. This is the only case in which the definition of a function will generate an active History, which is precisely the recursive construct $\mu$!

\begin{figure}[H]
    \begin{multicols}{2}
        \begin{equation}
            \frac{
                \Gamma \vdash^{\textbf{WF}} (\under{v: b \;|\; \bot \;|\; \bot}, \; err())
            }{
                \Gamma \vdash err : (\under{v: b \;|\; \bot \;|\; \bot}, \; err())
            }
            \tag{\textsc{TErr}}
            \label{eq:terr}
        \end{equation}
        
        \columnbreak
    
        \begin{equation}
            \frac{
                \Gamma \vdash^{\textbf{WF}} (\text{Ty}(c), \; \epsilon)
            }{
                \Gamma \vdash c : (\text{Ty}(c), \; \epsilon)
            }
            \tag{\textsc{TConst}}
            \label{eq:tconst}
        \end{equation}
    \end{multicols}

    \begin{multicols}{2}
        \begin{equation}
            \frac{
                \Gamma \vdash^{\textbf{WF}} (\text{Ty}(\alpha), \; \epsilon)
            }{
                \Gamma \vdash \alpha : (\text{Ty}(\alpha), \; \epsilon)
            }
            \tag{\textsc{TAction}}
            \label{eq:taction}
        \end{equation}

        \columnbreak

        \begin{equation}
            \frac{
                \Gamma \vdash^{\textbf{WF}} (\under{v: b \;|\; v = x \;|\; v = x}, \; \epsilon)
            }{
                \Gamma \vdash x : (\under{v: b \;|\; v = x \;|\; v = x}, \; \epsilon)
            }
            \tag{\textsc{TVarBase}}
            \label{eq:tvarb}
        \end{equation}
    \end{multicols}

    \begin{multicols}{2}
        \noindent
        \begin{equation}
            \frac{
                \Gamma(x) = (a: \tau_a \rightarrow \kappa) \quad \Gamma \vdash^{\textbf{WF}} (a:\tau_a \rightarrow \kappa, \; \epsilon)
            }{
                \Gamma \vdash x : (a:\tau_a \rightarrow \kappa, \; \epsilon)
            }
            \tag{\textsc{TVarFun}}
            \label{eq:tvarf}
        \end{equation}
        \columnbreak
        \begin{equation}
            \frac{
                \varnothing \vdash \pi <: \pi' \quad \varnothing \vdash e : \pi \quad \Gamma \vdash^{\textbf{WF}} \pi'
            }{
                \Gamma \vdash e : \pi'
            }
            \tag{\textsc{TSub}}
            \label{eq:tsub}
        \end{equation}
    \end{multicols}

    \begin{multicols}{2}
        \begin{equation}
            \frac{
                \begin{gathered}
                    \Gamma \vdash \pi' <: \pi \quad \Gamma \vdash \pi <: \pi' \\
                    \Gamma \vdash e: \pi \quad \Gamma \vdash^{\textbf{WF}} \pi'
                \end{gathered}
            }{
                \Gamma \vdash e: \pi'
            }
            \tag{\textsc{TEq}}
            \label{eq:teq}
        \end{equation}

        \columnbreak

        \begin{equation}
            \frac{
                \Gamma, x: \tau_x \vdash e: \pi \quad \Gamma \vdash^{\textbf{WF}} (x: \tau_x \rightarrow \pi, \; \epsilon)
            }{
                \Gamma \vdash \lambda x : \lfloor\tau_x\rfloor. e : (x: \tau_x \rightarrow \pi, \; \epsilon)
            }
            \tag{\textsc{TFun}}
            \label{eq:tfun}
        \end{equation}
    \end{multicols}

    \begin{equation}
        \frac{
            \Gamma, x: \tau_x \vdash e: (y: \tau_y \rightarrow \kappa, \; \epsilon) \quad \Gamma \vdash^{\textbf{WF}} (x: \tau_x \rightarrow y: \tau_y \rightarrow \kappa, \; \epsilon)
        }{
            \Gamma \vdash \lambda x : \lfloor\tau_x\rfloor. e : (x: \tau_x \rightarrow y: \tau_y \rightarrow \kappa, \; \epsilon)
        }
        \tag{\textsc{TFunFlat}}
        \label{eq:tflat}
    \end{equation}    

    \begin{equation}
        \frac{
            \begin{gathered}
                \forall i, \;\Gamma \vdash v: (\tau_{v_i}, \; H_{v_i}) \quad \Gamma, \overline{y: \tau_y} \vdash d_i(\overline{y}) : (\tau_{d_i}, \;H_{y_i}) \quad \langle \tau_{v_i} \rangle = \langle \tau_{d_i} \rangle \quad \Gamma, \overline{y: \tau_y} \vdash e_i: (\tau_i, \;H_{e_i}) \\
                \tau = \underset{i}{\bigvee} \tau_i \quad H = \underset{i}{\bigvee} (H_{v_i} \cdot H_{y_i} \cdot H_{e_i}) \quad \Gamma \vdash^{\textbf{WF}} (\tau, \;H)   
            \end{gathered}
        }{
            \Gamma \vdash (\texttt{match}\;v\;\texttt{with}\;\overline{d_i\overline{y_j} \rightarrow e_i}): (\tau, \;H)
        }
        \tag{\textsc{TMatch}}
        \label{eq:tmatch}
    \end{equation}

    \begin{equation}
        \frac{
            \Gamma \vdash e : (\tau_1, \; H_1) \quad \Gamma \vdash e : (\tau_2, \; H_2) \quad \Gamma \vdash \tau_1 \;\lor\; \tau_2 = \tau \quad \Gamma \vdash H_1 \;\lor\; H_2 = H \quad \Gamma \vdash^{\textbf{WF}} (\tau, \; H)
        }{
            \Gamma \vdash e : (\tau, \; H)
        }
        \tag{\textsc{TMerge}}
        \label{eq:tmerge}
    \end{equation}

    \begin{equation}
        \frac{
            \begin{gathered}
                \nextd(\Delta, \;x{:}\overa{v: b \;|\; v \prec x \;\land\; \phi} \rightarrow \overline{\tau_i} \rightarrow (\tau, \;\epsilon)) = F \\
                \Gamma, x{:}\overa{v: b \;|\; \phi}, f{:}\under{v: api \;|\; v = F \;|\; v = F} \vdash e : \overline{\tau_i} \rightarrow (\tau, \;H) \\
                A = \{x{:}(b{:}\phi)\} \cup \{a_i{:}(b_i{:}\psi_i) \;|\; \tau_i = a_i{:}\overa{v: b_i \;|\; \psi_i}\} \quad \Gamma \vdash^{\textbf{WF}} (x{:}\overa{v: b \;|\; \phi} \rightarrow \overline{\tau_i} \rightarrow (\tau, \;H), \;\mu F(A)(H))
            \end{gathered}
        }{
            \Gamma \vdash \text{fix}\;f{:}(b \rightarrow \lfloor\kappa\rfloor).\lambda x{:}b.\;e : (x{:}\overa{v: b \;|\; \phi} \rightarrow \overline{\tau_i} \rightarrow (\tau, \;H), \;\mu F(A)(H))
        }
        \tag{\textsc{TFix}}
        \label{eq:tfix}
    \end{equation}

    \begin{equation}
        \frac{
            \begin{gathered}
                \neg(\kappa_x = \pi)
                \quad \Gamma \vdash v_1: (a{:}\overa{v: b \;|\; \phi} \rightarrow \kappa_x, \; H_{v_1}) \quad \Gamma \vdash v_2: (\under{v: b \;|\; \phi \;|\; \psi}, \; H_{v_2}) \\
                \Gamma, x : \kappa_x[a \mapsto v_2] \vdash e: (\tau, \; H_e) \quad \Gamma \vdash^{\textbf{WF}} (\tau, \; H_{v_1} \cdot H_{v_2} \cdot H_e) \\
            \end{gathered}
        }{
            \Gamma \vdash \;\texttt{let}\; x = v_1 \; v_2 \;\texttt{in}\; e: (\tau, \; H_{v_1} \cdot H_{v_2} \cdot H_e)
        }
        \tag{\textsc{TAppMulti}}
        \label{eq:tappmulti}
    \end{equation}
    \caption{Selected type system rules - Part I.}
    \label{fig:type-system-1}
\end{figure}

\begin{figure}[h]
    \begin{equation}
        \frac{
            \begin{gathered}
                \Gamma \vdash v_1: (a{:}\overa{v: b \;|\; \phi} \rightarrow (\tau_{v_1}, \;H_{\tau_{v_1}}), \; H_{v_1}) \quad \Gamma \vdash v_2: (\under{v: b \;|\; \phi \;|\; \psi}, \; H_{v_2}) \\
                \Theta = \{[Y/X] \;|\; new(X) \in H_{\tau_{v_1}} \;\land\; \nextd(\Delta) = Y\} \quad H_{\tau_{v_1}}^{\star} = H_{\tau_{v_1}}(\Theta)[a \mapsto v_2] \\
                \Gamma, x : \tau_{v_1}(\Theta)[a \mapsto v_2] \vdash e: (\tau, \; H_e) \quad \Gamma \vdash^{\textbf{WF}} (\tau, \; H_{v_1} \cdot H_{v_2} \cdot H_{\tau_{v_1}}^{\star} \cdot H_e)
            \end{gathered}
        }{
            \Gamma \vdash \;\texttt{let}\; x = v_1 \; v_2 \;\texttt{in}\; e: (\tau, \; H_{v_1} \cdot H_{v_2} \cdot H_{\tau_{v_1}}^{\star} \cdot H_e)
        }
        \tag{\textsc{TAppLast}}
        \label{eq:tapplast}
    \end{equation}

    \begin{equation}
        \frac{
            \Gamma, x: \under{v: r \;|\; v = X \;|\; v = X} \vdash e : (\tau, \; H_e) \quad \nextd(\Delta) = X \quad \Gamma \vdash^{\textbf{WF}} (\tau, \; \texttt{\small{new}}_r(X) \cdot H_e)
        }{
            \Gamma \vdash \texttt{\small{let}}\;x = \texttt{\small{new}}_r\;()\;\texttt{\small{in}}\;e : (\tau, \;\texttt{\small{new}}_r(X) \cdot H_e)
        }
        \tag{\textsc{TNew}}
        \label{eq:tnew}
    \end{equation}

    \begin{equation}
        \frac{
            \Gamma, x : \under{v: api \;|\; v = F \;|\; v = F} \vdash e : (\tau, \; H_e) \quad \Delta(F) = \tau_{F} \quad \Gamma \vdash^{\textbf{WF}} (\tau, \; \texttt{\small{get}}(F) \cdot H_e)
        }{
            \Gamma \vdash \texttt{\small{let}}\;x = \texttt{\small{get}}\;F\;\texttt{\small{in}}\;e : (\tau, \; \texttt{\small{get}}(F) \cdot H_e)
        }
        \tag{\textsc{TGet}}
        \label{eq:tgetapi}
    \end{equation}

    \begin{equation}
        \frac{
            \begin{gathered}
                \Gamma \vdash \alpha: (\overline{a_i{:}\overa{v: b_i \;|\; \phi_i}} \rightarrow \tau_x, \;H_{\alpha}) \quad \forall i, \; \Gamma \vdash u_i : (\under{v: b_i \;|\; \phi_i \;|\; \psi_i}, \; H_{u_i}) \\
                \Gamma, x{:}\tau_x[\overline{a_i \mapsto u_i}] \vdash e : (\tau, \; H_e) \quad \Gamma \vdash^{\textbf{WF}} (\tau, \;H_\alpha \cdot ( \underset{i}{\bullet} \; H_{u_i} ) \cdot \alpha(\overline{b_i{:}\psi_i}) \cdot H_e)
            \end{gathered}
        }{
            \Gamma \vdash \texttt{\small{let}}\;x = \alpha \; \overline{u_i}\;\texttt{\small{in}}\;e : (\tau, \; H_\alpha \cdot ( \underset{i}{\bullet} \; H_{u_i} ) \cdot \alpha(\overline{b_i{:}\psi_i}) \cdot H_e)
        }
        \tag{\textsc{TLetAction}}
        \label{eq:tletaction}
    \end{equation}
    
    \begin{equation}
        \frac{
            \begin{gathered}
                \Gamma \vdash f: \under{v: api \;|\; \phi \;|\; \psi} \quad \forall i, \; \Gamma \vdash y_i: (\under{v: b_i \;|\; \phi_i \;|\; \psi_i}, \; H_{y_i}) \\ \varnothing \vdash^{\textbf{WF}} \under{v: api \;|\; \phi \;|\; \psi} \quad Api = \{F \;|\; \phi[v \mapsto F] \;\land\; \Delta(F)\downarrow\} \\
                \forall F \in Api, \; \Delta(F) = \tau_F \;\land\; \Gamma, f'{:}\tau_F \vdash f': \overline{a_i{:}\overa{v: b_i \;|\; \phi_i}} \rightarrow (\tau_{F_x}, \; H_{F_x}) \;\land\; \\
                \Gamma \vdash \underset{F \in Api}{\bigvee} \tau_{F_x} = \tau_x \quad \Gamma, x{:}\tau_x\overline{[a_i \mapsto y_i]} \vdash e : (\tau, \; H_e) \quad \Gamma \vdash^{\textbf{WF}} (\tau, \; ( \underset{i}{\bullet} \; H_{y_i} ) \cdot call(\psi; \;\overline{a_i{:}(b_i{:}\psi_i}) \cdot H_e)
            \end{gathered}
        }{
            \Gamma \vdash \texttt{\small{let}}\;x = f\;\overline{y_i}\; \texttt{\small{in}}\;e : (\tau, \; ( \underset{i}{\bullet} \; H_{y_i} ) \cdot call(\psi; \;\overline{a_i{:}(b_i{:}\psi_i})) \cdot H_e)
        }
        \tag{\textsc{TAppAPI}}
        \label{eq:tappapi}
    \end{equation}
    \caption{Selected type system rules - Part II.}
    \label{fig:type-system-2}
\end{figure}

The rule \textsc{TMerge} also performs the disjunction between the two histories assignable to the expression. To emphasize the difference between over-approximated and under-approximated types, it is worth noting how, despite taking the disjunction of both the return type and the behaviour type, the relationship between the new type found via disjunction and the two types that are part of it is reversed: as seen in the auxiliary rules for disjunction, in Coverage Types we take the intersection of the denotations between the two types, because the type found is a subtype of them; while in History Expressions we take the union of the denotations, finding instead a supertype of the two.

The rule previously called \textsc{TApp}, for applying functions, has been split into two separate rules: \textsc{TAppMulti} and \textsc{TAppLast}, the former is used when the function represented by $v_1$ has more than one argument, and the latter when $v_1$ has exactly one argument. This choice is due to having to distinguish the moment when the last argument of a function is applied to a parameter, and thus the latent effect of the function can become active.

In the rule \textsc{TAppMulti} we specify in the premises that the return type of the function, $\kappa_x$ must be different from any pair type $\pi$, and so there will be at least one other parameter of the function. In the rule \textsc{TAppLast}, on the other hand, we require that immediately after the parameter $a$ there must be a pair type; in this case, we will have to make the latent effect $H_{\tau_{v_1}}$ active, i.e., include it in the type of the behaviour of the final expression. To do this, we must first address a problem: this concerns the creation of resources within the latent effects; in fact, if the function were applied a second time, and resources were created within it, the second application would use the same identifiers as the first, but this is wrong! So, we will first need to find all the identifiers that belong to a $new_r$ inside the latent effect, then we go on to create replacements for each of these identifiers, taking new ones.

Here, a new function called \nextd $\space$ comes into play, which takes care of taking the next free identifier according to any ordering of the set of identifiers: on the merits, the set $Rid$ of resources is a finite set (in fact in concrete the resources are always in finite quantity) so it is possible to put it in bi-univocal correspondence with the set of natural numbers $\mathbb{N}$ and therefore it is possible to define on the identifiers an ordering by indexing them. It is good to note, how the ordering that is taken does not matter to us, in fact we can think of the \nextd $\space$ as a function that finds an ordering of the identifiers (a permutation of them) such that if in $\Delta$ there are $n$ identifiers, these are the first $n$ in the ordering and $Y$ is the $n + 1$-th.

Once computed in the set $\Theta$, these substitutions will be applied to both the return type $\tau_{v_1}$ and the latent effect itself, in addition to the traditional substitution of the argument $a$ with the passed parameter $v_2$. 

The \textsc{TNew} rule allows a new resource to be created and associated with a variable. Here too, the procedure \nextd $\space$ is used to ensure that the next free identifier in $\Delta$ is precisely $X$. A Coverage Type with base type $r$ and exactly the same value as $X$ will be associated with $x$.

The \textsc{TGet} rule, on the other hand, allows an external function to be associated with a variable, in which case the identifier associated with the API must have already been defined in $\Delta$ by the programmer himself (intuitively, this is a correct choice since it is he himself who imports the external libraries).

The rule \textsc{TAction} allows the application of an operator involving resources. In fact, the rule is very similar to the \textsc{TOp} found in Appendix \ref{app:type-system}, the only difference being that in this case the event indicating the presence of the action $\alpha$ is also added to the active effect. We note that while the set of qualifiers $\phi_i$ is the one used to match the type of the parameters passed with those of the action arguments; the qualifiers $\psi_i$ are those encapsulated in the event present in the history.

The rule \textsc{TAppApi}, for the application of an external function, requires that the variable $f$ be of base type $api$, and that the Extended Coverage Type associated with it be closed with respect to the empty context; this allows us to be able to compute all the identifiers that satisfy the condition $\phi$, finding the APIs that will definitely be applied. For each of these, we take the associated type from the resource context $\Delta$, associate it with a ghost variable $f'$, and check whether it is possible to type the arguments of this API with the same type as the parameters passed $y_i$. Finally we perform the disjunction on the various return types of each API in the set $API$, and in the active effect to be associated with the expression an event of type $call$ should be added in which we encapsulate the qualifier of all APIs that will be executed ($\psi$) and also those associated with the parameters ($\psi_i$).

The inclusion of type qualifiers in actions and API calls enables a more refined over-approximation. This is because, when inferring the latent effect of functions, if a resource (or value) used in actions (or API calls) is part of the function's arguments, we can use a smaller set of values than those defined in the function's type during its application. This results in a more precise approximation of the function's behaviour (remember how in Coverage Types the type of function arguments is an over-approximation type and in the subtype relation the notion of contravariance on the type of function arguments applies \cite{coverage}). This is the same reason why the syntax for $call$ was also introduced and not just the syntax for the single application of $API$ is used.

\subsection{Operational Semantic}

In Figure \ref{fig:semantic-op} we find the small-step rules of the operation semantics. These do not require comment as they are very self-explanatory. The only notable rule is the \textsc{StAppApi} where an oracle $\tilde{O}$ is used to predict the value returned by an external function, since we do not know the body of the function but only its signature.

\begin{figure}[H]
    \begin{equation}
        \frac{
            op \; \overline{v} \equiv v_y
        }{
            (\texttt{let}\;y = op\;\overline{v}\;\texttt{in}\;e, \; H) \hookrightarrow (e[y \mapsto v_y], \; H \cdot \epsilon)
        }
        \tag{\textsc{StAppOp}}
        \label{eq:stappop}
    \end{equation}
    
    \begin{equation}
        \frac{
            (e_1, \;\epsilon) \hookrightarrow (e_1', \;H_{e_1'})
        }{
            (\texttt{let}\;y = e_1\;\texttt{in}\;e_2, \; H) \hookrightarrow (\texttt{let}\;y = e_1'\;\texttt{in}\;e_2, \; H \cdot H_{e_1}')
        }
        \tag{\textsc{StLetE1}}
        \label{eq:stlete1}
    \end{equation}
    
    \begin{equation}
        \frac{
            \space
        }{
            (\texttt{let}\;y = v\;\texttt{in}\;e, \; H) \hookrightarrow (e[y \mapsto v], \; H \cdot \epsilon)
        }
        \tag{\textsc{StLetE2}}
        \label{eq:stlete2}
    \end{equation}
    
    \begin{equation}
        \frac{
            \space
        }{
            (\texttt{let}\;y = \lambda x{:}t.e_1 \; v_x \;\texttt{in}\; e_2, \; H) \hookrightarrow (\texttt{let}\;y = e_1[x \mapsto v_x]\;\texttt{in}\; e_2, \; H \cdot \epsilon)
        }
        \tag{\textsc{StLetAppLam}}
        \label{eq:stletapplam}
    \end{equation}
    
    \begin{equation}
        \frac{
            \space
        }{
            \begin{gathered}
                (\texttt{let}\;y = \;\texttt{fix}\;f{:}t.\lambda x{:}t_x.e_1 \; v_x \;\texttt{in}\; e_2, \;H) \hookrightarrow \\
                (\texttt{let}\;y = (\lambda f{:}t.e_1[x \mapsto v_x])\;(\texttt{fix}\;f{:}t.\lambda x{:}t_x.e_1) \;\texttt{in}\;e_2, \;H \cdot \epsilon)
            \end{gathered}
        }
        \tag{\textsc{StLetAppFix}}
        \label{eq:stletappfix}
    \end{equation}

    \begin{equation}
        \frac{
            \space
        }{
            (\texttt{match}\;d_i\;\overline{v_j}\;\texttt{with}\;\overline{d_i\;\overline{y_j} \rightarrow e_i}, \; H) \hookrightarrow (e_i[\overline{y_j \mapsto v_j}], \;H \cdot \epsilon)
        }
        \tag{\textsc{StMatch}}
        \label{eq:stmatch}
    \end{equation}

    \begin{equation}
        \frac{
            \alpha \; \overline{v} \equiv v_y
        }{
            (\texttt{let}\;y = \alpha\;\overline{v}\;\texttt{in}\;e, \; H) \hookrightarrow (e[y \mapsto v_y], \; H \cdot \alpha(\overline{v}))
        }
        \tag{\textsc{StAppAction}}
        \label{eq:stappaction}
    \end{equation}

    \begin{equation}
        \frac{
            \nextd(\Delta) = X
        }{
            (\texttt{let}\;y = new_r ()\;\texttt{in}\;e, \; H) \hookrightarrow (e[y \mapsto X], \; H \cdot new_r(X))
        }
        \tag{\textsc{StNew}}
        \label{eq:stnew}
    \end{equation}

    \begin{equation}
        \frac{
            \Delta(F) = \tau_F
        }{
            (\texttt{let}\;y = \texttt{get}\;F\;\texttt{in}\;e, \; H) \hookrightarrow (e[y \mapsto F], \; H \cdot get(F))
        }
        \tag{\textsc{StGet}}
        \label{eq:stget}
    \end{equation}

    \begin{equation}
        \frac{
            \Delta(F) = \tau_F
        }{
            (\texttt{let}\;y = F\;\overline{v_i}\;\texttt{in}\;e, \; H) \hookrightarrow (e[y \mapsto \tilde{O}(F, \;\overline{v_i})], \; H \cdot F(\overline{v_i}))
        }
        \tag{\textsc{StAppApi}}
        \label{eq:stappapi}
    \end{equation}
    \caption{Semantics Operation of $\lambda^{\textbf{TG}}$ extended with History Expressions.}
    \label{fig:semantic-op}
\end{figure}

\subsection{A Case Study}

An example code generated by the language presented above is shown in Figure \ref{fig:exmp3}. Let us assume that we wish to construct a generator of all palindromic words of length less than $n$. Instead of implementing it from scratch, in our resource context we have an external API that generates all palindromes words exactly $n$ long, so we decided to implement our generator in terms of an existing one. Let us therefore assume that we have the following resource context (in reality $\Delta$ will also contain all the resources already present in the reference system, for simplicity and because it is of no interest to us we will not list them all):

\begin{equation}
    \begin{gathered}
        \Delta \equiv \{(F, \;n{:}\overa{v: nat \;|\; \top_{nat}} \rightarrow p{:}\overa{v: string \;|\; \top_{string}} \rightarrow \\
        (\under{v: string \;|\; len(v, \;n) \;\land\; is\_palindrome(v)}, \\
        open(string: v = p) \cdot write(string: v = p, \;string: len(v, \;n) \;\land\; is\_palindrome(v))))\}
    \end{gathered}
\end{equation}

For simplicity, we will assume that we also have polymorphic file primitives in our language $\lambda^{\textbf{TG}}$. In this case since the $F$ API works on file paths we will use primitives with the same name but which take as input a string instead of a resource identifier. This is intuitively correct since a path also uniquely identifies a file (or generally a resource).

Obviously in our language it is more advantageous to work with resource identifiers as we can know whether a resource is referable or not, but with paths this is not possible, we cannot know statically whether a path is correct or not, as our code could dynamically modify folders and files.

The problem, however, is that the API identified by $F$ also takes as a parameter a string representing the absolute path to a file where the function, in addition to returning the generated word, will also write it to this file.

\begin{wrapfigure}{l}{0.65\textwidth}
    \begin{lstlisting}
let palindromes_gen (n: nat) : string =
    let palindrom = ?\typeact{get}? F
    in
    let random = nat_gen ()
    in
    let l = random ?\typeact{mod}? n
    in
    palindrom l "/home/angelopassa/results.txt"
    \end{lstlisting}
    \caption{Generator of \emph{all} palindromic words of length less than $n$.}
    \label{fig:exmp3}
\end{wrapfigure}

However, we would like the function not to be passed sensitive files that should neither be read nor written.

Through the typing rules presented in this Section, we can assign the following type to the function \verb|palindromes_gen| in Figure \ref{fig:exmp3}:

\vspace{25pt}

\begin{equation}
    \begin{gathered}
         n{:}\overa{n: nat \;|\; \top_{nat}} \rightarrow (\under{v: string \;|\; is\_palindrome(v) \;\land\; \forall m. \;len(v, \;m) \Longrightarrow m < n}, \\
         get(F) \cdot F(n{:}(nat: v < n), \;p{:}(string: v = "/home/angelopassa/results.txt")))
    \end{gathered}
\end{equation}

So as mentioned, we would like to check that no strings representing sensitive file paths, such as \verb|/etc/shadow|, are passed to the $F$ API. To do this, we simply define a new policy as follows (denoting by $H$ the latent effect of the function \verb|palindromes_gen|):

\begin{equation}
    \forall \eta \in \llbracket H \rrbracket. \;\forall n{:}nat. \;\forall p{:}string. \;F(n, \;p) \in \eta \Longrightarrow \neg(p = "/etc/shadow")
\end{equation}

We can reuse the same logical formula to define the same type of policy for each of the sensitive files that concern us.

\subsection{Soundness}

Surprisingly, in this part we will not present any denotation of either the full type pair or the Extended Coverage Types. As mentioned above, there is no need to introduce new denotations since structurally it comes easy to separate types as needed. In the following we will therefore present only three theorems that in the aggregate will allow us to be able to assert the correctness of our type system:
\begin{enumerate*}[label=(\roman*)]
    \item First we will show that the correctness of the original type system remains unchanged in this extension. In essence, Theorem 4.3 presented in the original work \cite{coverage} is preserved.
    \item Next we will show that the qualifier $\psi$ presented in the Extended Coverage Types actually represents \textbf{\emph{all and alone}} the values returned by the expression, or function as the case may be.
    \item Finally, we present the type system correctness demonstration for the part concerning History Expressions.
\end{enumerate*}

\begin{theorem}[Correctness of Extension on Coverage Types]
    For each expression $e$, if one can infer a type for $e$ through the type system presented, i.e. $\Gamma \vdash e : (\under{v: b \;|\; \phi \;|\; \psi}, \;H)$, then through the original type system of $\lambda^{\textbf{TG}}$ it is possible to infer $\Gamma_{\phi} \vdash e : \under{v: b \;|\; \phi}$.
\end{theorem}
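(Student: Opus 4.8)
The plan is to proceed by induction on the derivation of the extended-system judgment $\Gamma \vdash e : (\under{v: b \;|\; \phi \;|\; \psi}, H)$. Since sub-derivations may produce function types, I would first generalise the statement so the induction closes: for every $e$ and every function type $\kappa$ (which subsumes pair types $\pi = (\tau, H)$), if $\Gamma \vdash e : \kappa$ holds in the extended system then $\Gamma_\phi \vdash e : \kappa_\phi$ holds in the original $\lambda^{\textbf{TG}}$ of \cite{coverage}, where $\kappa_\phi$ is the Context-Changes-$\phi$ operation applied (pointwise) to the type on the right of the turnstile — it erases every History Expression and every second qualifier $\psi$. The theorem is the instance $\kappa = (\under{v:b \;|\; \phi \;|\; \psi}, H)$, for which $\kappa_\phi = \under{v:b \;|\; \phi}$.

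Before the case analysis I would record four routine ``projection commutes'' lemmas. (i) Well-formedness is preserved: $\Gamma \vdash^{\textbf{WF}} \kappa$ implies $\Gamma_\phi \vdash^{\textbf{WF}} \kappa_\phi$ in the original system — immediate, since \textsc{WfBase} restricted to its $\phi$-part is the original well-formedness rule and \textsc{WfHistory}/\textsc{WfHistExp} only constrain the $H$ component. (ii) Subtyping is preserved: $\Gamma \vdash \pi_1 <: \pi_2$ implies $\Gamma_\phi \vdash (\pi_1)_\phi <: (\pi_2)_\phi$ — because the first premise of \textsc{SubUBase} is literally the original \textsc{SubUBase} premise on $\phi_1,\phi_2$, while \textsc{SubOBase} and \textsc{SubArr} are unchanged up to context projection. (iii) Projection commutes with the disjunction and conjunction operators, $(\tau_1 \vee \tau_2)_\phi = (\tau_1)_\phi \vee (\tau_2)_\phi$ and dually for $\wedge$ — read off from the first premise of each \textsc{Disjunction}/\textsc{Conjunction} rule, which is exactly the original operator on the $\phi$-parts. (iv) Projection commutes with value and identifier substitution, $(\kappa[a \mapsto v])_\phi = \kappa_\phi[a \mapsto v]$.

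Armed with these, most cases are mechanical. \textsc{TErr}, \textsc{TConst}, \textsc{TVarBase}, \textsc{TVarFun} map to their homonyms in $\lambda^{\textbf{TG}}$, using $(\text{Ty}(c))_\phi = \text{Ty}(c)$ and the fact that the duplicated qualifier ($v = x$, resp. $\bot$) projects to itself; \textsc{TFun} and \textsc{TFunFlat} both project to (possibly iterated) original \textsc{TFun}, since erasing the $\epsilon$ histories makes the curried types on the two sides of \textsc{TFunFlat} coincide; \textsc{TSub} and \textsc{TEq} project to original \textsc{TSub}/\textsc{TEq} via (i)–(ii); \textsc{TMerge} projects to original \textsc{TMerge} via (iii); \textsc{TMatch} projects to one application of the original per-branch \textsc{TMatch} for each $e_i$ followed by a cascade of original \textsc{TMerge}, using (iii) and the observation that the coercion $\langle\cdot\rangle$ and the equality $\langle\tau_{v_i}\rangle = \langle\tau_{d_i}\rangle$ collapse to the original scrutinee-matching check once the $\psi$-components are stripped; \textsc{TAppMulti} and \textsc{TAppLast} project to original \textsc{TApp}, using (iv) to push the projection through $\kappa_x[a\mapsto v_2]$, dropping the $\psi$ in the argument's type, and noting that \textsc{TAppLast}'s latent-effect activation and the $\Theta$-renaming of resource identifiers are invisible to $(\cdot)_\phi$ (they touch only $H$ and resource-valued positions, absent in the $\lambda^{\textbf{TG}}$ fragment). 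For the term forms with no $\lambda^{\textbf{TG}}$ counterpart — \textsc{TNew}, \textsc{TGet}, \textsc{TLetAction}, \textsc{TAppAPI} — the statement is to be understood either as restricted to the common fragment (so these rules never occur), or, keeping the full language, by reading ``the original type system'' as $\lambda^{\textbf{TG}}$ augmented with trivial rules treating $\texttt{new}_r()$, $\texttt{get}\,F$, an action application and an API application as opaque $\texttt{let}$-bindings of constants of the appropriate base type; under either reading these cases reduce to the original \textsc{TLetE}.

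The one delicate case, and what I expect to be the main obstacle, is \textsc{TFix}. There the extended rule types the body $e$ in a context where the recursive name $f$ carries the \emph{base} type $\under{v:api \;|\; v = F \;|\; v = F}$ rather than its own arrow type, stashing the real ($\epsilon$-effect) signature of the function in $\Delta$ under the fresh identifier $F$; the original \textsc{TFix} instead places the arrow type for $f$ directly in the context. To bridge this I would prove a de-indirection lemma: any extended derivation of $\Gamma, f{:}\under{v:api \;|\; v = F \;|\; v = F} \vdash e : \sigma$ with $\Delta(F) = \sigma_F$ can be rewritten into a derivation of $\Gamma, f{:}\sigma_F \vdash e : \sigma$, by replacing each elimination of $f$ — which, by the well-formedness side-conditions of \textsc{WfHistExp} forbidding recursive identifiers to leak outside their body, can only be a \textsc{TAppAPI} on $f$ whose effect is precisely to look up $\Delta(F)$ and check the arguments against it — with the corresponding ordinary application step. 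Composing this lemma with the inductive hypothesis and then projecting with $(\cdot)_\phi$ yields exactly the premises of the original \textsc{TFix}; the recursion guard $v \prec x$ appears verbatim in both rules and needs no further attention.
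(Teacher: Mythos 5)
Your proposal is correct and follows essentially the same route as the paper's own proof: a rule-by-rule induction showing that each extended typing rule projects under $(\cdot)_\phi$ onto its original $\lambda^{\textbf{TG}}$ counterpart, with \textsc{TSub}/\textsc{TEq} handled by noting the new subtyping premises only strengthen the original one, and \textsc{TFix} singled out as the delicate case. You are in fact more explicit than the paper in two places — the de-indirection lemma replacing the $api$-typed recursive name by the arrow type stored in $\Delta$, and the treatment of the resource-specific rules (\textsc{TNew}, \textsc{TGet}, \textsc{TLetAction}, \textsc{TAppAPI}) that have no literal counterpart in the original system — both of which the paper handles only informally.
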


\begin{proof}
    Let us proceed rule by rule:
    \begin{itemize}
        \item For the rules \textsc{TErrr}, \textsc{TConst}, \textsc{TOp}, \textsc{TAction}, \textsc{TVarBase} and \textsc{TVarFun} the correspondence with the original rules is one-to-one. For \textsc{TErr} and \textsc{TVarBase} the demonstration is immediate. For rules using the function $Ty$ the proof depends on the construction of the function, but we have already mentioned how the qualifier $\phi$ is simply replicated a second time, consequently the theorem holds in these cases as well. For the rule \textsc{TVarFun}, the filtering operator $\phi$ applied to the context $\Gamma$, simply eliminates the histories and the $\psi$ qualifier while unchanging the contained variables, the Coverage Types qualifier and the function types qualifier.
        \item Also for the rules \textsc{TSub} and \textsc{TEq} the correspondence with the original rules is bi-univocal, here the premises are structurally identical, only the relation $<:$ that has been redefined changes. In the relation, however, only two new premises have been added, one on the denotation of the History Expressions and the other on the equality of the second qualifiers $\psi$. In essence, it is trivially true that:
        \begin{equation}
            \begin{gathered}
                \llbracket \under{v: b \;|\; \phi_1} \rrbracket_{\Gamma_{\phi}} \subseteq \llbracket \under{v: b \;|\; \phi_2} \rrbracket_{\Gamma_{\phi}} \;\land\; \llbracket H_1 \rrbracket_{\Gamma_{\psi}} \subseteq \llbracket H_2 \rrbracket_{\Gamma_{\psi}} \;\land\; \llbracket \under{v: b \;|\; \psi_1}\rrbracket_{\Gamma_{\psi}} = \llbracket \under{v: b \;|\; \psi_2} \rrbracket_{\Gamma_{\psi}} \\
                \Longrightarrow \\
                \llbracket \under{v: b \;|\; \phi_1} \rrbracket_{\Gamma_{\phi}} \subseteq \llbracket \under{v: b \;|\; \phi_2} \rrbracket_{\Gamma_{\phi}}
            \end{gathered}
        \end{equation}
        \item The rules \textsc{TFun} and \textsc{TFunFlat} do not alter the premises, moreover by eliminating the effects the rule \textsc{TFunFlat} is identical to the former.
        \item Even the rule \textsc{TMatch}, short of universal quantification on each branch, does not alter the premises, for although the type of $v_i$ and $d_i(\overline{y})$ are different, the addition of the condition using the operator $\langle \cdot \rangle$ places the constraint on the equality of the first qualifiers ($\phi$). Disjunction, as with the subtyping relationship, is also not altered for Coverage Types and only constraints are added to the original ones. The only difference is the fact that while in our type system it is possible to type a \verb|pattern-matching| in its entirety with one rule taking into account each branch, in the original one it is necessary to apply cascading \textsc{TMerge} on each pair of branches.
        The \textsc{TMerge} rule also does not alter the assumptions about the original Coverage Types.
        \item All rules on \verb|let-binding| constructs do not alter the structure of the original rules since the type of the function argument is always matched with the first qualifier of the type of the parameter passed, and the same substitutions are made in the contexts.
        \item The last rule, the \textsc{TFix}, is the most cumbersome to prove. Essentially, we add premises concerning History Expressions that are out of our interest in this demonstration, and then in place of encapsulating the expression $e$ in a $\lambda$ that takes as its argument the recursive function itself \cite{coverage}, we add to the environment the first parameter of the recursion $a$ and a variable $f$ of type $api$ that is associated via the resource context $\Delta$ with the very type we wish to infer. We note how we have thus only changed the way we perform the same procedure that leads us to the same result.
    \end{itemize}

    The same property applies for the typing of a function\footnote{We need to apply the operator $\phi$ to the entire function type - and not just remove the qualifier $\psi$ from the return type - since there may be other functions in the parameters.}:
    \begin{equation}
        \Gamma \vdash e : (\overline{\tau_i} \rightarrow (\under{v: b \;|\; \phi \;|\; \psi}, \;H_f), \;H_e) \Longrightarrow \Gamma_{\phi} \vdash e : (\overline{\tau_i} \rightarrow \under{v: b \;|\; \phi \;|\; \psi})_{\phi}
    \end{equation}
\end{proof}

\begin{theorem}[Correctness of Extended Coverage Types]
    For each expression $e$, if $\Gamma \vdash e : (\under{v: b \;|\; \phi \;|\; \psi}, \;H)$ then $\psi$ will represent \textbf{\emph{all and only}} the values returned by $e$.
\end{theorem}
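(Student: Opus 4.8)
The statement is the ``$\psi$-analogue'' of the theorem of Section~\ref{sec:tuple_type}: there it was shown that a $\lambda^{\textbf{TG}}$ derivation that never applies \textsc{TSub} and never drops a branch of a \texttt{match} produces a Coverage Type whose qualifier is \emph{all and only}; here the second qualifier of an Extended Coverage Type is, by design, exactly ``the qualifier one \emph{would} obtain under those restrictions'', so the plan is to make this precise and push it through the new rules. First I would fix the semantic reading: ``$\psi$ represents all and only the values returned by $e$'' means that under every substitution $\sigma$ closing $\Gamma$ consistently (each $\overa{\cdot}$-typed parameter mapped into its $\phi$-set, each $\under{\cdot}$-typed variable into its $\psi$-set), the set of values to which $e\sigma$ reduces under the semantics of Figure~\ref{fig:semantic-op} --- over all nondeterministic choices and all oracle behaviours $\tilde O$ --- equals $\{v{:}b \mid \psi\sigma\}$; for a function type the claim is transported to the return-type qualifier, as before. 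Equivalently $\llbracket \under{v{:}b \mid \psi} \rrbracket_{\Gamma_\psi}$ is the Coverage-Type denotation whose value-set is exactly the reduct-set of $e$, and since (for closed, well-formed qualifiers) the denotation determines the value-set, equality of denotations is equality of reduct-sets.

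I would then proceed by induction on the derivation $\Gamma \vdash e : (\under{v{:}b \mid \phi \mid \psi}, H)$, discarding the effect component throughout (the clause $(\tau,H)_\psi \equiv \tau_\psi$ of the $\psi$-context transformation lets us work entirely with Coverage-Type denotations). The leaf rules \textsc{TErr}, \textsc{TConst}, \textsc{TAction}, \textsc{TVarBase}, \textsc{TVarFun} set $\psi$ to the exact reduct-set by construction --- $\bot$ for the diverging \texttt{err}, $v=c$ for a constant, $v=x$ for a variable, the $\phi$ of $\text{Ty}$ replicated for operators --- so exactness is immediate by Lemma~\ref{lemma:ty}. The rules \textsc{TFun} and \textsc{TFunFlat} merely push the obligation onto the body. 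The \texttt{let}- and application-rules (\textsc{TAppMulti}, \textsc{TAppLast}, \textsc{TLetAction}, \textsc{TNew}, \textsc{TGet}, \textsc{TAppAPI}, and \textsc{TLetE} from the appendix) all give the result the qualifier of the continuation $e$ typed in a context where the newly bound variable carries the $\psi$ of its defining sub-expression, which is exact by the induction hypothesis; the usual substitution/composition argument --- value-set denotations commute with the substitutions $[a\mapsto v_2]$, $[\overline{a_i\mapsto u_i}]$, $\Theta$ --- then transfers exactness to the whole term, exactly as in the proof of the Section~\ref{sec:tuple_type} theorem. \textsc{TMerge} types the \emph{same} $e$ twice, so both $\psi_1,\psi_2$ are exact, hence denotationally equal, hence so is their \textsc{Disjunction Under} combination $\psi_3$.

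The only genuinely new cases are the two rules whose design was motivated in Section~\ref{sec:overview}. For \textsc{TSub}, the premise's $\psi_1$ is exact by the induction hypothesis and \textsc{SubUBase} requires $\llbracket\under{v{:}b \mid \psi_1}\rrbracket_{\Gamma_\psi} = \llbracket\under{v{:}b \mid \psi_2}\rrbracket_{\Gamma_\psi}$ --- that is, the $\psi$-projection of a subtyping step is a denotational identity --- so the conclusion's $\psi_2$ is exact too (\textsc{TEq} is handled identically by antisymmetry). For \textsc{TMatch}, the extended rule forces \emph{every} branch to be typed, so $\tau=\bigvee_i\tau_i$ and the $\psi$-line of \textsc{Disjunction Under} gives $\llbracket\under{v{:}b \mid \psi}\rrbracket = \bigcap_i\llbracket\under{v{:}b \mid \psi_i}\rrbracket$, i.e. value-set $\bigcup_i\{v{:}\psi_i\}$; operationally \textsc{StMatch} reduces the \texttt{match} to $e_i[\overline{y\mapsto v}]$ whenever the scrutinee takes the shape $d_i$, so the reduct-set of the \texttt{match} is the union of the per-branch reduct-sets, which the induction hypothesis on each $e_i$ (together with the branch-refinement premise $\langle\tau_{v_i}\rangle=\langle\tau_{d_i}\rangle$, which keeps each branch's operational contribution and its $\psi_i$ in step) identifies with $\bigcup_i\{v{:}\psi_i\}=\{v{:}\psi\}$. \textsc{TFix} follows the route of the Section~\ref{sec:tuple_type} proof: pre-populate the context with the parameter types, treat the recursive identifier as the rule prescribes, and reason on the body --- the return-type qualifier stays exact because the decreasing-measure discipline inherited from $\lambda^{\textbf{TG}}$ guarantees the unfolding terminates, so no recursive unrolling is silently discarded.

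I expect the main obstacle to be twofold and, in both halves, essentially the same as in the Section~\ref{sec:tuple_type} theorem, only now carrying the extra $\psi$ component. First, phrasing ``the values returned by $e$'' cleanly for open terms in the presence of nondeterministic generators and the API oracle $\tilde O$, and proving the accompanying substitution lemma (that the value-set of a Coverage-Type qualifier is stable under the argument substitutions performed in \textsc{TAppMulti}, \textsc{TAppLast}, \textsc{TLetAction} and \textsc{TAppAPI}) with enough care that the composition steps go through. Second, the \textsc{TFix} case: although it mirrors the earlier proof, one must verify that the new side-conditions --- the ghost \texttt{api} typing $f{:}\under{v{:}api \mid v=F \mid v=F}$, the $\nextd$-freshness of $F$, and the well-formedness obligations on $\mu F(A)(H)$ --- leave the $\psi$-projection of the body's derivation untouched, so that exactness of the return-type qualifier is neither strengthened nor weakened by the recursion machinery.
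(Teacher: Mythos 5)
Your proposal is correct and follows essentially the same route as the paper: induction on the typing derivation, anchored on Lemma~\ref{lemma:ty} for the leaves, the $\psi$-denotation equality in \textsc{SubUBase} for \textsc{TSub}/\textsc{TEq}, the all-branches requirement of \textsc{TMatch} combined with \textsc{Disjunction Under}, and propagation through the \texttt{let}-binding and application rules via the exactness of the bound variable's $\psi$ in the extended context. Your version is somewhat more explicit than the paper's (precise semantic reading via closing substitutions, the substitution lemma, the operational correspondence for \textsc{TMatch}), but these are elaborations of the same argument rather than a different approach.
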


\begin{proof}
    As we have already demonstrated, the function $Ty$ returns a Coverage Type (constants), or a function with a Coverage Type as return type (operators), which respect this property. In fact, the qualifier $\phi$ will always be identical to $\psi$, the latter being merely replicated to adapt to the new language type.
    For this demonstration, we can proceed inductively on the type system rules shown:
    \begin{itemize}
        \item For the rules \textsc{TConst}, \textsc{TAction}, \textsc{TOp} and \textsc{TErr} the demonstration is immediate from the above.
        \item For the rule \textsc{TVarFun}, the property is satisfied since the type of the variable $x$, already present in the context, is assumed to satisfy the property, and this type is the same as that present in the rule's consequences.
        \item Also for the rule \textsc{TVarBase} the property is satisfied since even if we add the same qualifier $v = x$ in all two positions of the type, when computing the denotation of the type of $x$ we will take the context $\Gamma$ by filtering the types through the second qualifier $\psi$, which by definition represents \emph{all and only} the values returned by the expression.
        \item The rule \textsc{TSub} retains the property since, if $\pi$ respects it, according to the relation $<:$ defined in the auxiliary rules, the denotation of the qualifiers $\psi$ also remains unchanged in $\pi'$, which will retain the property. The rule \textsc{TEq} does not alter the semantics of the entire type pair, so the property is retained.
        \item For the rule \textsc{TMatch}, by inductive assumption each branch $e_i$ will have associated an Extended Coverage Type (or a function that returns it), by performing the disjunction this will also be performed on the second qualifier $\psi$, thus guaranteeing for the \verb|pattern-matching| a type that also indicates \emph{all and only} the returned values.
        By the \textsc{TMerge} rule, by inductive hypothesis both $\tau_1$ and $\tau_2$ will respect the property, i.e. the qualifiers in the second field ($\phi$) will be semantically identical. Applying the disjunction between them takes the intersection between two sets (the denotations of $\tau_1$ and $\tau_2$) that are equal. The new qualifier $\phi$ belonging to $\tau$ is practically identical.
        \item The rule, \textsc{TFix}, has the same behaviour as the rules on functions already seen, essentially the type $\tau$ in the consequences is the same as in the premises, and a Coverage Type, $f$, is added to the context $\Gamma$, which trivially respects the property.
        \item The remaining rules are all on \verb|let-bindings| constructs, so the type $\tau$ in the premises is the same in the consequences. Furthermore, in rules such as \textsc{TNew} and \textsc{TGet} Coverage Types respecting the property are added to the context $\Gamma$; in the remaining ones, which are essentially applications of functions, the type of the passed parameters is an extended Coverage Type respecting the property (being in the premises), so replacing this in the return type of the function added to $\Gamma$ with the variable $x$ does not alter the property.
    \end{itemize}

    The same theorem also applies to the typing of a function: $\Gamma \vdash e : (\overline{\tau_i} \rightarrow (\under{v: b \;|\; \phi \;|\; \psi}, \;H_f), \;H_e)$, where $\psi$ represents \emph{\textbf{all and only}} the values returned by the function, based also on the parameters $\tau_i$.
\end{proof}

\begin{theorem}[Correctness of History Expressions]
    For every expression $e$, if $\Gamma \vdash e : (\tau, \;H)$ then for every $\eta$ such that $(e, \;\epsilon) \hookrightarrow^* (v, \;\eta)$ it holds that $\eta \in \llbracket H \rrbracket_{\Gamma}$.
\end{theorem}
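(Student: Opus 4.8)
The plan is to reduce the statement to a single-step \emph{history-preservation} lemma and iterate it along the reduction $(e,\epsilon) \hookrightarrow^* (v,\eta)$. Since the denotation of a History Expression is itself defined by a reduction on triples $(\Omega,\Upsilon,H^{\uparrow})$, the right invariant is stated relative to such a pair: I would carry, alongside the operational configuration, the set $\Omega$ of identifiers already allocated by \textsc{StNew} together with a list $\Upsilon$ of pending recursive unfoldings, and write ${\llbracket H \rrbracket}^{(\Omega,\Upsilon)}_{\Gamma}$ for the terminal traces reachable from $(\Omega,\Upsilon,H^{\uparrow})$, so that ${\llbracket H \rrbracket}_{\Gamma} = {\llbracket H \rrbracket}^{(\varnothing,\varnothing)}_{\Gamma}$. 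The preservation lemma then reads: if $\Gamma \vdash e : (\tau, H)$ and one step emits the concrete fragment $H_s$ and updates the bookkeeping to $(\Omega',\Upsilon')$, then $\Gamma \vdash e' : (\tau', H')$ for some $\tau'$ and $\{\, H_s \cdot \eta' \mid \eta' \in {\llbracket H' \rrbracket}^{(\Omega',\Upsilon')}_{\Gamma} \,\} \subseteq {\llbracket H \rrbracket}^{(\Omega,\Upsilon)}_{\Gamma}$. Since every value carries a history whose denotation is $\{\epsilon\}$ — for \texttt{fix} this uses the $\mu$-rule of Figure~\ref{fig:denot_history}, which rewrites $\mu F(A)(H)$ to $\epsilon$ while recording $\{H/F\}$ in $\Upsilon$ — an induction on the length of the chain yields $\eta \in {\llbracket H \rrbracket}_{\Gamma}$. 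Before the case analysis I would commute \textsc{TSub}, \textsc{TEq}, \textsc{TMerge} and \textsc{TFunFlat} to the root of the typing derivation, as each only shrinks, reorders or unions history denotations and hence respects the inclusion, and I would maintain a well-typed substitution closing the free variables of the term against $\Gamma$ so that it is genuinely closed and reducible.

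The lemma proceeds by case analysis on the operational rule and rests on three auxiliary facts. First, a \emph{sequential-composition lemma}: ${\llbracket H_1 \cdot H_2 \rrbracket}^{(\Omega,\Upsilon)}_{\Gamma}$ is exactly the set of $\eta_1 \cdot \eta_2$ with $\eta_1$ a terminal trace of $H_1$ from $(\Omega,\Upsilon)$ and $\eta_2$ a terminal trace of $H_2$ from the bookkeeping left by $\eta_1$ — this is what the two ``$\eta \cdot H$'' reduction rules of Figure~\ref{fig:denot_history} encode, and it handles the congruence rule \textsc{StLetE1} and the accumulation of fragments across nested \texttt{let}s. Second, a \emph{substitution/binding lemma}: if $\Gamma \vdash v : \tau_x$ then ${\llbracket H[x\mapsto v] \rrbracket}^{(\Omega,\Upsilon)}_{\Gamma} \subseteq {\llbracket \bind(x{:}\tau_x, H) \rrbracket}^{(\Omega,\Upsilon)}_{\Gamma} = {\llbracket H \rrbracket}^{(\Omega,\Upsilon)}_{x{:}\tau_x,\Gamma}$, whose crucial ingredient is the already-established fact that the second qualifier $\psi$ of an Extended Coverage Type captures \emph{all and only} the values returned by $v$, so that the existential introduced by \bind\ is witnessed by $v$ and no trace is lost; this disposes of \textsc{StLetE2}, \textsc{StLetAppLam}, \textsc{StAppOp} and \textsc{StMatch} (for matching one also uses that \textsc{TMatch} joins the branch histories disjunctively and that the denotation of $+$ contains each disjunct), as well as the argument substitutions in the application rules. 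Third, \emph{resource-identifier bookkeeping}: that the $X$ chosen by $\nextd(\Delta)$ in \textsc{TNew}, \textsc{TFix}, \textsc{TAppLast} equals the one \textsc{StNew} picks at the matching step, that the renaming $\Theta$ applied in \textsc{TAppLast} reproduces exactly the fresh identifiers \textsc{StNew} would choose inside the callee, and that the $\uparrow/\Omega$ mechanism adds $X$ to $\Omega$ precisely when $new_r(X)^{\uparrow}$ fires; together these ensure that every action performed at runtime is firable in the denotation, so the side-condition ``$u\in\Omega \lor \Delta(u)\!\downarrow$'' never excludes a trace that actually occurs.

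The two hard cases are recursion and external calls. For \textsc{TFix}, the effect is recorded inside $\mu F(A)(H)$ while $F$ is registered in $\Delta$ with latent effect $\epsilon$; matching this against \textsc{StLetAppFix}, which peels one recursion layer per step, requires showing that the denotation's handling of $\mu$ — pushing $\{H/F\}$ onto $\Upsilon$ and replaying it at each $F(\cdots)^{\uparrow}$ — simulates every unfolding, with the binding (via \bind) of the actual arguments at each call aligned with the operational argument substitution, so that $k$ operational unfoldings correspond to exactly $k$ uses of the $\Upsilon$-replay rule; the decreasing-first-argument discipline of $\lambda^{\textbf{TG}}$ is what makes the two sides terminate in step. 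For \textsc{TAppApi} against \textsc{StAppApi} the subtlety is that the operational step emits only the atomic event $F(\overline{v_i})$, whereas the denotation of the inferred $call(\psi;\cdots)$ additionally expands the callee's bound latent effect $H_F^{\star}$; reconciling the two amounts to reading the oracle $\tilde{O}$ as itself constrained by $\Delta(F)$ — any trace the external code would contribute lies in ${\llbracket H_F^{\star} \rrbracket}$ — and invoking the well-formedness restriction that APIs and recursive bodies allocate no resources, so that $\Omega$ is untouched and the expansion is a sound over-approximation of the opaque external behaviour. I expect this external-call reconciliation, together with pinning down the precise invariant linking the operational state to the pair $(\Omega,\Upsilon)$, to be where most of the technical effort lies.
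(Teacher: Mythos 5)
Your proposal is sound and reaches the same conclusion, but by a genuinely different route. The paper proceeds by induction on the typing derivation, arguing rule by rule (\textsc{TConst} through \textsc{TAppAPI}) that the trace the operational semantics can emit lies in the denotation of the inferred effect; you instead induct on the length of the reduction sequence via a one-step preservation lemma, carrying the $(\Omega,\Upsilon)$ bookkeeping of the denotation's own reduction relation alongside the operational configuration. The ingredients are the same — the $\epsilon$-denotation of values, the $\mu/\Upsilon$ replay for recursion, monotonicity of the denotation under \textsc{TSub}/\textsc{TMerge}, and crucially the prior theorem that the second qualifier $\psi$ captures \emph{all and only} the returned values, which is exactly what the paper invokes for \textsc{TLetAction} and what your \bind/substitution lemma needs. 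What your formulation buys is explicitness on points the paper elides: the sequential-composition behaviour of $\llbracket H_1 \cdot H_2 \rrbracket$ across \textsc{StLetE1}, the alignment of \nextd\ between \textsc{TNew}/\textsc{TAppLast} and \textsc{StNew}, and above all the \textsc{StAppApi} case, where the operational step records only the atomic event $F(\overline{v_i})$ while the denotation of $call(\psi;\cdots)$ expands the callee's latent effect $H_F^{\star}$ — the paper declares the inclusion $\llbracket F(\overline{v_i}) \rrbracket \subseteq \llbracket call(\psi;\cdots) \rrbracket$ ``easy to verify,'' but it only holds under something like your reading of the oracle $\tilde{O}$ as constrained by $\Delta(F)$ (or when $H_F$ is trivial), so your treatment of that case is the more honest one. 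The cost of your approach is the extra invariant relating the runtime state to $(\Omega,\Upsilon)$, which the paper avoids only by never making the per-step accumulation of history explicit.
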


\begin{proof}
    As we have already done, we proceed inductively on the rules of the type system, showing that if the property applies to expressions in the premises, it also applies to those in the consequences:
    \begin{itemize}
        \item For the rules \textsc{TConst}, \textsc{TOp}, \textsc{TAction}, \textsc{TVarBase} and \textsc{TVarFun} the verification is immediate as there is no reduction in the rules in Figure \ref{fig:semantic-op} and starting from a history $\epsilon$ this is also maintained in the static semantics.
        \item For the rule \textsc{TSub}, assuming for the sake of argument that the History Expression, which we will call $H$, present in $\pi$ is correct, the subtype relation $<:$ simply enlarges the denotation of H by changing it to an H', formally it holds that:
        \begin{equation}
            \forall \;\Gamma, \;H, \;H', \;\eta. \;\llbracket H \rrbracket_{\Gamma_{\psi}} \subseteq \llbracket H'\rrbracket_{\Gamma_{\psi}} \;\land\; \eta \in \llbracket H \rrbracket_{\Gamma_{\psi}} \Longrightarrow \eta \in \llbracket H' \rrbracket_{\Gamma_{\psi}}
        \end{equation}
        The rule \textsc{TEq} does not alter the type denotation.
        \item For the rules \textsc{TFun} and \textsc{TFunFlat}, the argument is identical to that already made for the first point of the demonstration. Even for the recursive function statement alone (\textsc{TFix}) the empty history $\epsilon$ is the one generated this because according to the reduction rules in Figure \ref{fig:denot_history} the construct $\mu$ adds the appropriate substitution to $\Upsilon$, but reduces to $\epsilon$.
        \item For the rule \textsc{TMatch}, $v$ is a value, consequently as already demonstrated $\epsilon \in \llbracket H_{v_i} \rrbracket_{\Gamma_{\psi}}$ for each $i$. The same applies to the application of the constructors $d_i$ which, being part of the operators, their application will lead to the application of \textsc{TSub} to the empty history only. The disjunction leads to a situation in which for each $i$ it is true that $\llbracket H_{v_i} \cdot H_{y_i} \cdot H_{e_i} \rrbracket_{\Gamma_{\psi}} \subseteq \llbracket H \rrbracket_{\Gamma_{\psi}}$. Having already noted that $\epsilon \in \llbracket H_{v_i} \rrbracket_{\Gamma_{\psi}}$ and $\epsilon \in \llbracket H_{y_i} \rrbracket$ we can affirm that $\llbracket \epsilon \cdot H_{e_i} \rrbracket_{\Gamma_{\psi}} \subseteq \llbracket H \rrbracket_{\Gamma_{\psi}}$ for each $i$. To conclude, the operational rule \textsc{StMatch} reduces in one of the branches $e_i$ producing as effect $\epsilon$ (starting from $H = \epsilon$). But by inductive hypothesis $H_{e_i}$ is a correct over-approximation of $e_i$, and by eliminating the $\epsilon$ through the equality axioms presented in Figure \ref{fig:hist_eq} we arrive at stating that $\llbracket H_{e_i} \rrbracket_{\Gamma_{\psi}} \subseteq \llbracket H \rrbracket_{\Gamma_{\psi}}$ for each $i$. We conclude by noting that one of $H_{e_i}$ is the correct over-approximation of the expression, but each of these belongs to $H$. The \textsc{TMerge} rule is actually similar to the \textsc{TSub} rule, in fact since both $H_1$ and $H_2$ are two correct over-approximations, the disjunction will take the larger of the two.
        \item Now we focus on the rules \textsc{TAppMulti} and \textsc{TAppLast} and $v_1$ will be either a normal function or a recursive function. Accordingly, we will take the rules \textsc{StLetAppLam} and \textsc{StLetAppFix} as reference in operational semantics. The same reasoning presented for the rule \textsc{TMatch} is used here, whereby the effects associated with $v_1$ and $v_2$ will surely contain in their denotations $\epsilon$. If the parameter $v_2$ applied is not the last of the function $v_1$, it will mean that the expression within the $\lambda$-abstraction is another function, and being a value it will produce no effect (relevant, $\epsilon$ excluded) by proceeding in the reduction immediately with $e_2$ (\textsc{StLetAppLam} or \textsc{StLetAppFix} and then \textsc{StLetE2}). For the rule \textsc{TAppMulti} we can end here since the effect $H_e$ will be correct by inductive hypothesis. By ending with the \textsc{TAppLast}, on the other hand, we can place the latent effect within the type of the behaviour since in this case $e_1$ will not be a value and it will be necessary to apply the rule \textsc{StLetE1} a given number of times. The history generated by the first reduction of $e_1$ until we proceed with $e_2$ will surely belong to the denotation of $H_{\tau_{v_1}}$ by inductive hypothesis.
        \item For the rules \textsc{TNew} and \textsc{TGet} the proof is straightforward since the concatenated history before $H_e$ is exactly the one found in the rules of semantic operation, where the premises on $\Delta$ also match.
        \item For the rule \textsc{TLetAction}, remembering the demonstration previously made on $\epsilon$ which in this case will definitely belong to the denotation of $H_{\alpha}$ and of each $H_{u_i}$ it is necessary to verify that $\llbracket \alpha(\overline{v_i}) \rrbracket_{\Gamma_{\psi}} \subseteq \llbracket \alpha(\overline{b_i{:}\psi_i}) \rrbracket_{\Gamma_{\psi}}$. This is straightforward because we have already proved in the previous theorem that $\psi_i$ represents \emph{all and only} the values into which each $u_i$ could reduce, there will be no others. Consequently, for each $i$, $\psi[v \mapsto v_i]$ holds, and in relation to the rules in Figure \ref{fig:denot_history} we can state that the set-inclusion relation is verified.
        \item Finally, we conclude with the rule \textsc{TAppAPI}. Here we proceed in a mirrored manner to the demonstration on the rule \textsc{TLetAction}, only the relation to be proved changes, which becomes $\llbracket F(\overline{v_i}) \rrbracket_{\Gamma_{\psi}} \subseteq \llbracket call(\psi; \;\overline{a_i{:}(b_i{:}\psi_i})) \rrbracket_{\Gamma_{\psi}}$. But even here, thanks to the reduction relation of the History Expressions and the qualifier $\psi$ of the Extended Coverage Types it is easy to verify that it is satisfied.
    \end{itemize}
\end{proof}

\section{Algorithmic Constructions \& Properties}\label{sec:algorithm}
In the original work on Coverage Types, a number of algorithms for typing were also presented in order to have semantics-independent rules that are easy to implement structurally.

\textcolor{Green}{Type synthesis} and \textcolor{purple}{type check} rules were presented as substitutes for type system rules, and algorithms were also proposed to achieve well-formedness, to check subtyping relationships, and to obtain disjointed types.

\subsection{Auxiliary Typing Algorithms}

The algorithm to obtain well-formedness for History Expressions, to be used in conjunction with \textsc{\textcolor{blue}{Ex}} and \textsc{\textcolor{blue}{Forall}}, does not need to be implemented, it is sufficient to use \bind, as the context-dependent denotation is defined by this exact function.

Even for the \textsc{\textcolor{blue}{Disj}} algorithm, which implements the disjunction, we have already seen how the intersection between the denotations of two History Expressions leads to a third history corresponding to the non-deterministic choice of the previous ones.

\begin{equation}
    H_1 + H_2 = H_3 \Longrightarrow \llbracket H_1 \rrbracket_{\Gamma_{\psi}} \cup \llbracket H_2 \rrbracket_{\Gamma_{\psi}} = \llbracket H_3 \rrbracket_{\Gamma_{\psi}}
\end{equation}

Thus, the only algorithms that need to be implemented from zero for History Expressions concern only the subtyping relation, to be used with the \textsc{\textcolor{blue}{Query}}, and the conjunction \textsc{\textcolor{blue}{Conj}}.

In order to structurally verify the subtyping relationship for histories, it is necessary that the histories themselves comply with general construction rules. We therefore introduce the Normal Form for History Expressions.

\begin{definition}[History Expression in Normal Form]
    A History Expression $H^{\sharp}$ is said to be in Normal Form if it respects the following properties:
    \begin{itemize}
        \item $H^{\sharp} = \underset{i}{\bigoplus} \; H^c_i, \; \forall i. \;H^c_i = \underset{j}{\bullet} \; H^T_j$
        \item $\forall H^c_i, \;\exists H,\; H^c_i = H \cdot \epsilon \;\land\; \epsilon \notin H$
        \item $\forall H^c_i, \;\forall \mu F(\overline{a{:}(b{:}\phi)})(H_{rec}) \in H^c_i, \; \mathcal{NF}(H_{rec}) \;\land\; \exists H', H'', H''', \\
        H^c_i = H' \cdot \mu F(\overline{a{:}(b{:}\phi)})(H_{rec}) \cdot H'' \cdot H''' \;\land\; (H'' = \epsilon \;\lor\; (H'' = call(\psi; \;\overline{c{:}(b{:}\theta})) \;\land\; \phi[v \mapsto F]))$
    \end{itemize}
    With $H^T$ defined as:
    \begin{equation}
        H^T := \epsilon \;|\; \alpha(\overline{r{:}\phi}) \;|\; new_r(X) \;|\; get(F) \;|\; call(\phi; \;\overline{a{:}(b{:}\psi)}) \;|\; \mu F(\overline{a{:}(b{:}\phi)})(H_F)
    \end{equation}
    \label{def:nf}
\end{definition}

That is, 
\begin{enumerate*}[label=(\roman*)]
    \item $H^{\sharp}$ must be written in extended form as a series of deterministic choices within which only the expressions defined in $H^T$ must be present in concatenation.
    \item Next, we find that for each of these concatenations, $\epsilon$ must be present \textbf{only} at the end.
    \item Finally, for each statement of a recursive function within each concatenation, we want the effect associated with the recursive function to be in Normal Form, and for the statement to be shifted as far to the right as possible, so that either it is the penultimate expression, before $\epsilon$, or immediately afterwards there is a potential call to the function itself.
\end{enumerate*}

We can now show the \subhist\ procedure for checking the subtyping relationship between two History Expressions $H_1$ and $H_2$ which must fulfil two conditions:

\begin{itemize}
    \item Being in Normal Form.
    \item Be in well-formedness with respect to the empty context, i.e. all logical formulae within them must not contain free variables.
\end{itemize}

The \subhist\ procedure is shown in Algorithm \ref{alg:subhist}, and is essentially concerned with checking for each concatenation present in $H_1$ whether there is another one belonging to $H_2$, a subtype of the first. The verification of the subtype relation between concatenations is shown in the Algorithm \ref{alg:subconc} called \subconc. Here we proceed by running the two concatenations on each expression in parallel, checking whether the qualifiers in the first one imply the counterparts in the second. In other words, it is checked whether the values satisfying the qualifiers in the expression of the first history are a subset of those satisfying the qualifiers of the second.

It is good to make an observation about the case of subtyping recursive functions: it is not necessary that the identifier of the two declared functions be the same since, as we have seen during the calculation of denotation, the application of a recursive function is not \emph{"saved"}, so essentially we check that, once we have bound all the parameters to the effects of the two functions, the first is subtyped by the second. It is necessary for $H_{res_2}$ and $H_{tail_2}$ to replace the occurrences of $G$ with $F$ otherwise the subtype relation of the $call$ inside may be skipped.

\begin{algorithm}[ht]
    \caption{History Expression Subtyping}\label{alg:subhist}
    \begin{algorithmic}[1]
        \Procedure{\textcolor{Mahogany}{SubHist}}{$H_1$, $H_2$}
            \Match{$H_1$}
                \Case{$H^c$}
                    \For{$H^c_{2_i} \in H_2$}
                        \If{\Call{\textcolor{Mahogany}{SubConc}}{$H^c, \;H^c_{2_i}$}}
                            \State \Return \True
                        \EndIf                       
                    \EndFor
                    \State \Return \False
                \EndCase
                \Case{$H^c + H_{tail}$}
                    \For{$H^c_{2_i} \in H_2$}
                        \If{\Call{\textcolor{Mahogany}{SubConc}}{$H^c, \;H^c_{2_i}$}}
                            \State \Return \Call{\textcolor{Mahogany}{SubHist}}{$H_{tail}$, $H_2$}
                        \EndIf                       
                    \EndFor
                    \State \Return \False
                \EndCase
            \EndMatch
        \EndProcedure
    \end{algorithmic}
\end{algorithm}

\begin{algorithm}[H]
    \caption{Concatenation Subtyping}\label{alg:subconc}
    \begin{algorithmic}[1]
        \Procedure{\textcolor{Mahogany}{SubConc}}{$H_1$, $H_2$}
            \Match{$H_1$, $H_2$}
                \Case{$\epsilon$, $\epsilon$}
                    \State \Return \True
                \EndCase
                \Case{$\alpha(\overline{r_i{:}\phi_i}) \cdot H_{tail_1}$, $\alpha(\overline{r_i{:}\psi_i}) \cdot H_{tail_2}$}
                    \State $Res \gets \forall i, \;\forall u{:}r_i, \; \phi[v \mapsto u] \Longrightarrow \psi[v \mapsto u]$
                    \State \Return $Res \;\land\; \Call{\textcolor{Mahogany}{SubConc}}{H_{tail_1}, \;H_{tail_2}}$
                \EndCase
                \Case{$new_r(X) \cdot H_{tail_1}$, $new_r(X) \cdot H_{tail_2}$}
                    \State \Return $\Call{\textcolor{Mahogany}{SubConc}}{H_{tail_1}, \;H_{tail_2}}$
                \EndCase
                \Case{$get(F) \cdot H_{tail_1}$, $get(F) \cdot H_{tail_2}$}
                    \State \Return $\Call{\textcolor{Mahogany}{SubConc}}{H_{tail_1}, \;H_{tail_2}}$
                \EndCase
                \Case{$call(\phi; \;\overline{a_i{:}(b_i{:}\theta_i})) \cdot H_{tail_1}$, $call(\psi; \;\overline{c_i{:}(b_i{:}\sigma_i})) \cdot H_{tail_2}$}
                    \State $Api \gets \forall u{:}api, \;\phi[v \mapsto u] \Longrightarrow \psi[v \mapsto u]$
                    \State $Params \gets \forall i, \;\forall u{:}b_i, \;\theta[v \mapsto u] \Longrightarrow \sigma[v \mapsto u]$
                    \State \Return $Api \;\land\; Params \;\land\; \Call{\textcolor{Mahogany}{SubConc}}{H_{tail_1}, \;H_{tail_2}}$
                \EndCase
                \Case{$\mu F(\overline{a_i{:}(b_i{:}\phi_i)})(H_{res_{1}}) \cdot H_{tail_1}$, $\mu G(\overline{c_j{:}(t_j{:}\psi_j)})(H_{res_2}) \cdot H_{tail_2}$}
                    \State $Body \gets \textsc{\textcolor{Mahogany}{SubHist}}(\bind(\overline{a_i{:}\overa{v{:}b_i \;|\; \phi_i}}, H_{res_1}),$
                    \State \hspace{7.5em}$\bind(\overline{c_j{:}\overa{v{:}t_j \;|\; \psi_j}}, H_{res_2}[F/G]))$
                    \State \Return $Body \;\land\; \Call{\textcolor{Mahogany}{SubConc}}{H_{tail_1}, \;H_{tail_2}[F/G]}$
                \EndCase
                \Default
                    \State \Return \False
                \EndCase
            \EndMatch
        \EndProcedure
    \end{algorithmic}
\end{algorithm}

\begin{theorem}[Correctness of Procedure \subhist]
    \begin{equation*}
        \forall \;H_1, H_2. \;\subhist(H_1, \;H_2) \Longrightarrow \varnothing \vdash H_1 <: H_2
    \end{equation*}
\end{theorem}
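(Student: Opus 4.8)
The plan is to unfold the conclusion into a purely semantic inclusion and then read off the inclusion from the structural checks the algorithm performs. By the history premise of the \textsc{SubUBase} rule, $\Gamma \vdash H_1 <: H_2$ means exactly $\llbracket H_1 \rrbracket_{\Gamma_{\psi}} \subseteq \llbracket H_2 \rrbracket_{\Gamma_{\psi}}$; for $\Gamma = \varnothing$ this is $\llbracket H_1 \rrbracket \subseteq \llbracket H_2 \rrbracket$, and since both arguments to \subhist\ are required to be well-formed with respect to the empty context no free variables occur in their qualifiers, so the context-dependent denotation coincides with the context-free one of Definition~\ref{def:history_den}. Because $H_1$ and $H_2$ are in Normal Form, each is a nondeterministic choice $\bigoplus_i H^c_i$ of concatenations of $H^T$-terms, and the two choice reduction rules give $\llbracket \bigoplus_i H^c_i \rrbracket = \bigcup_i \llbracket H^c_i \rrbracket$. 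Inspecting Algorithm~\ref{alg:subhist}, a trivial induction on the number of $+$-summands of $H_1$ shows that $\subhist(H_1,H_2)$ succeeds iff for every concatenation component $H^c$ of $H_1$ there is a component $H^c_2$ of $H_2$ with $\subconc(H^c,H^c_2)$ succeeding. Hence, given the key lemma that $\subconc(H^c,H^c_2)$ implies $\llbracket H^c \rrbracket \subseteq \llbracket H^c_2 \rrbracket$, we get $\llbracket H^c \rrbracket \subseteq \llbracket H^c_2 \rrbracket \subseteq \llbracket H_2 \rrbracket$ for each component, and union over the components of $H_1$ yields $\llbracket H_1 \rrbracket \subseteq \llbracket H_2 \rrbracket$.

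The core is the lemma: if $\subconc(H^c_1,H^c_2)$ succeeds for concatenations in Normal Form, then $\llbracket H^c_1 \rrbracket \subseteq \llbracket H^c_2 \rrbracket$. I would prove it by well-founded induction on the size of $H^c_1$, mutually with the \subhist\ statement (the $\mu$-case of \subconc\ calls \subhist\ on a strict subterm, \subhist\ calls \subconc\ on a component, and \subconc\ recurses on strictly smaller tails, so the measure decreases). The case analysis follows the match of Algorithm~\ref{alg:subconc}. A preliminary decomposition lemma for $\llbracket H \cdot H' \rrbracket$ in terms of $\llbracket H \rrbracket$ and the reductions of $H'$ relativised to the $\Omega$ and $\Upsilon$ produced by $H$ is needed; crucially, since \subconc\ only succeeds when the two concatenations have the same positional shape (the same $new_r(X)$ with identical $X$, the same $get(F)$, and $\alpha$, $call$, $\mu$ aligned pairwise), the accumulated sets $\Omega$ agree at corresponding points, making a step-by-step matching of a terminal $\eta \in \llbracket H^c_1 \rrbracket$ against a reduction of $H^c_2$ legitimate. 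For $(\epsilon,\epsilon)$ we use $\llbracket \epsilon \rrbracket = \{\epsilon\}$; for $new_r(X)$ and $get(F)$ these are terminal (adding the same $X$, resp. nothing), so the induction hypothesis on the tails concludes. For $\alpha(\overline{r_i{:}\phi_i})$ versus $\alpha(\overline{r_i{:}\psi_i})$, the algorithm's check $\forall i,\forall u.\,\phi_i[v\mapsto u]\Rightarrow\psi_i[v\mapsto u]$ makes each value set $Val_i$ of the first a subset of that of the second (the resource-existence conjunct is identical because $\Omega$ agrees), so the concrete actions $\alpha(\overline{u})$ produced on the left are among those produced on the right, and we combine with the induction hypothesis on the tails. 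For $call$, the checks $\phi\Rightarrow\psi$ and $\theta_i\Rightarrow\sigma_i$ give $Api^{\phi}\subseteq Api^{\psi}$ and, for each $F\in Api^{\phi}$, that $\bind$ of $H_F$ against the tighter $\theta_i$ denotes a subset of $\bind$ against $\sigma_i$; this last point is a small monotonicity sub-lemma for \bind\ (a tighter qualifier conjoined existentially into the action qualifiers can only shrink the denotation), after which the API expansion on the left is covered by that on the right and the induction hypothesis handles the tails.

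The hardest case is the recursive one, $\mu F(\overline{a_i{:}(b_i{:}\phi_i)})(H_{res_1}) \cdot H_{tail_1}$ versus $\mu G(\overline{c_j{:}(t_j{:}\psi_j)})(H_{res_2}) \cdot H_{tail_2}$, where the algorithm checks $\subhist$ of the bound bodies $\bind(\overline{a_i{:}\overa{v{:}b_i\;|\;\phi_i}},H_{res_1})$ and $\bind(\overline{c_j{:}\overa{v{:}t_j\;|\;\psi_j}},H_{res_2}[F/G])$ together with $\subconc(H_{tail_1},H_{tail_2}[F/G])$. The difficulty is that in the denotation a $\mu$-term reduces to $\epsilon$ while merely registering $\{H_{res_1}/F\}$ in $\Upsilon$, and the body $H_{res_1}$ is pulled in only when a recursive call surfaces in $H_{tail_1}$ (as $F(\cdots)$ or as $call(\phi;\cdots)$ with $\phi[v\mapsto F]$), bound to that call's actual parameters, whereupon the process repeats, so the denotation of the $\mu\cdot H_{tail_1}$ fragment arises from an a priori unbounded unfolding. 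I would fix a terminal $\eta$ in the left denotation and induct on the length of the reduction $(\varnothing,\varnothing,(\mu F(\cdots)(H_{res_1})\cdot H_{tail_1})^{\uparrow}) \rightarrow^* (\Omega',\Upsilon',\eta)$ — finite even though the number of potential unfoldings is statically unbounded — replaying each unfolding step against $H_{res_2}$ using the body subtyping (instantiated via the same bind-and-unfold mechanism with the call's actual parameters) and covering the non-recursive part of the tail using $\subconc(H_{tail_1},H_{tail_2}[F/G])$; the renaming $[F/G]$ is precisely what aligns the recursive-call identifiers, and its soundness rests on the \textsc{WfHistExp} requirement that a recursive identifier occurs only inside its own body and only in recursive-call position, so renaming it cannot disturb any other qualifier (nor, by the no-resource-creation constraint, any $new_r$). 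I expect the main obstacles to be: (i) pinning down the $\Omega$/$\Upsilon$-threading decomposition of $\llbracket H\cdot H'\rrbracket$ so the step-by-step matching is rigorous, in particular reconciling the distinct substitutions $\{H_{res_1}/F\}$ and $\{H_{res_2}/G\}$ registered by the two $\mu$-terms; (ii) the \bind-monotonicity sub-lemma; and above all (iii) the unfolding argument for the $\mu$-case, where the dynamic $\Upsilon$-substitution semantics of recursion must be related to the purely structural body-and-tail checks of the algorithm while carefully tracking how the actual parameters at each recursive call are bound into both bodies.
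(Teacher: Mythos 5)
Your proposal follows essentially the same route as the paper: first establish that a successful \subconc\ call yields denotational inclusion of the matched concatenations, then use the Normal Form decomposition of $H_1$ and $H_2$ into nondeterministic sums of concatenations and the fact that taking unions preserves inclusion to lift this to $\llbracket H_1 \rrbracket \subseteq \llbracket H_2 \rrbracket$. Your sketch is in fact more detailed than the paper's own proof on the points it leaves implicit --- the $\Omega$/$\Upsilon$ threading across a concatenation and the unfolding argument for the $\mu$ case, which the paper dismisses as ``trivial or mirrors the one for actions'' --- so the difficulties you flag are real but do not indicate a divergence in approach.
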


\begin{proof}
    We first check that the \subconc\ procedure is correct.

    In general, we can state that if we have for two logical formulae a relation of the kind:
    \begin{equation}
        \phi \Longrightarrow \psi
    \end{equation}
    It means that all values satisfying the first qualifier are also contained in the second:
    \begin{equation}
        \{u \;|\; \phi[v \mapsto u]\} \subseteq \{u \;|\; \psi[v \mapsto u]\}
    \end{equation}
    Now let us call the first set $A$ and the second set $B$, since the first is contained in the second, there will exist a third set $C$ such that $B = A \cup C$, and thus there will also exist a third qualifier $theta$ such that the following relations are satisfied:
    \begin{equation}
        \begin{gathered}
            A \subseteq A \cup C \\
            \phi \Longrightarrow \phi \;\lor\; \theta
        \end{gathered}
    \end{equation}
    The subtype relationship can also be rewritten as:
    \begin{equation}
        \varnothing \vdash \alpha(b{:}\phi) <: \alpha(b{:}\phi \;\lor\; \theta)
    \end{equation}
    But the reduction relation for actions presented in Figure \ref{fig:denot_history} calculates exactly all values that satisfy the predicate and place each value in non-deterministic choice, so we can rewrite the previous relation as:
    \begin{equation}
        \varnothing \vdash \alpha(b{:}\phi) <: \alpha(b{:}\phi) + \alpha(b{:}\theta)
    \end{equation}
    And here the demonstration ends as by switching to denotations the relationship is correct:
    \begin{equation}
        \llbracket \alpha(b{:}\phi) \rrbracket \subseteq \llbracket \alpha(b{:}\phi) \rrbracket \cup \llbracket \alpha(b{:}\theta) \rrbracket
    \end{equation}
    For the other cases, the demonstration is trivial or mirrors the one for actions as for external function calls.

    Let us finish by showing the correctness then of the \subhist\ procedure, in this case $H_1$ and $H_2$ being in Normal Form we can rewrite the subtyping relation as:
    \begin{equation}
        \varnothing \vdash H_1^{c_1} + H_1^{c_2} + \dots + H_1^{c_n} <: H_2^{c_1} + H_2^{c_2} + \dots + H_2^{c_m}
    \end{equation}
    And thus prove that:
    \begin{equation}
        \llbracket H_1^{c_1} \rrbracket \cup \llbracket H_1^{c_2} \rrbracket \cup \dots \cup \llbracket H_1^{c_n} \rrbracket \subseteq \llbracket H_2^{c_1} \rrbracket \cup \llbracket H_2^{c_2} \rrbracket \cup \dots \cup \llbracket H_2^{c_m} \rrbracket
    \end{equation}
    But having already proved the correctness of the procedure \subconc, we know that \textbf{for all} concatenation of $H_1$ \textbf{exists} at least one concatenation $H_2$ supertype of this one:
    \begin{equation}
        \forall i. 1 \leq i \leq n. \;\exists j. 1 \leq j \leq m. \;\llbracket H_1^{c_i} \rrbracket \subseteq \llbracket H_2^{c_j} \rrbracket
    \end{equation}
    Using the set operator of union on both the right and the left side of the subset relation we have that the denotation of $H_1$ will be contained in the denotation of another history which will in turn be contained in the denotation of $H_2$ (this is because there is the existential quantifier on the concatenations of $H_2$). Thus $H_1$ will be a subtype of $H_2$.
\end{proof}

We cannot, however, assert the \emph{completeness} of the algorithm \subconc, as a property of minimality would have to be guaranteed and satisfied, both with regard to the unfolding of the recursion $\mu$, but also with regard to the qualifiers present in the $call$ and the $\alpha$ actions.

However, it would be possible to further refine the Normal Form by introducing other constraints, but for our purposes this is not necessary at present.

Figure \ref{fig:alg-subtyping} shows the auxiliary inference rules for subtyping with the appropriate modifications.

\begin{figure}[H]
    \begin{multicols}{2}
        \begin{equation*}
            \frac{
                \models \Call{\textcolor{blue}{Query}}{\Gamma, \;\under{v: b \;|\; \bot}, \;\under{v: b \;|\; \phi}}
            }{
                \texttt{err} \notin \llbracket \under{v: b \;|\; \phi \;|\; \psi} \rrbracket_{\Gamma}
            }
        \end{equation*}
    
        \begin{equation*}
            \frac{
                \models \Call{\textcolor{blue}{Query}}{\Gamma, \;\under{v: b \;|\; \phi_2}, \;\under{v: b \;|\; \phi_1}}
            }{
                \Gamma \vdash \overa{v: b \;|\; \phi_1 \;|\; \psi_1} <: \overa{v: b \;|\; \phi_2}
            }
        \end{equation*}
    \end{multicols}
    
    \begin{equation*}
        \frac{
            \begin{gathered}
                \models \Call{\textcolor{blue}{Query}}{\Gamma, \;\under{v: b \;|\; \phi_1}, \;\under{v: b \;|\; \phi_2}} \\
                \models \Call{\textcolor{blue}{Query}}{\Gamma, \;\under{v: b \;|\; \psi_1}, \;\under{v: b \;|\; \psi_2}} \quad \models \Call{\textcolor{blue}{Query}}{\Gamma, \;\under{v: b \;|\; \psi_2}, \;\under{v: b \;|\; \psi_1}} \\
                H_1^{\star} = \bind(\Gamma_{\psi}, \;H_1) \quad H_2^{\star} = \bind(\Gamma_{\psi}, \;H_2) \quad H^{\sharp}_1 = H^{\star}_1 \quad H^{\sharp}_2 = H^{\star}_2 \quad \models \subhist(H^{\sharp}_1, \;H^{\sharp}_2) \\
            \end{gathered}
        }{
            \Gamma \vdash (\under{v: b \;|\; \phi_1 \;|\; \psi_1}, \;H_1) <: (\under{v: b \;|\; \phi_2 \;|\; \psi_2}, \;H_2)
        }
    \end{equation*}
    \caption{Auxiliary subtyping rules.}
    \label{fig:alg-subtyping}
\end{figure}

Finally, we present the Algorithm \ref{alg:conj-hist}, called \histconj, which allows us to find a History Expression common to two histories $H_1$ and $H_2$. This algorithm makes it possible to implement the conjunction of histories, used when a latent effect belongs to a function that represents an argument of another outer function.

The algorithm fixes each concatenation of $H_1$ and for each other concatenation of $H_2$, it verifies which of the two fixed concatenations is contained in the other (via the \subconc\ procedure) by taking the smaller of the two. All these common histories are each time placed in non-deterministic choice since both $H_1$ and $H_2$ will already be in Normal Form.

The special symbol $\bot$ is used to indicate when a History Expression common to the two could not be found. This may occur because there is no universal subtype (let alone supertype) for the History Expressions presented.

When the history $H_{conj}$ returned by the procedure is equal to $\bot$, the typing procedure will fail.

\begin{algorithm}[H]
    \caption{Histories Conjunction}\label{alg:conj-hist}
    \begin{algorithmic}[1]
        \Procedure{\textcolor{Mahogany}{HistConj}}{$\Gamma$, $H_1$, $H_2$}
            \State $H_{conj} \gets \bot$
            \State $H_1 \gets \bind(\Gamma, \;H_1)$
            \State $H_1 \gets \bind(\Gamma, \;H_2)$
            \For{$H^c_{1_i} \in H_1$}
                \State $H_{common} \gets \bot$
                \For{$H^c_{2_j} \in H_2$}
                    \If{\Call{\textcolor{Mahogany}{SubConc}}{$H^c_{1_i}, \;H^c_{2_j}$}}
                        \If{$H_{common} = \bot$}
                            \State $H_{common} \gets H^c_{1_i}$
                        \Else
                            \State $H_{common} \gets H_{common} + H^c_{1_i}$
                        \EndIf
                    \ElsIf{\Call{\textcolor{Mahogany}{SubConc}}{$H^c_{2_j}, \;H^c_{1_i}$}}
                        \If{$H_{common} = \bot$}
                            \State $H_{common} \gets H^c_{2_j}$
                        \Else
                            \State $H_{common} \gets H_{common} + H^c_{2_j}$
                        \EndIf
                    \EndIf                       
                \EndFor
                \If{$H_{common} \neq \bot$}
                    \If{$H_{conj} \neq \bot$}
                        \State $H_{conj} \gets H_{conj} + H_{common}$
                    \Else
                        \State $H_{conj} \gets H_{common}$
                    \EndIf
                \EndIf
            \EndFor
            \State \Return $H_{conj}$
        \EndProcedure
    \end{algorithmic}
\end{algorithm}

\begin{theorem}[Correctness of Procedure \histconj]
    \begin{equation*}
        \forall \;\Gamma, H_1, H_2. \;\histconj(H_1, \;H_2) = H_3 \;\land\; \neg(H_3 = \bot) \Longrightarrow \Gamma \vdash H_1 \;\land\;H_2 = H_3
    \end{equation*}
\end{theorem}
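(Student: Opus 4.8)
The goal is to discharge the single premise of the \textsc{Conjunction History} rule: it suffices to prove that the history $H_3$ returned by \histconj\ satisfies $\llbracket H_1 \rrbracket_{\Gamma_\psi} \cap \llbracket H_2 \rrbracket_{\Gamma_\psi} = \llbracket H_3 \rrbracket_{\Gamma_\psi}$. The plan is to strip the context, decompose along the Normal Form structure, and then reduce everything to the already-established correctness of \subconc. First I would observe that after the two binding steps at the top of the algorithm the histories processed by the loops are closed — \bind\ existentially quantifies every variable of $\Gamma$ into the qualifiers, so no free variable survives — and that, by the inductive definition of the context-dependent denotation, $\llbracket H_k \rrbracket_{\Gamma_\psi}$ equals the context-free denotation $\llbracket \bind(\Gamma_\psi, H_k) \rrbracket$. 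Hence one may work with the plain $\llbracket\cdot\rrbracket$ throughout and re-attach $\Gamma_\psi$ only at the end.

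Next I would use that $H_1, H_2$ are in Normal Form: write $\bind(\Gamma,H_1) = \bigoplus_i H^c_{1_i}$ and $\bind(\Gamma,H_2) = \bigoplus_j H^c_{2_j}$, each $H^c$ a concatenation of $H^T$-expressions; since \bind\ is a homomorphism for $\cdot$ and $+$ and only rewrites qualifiers, the Normal Form (in particular the $\epsilon$-at-the-end condition and the recursion-shifting condition) is preserved. The reduction rules for $+$ in Figure~\ref{fig:denot_history} give $\llbracket \bigoplus_i H^c_{1_i}\rrbracket = \bigcup_i \llbracket H^c_{1_i}\rrbracket$, and likewise for $H_2$, so distributing intersection over union yields $\llbracket H_1\rrbracket \cap \llbracket H_2\rrbracket = \bigcup_{i,j}\bigl(\llbracket H^c_{1_i}\rrbracket \cap \llbracket H^c_{2_j}\rrbracket\bigr)$; the double \textbf{for}-loop of \histconj\ ranges over exactly these pairs. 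It then remains to establish a pairwise claim: for two Normal-Form concatenations $C_1,C_2$, if their skeletons are incompatible (a mismatch in the sequence of action/constructor names, in the identifiers of $new_r$/$get$, or in the $\mu$-skeleton) then $\subconc$ fails in both directions and $\llbracket C_1\rrbracket\cap\llbracket C_2\rrbracket=\varnothing$, so the inner loop correctly contributes nothing; otherwise, if $\subconc(C_1,C_2)$ holds then by Correctness of Procedure \subconc\ we get $\varnothing\vdash C_1<:C_2$, hence $\llbracket C_1\rrbracket\subseteq\llbracket C_2\rrbracket$ and $\llbracket C_1\rrbracket\cap\llbracket C_2\rrbracket=\llbracket C_1\rrbracket$, which is exactly what the loop adds; symmetrically when $\subconc(C_2,C_1)$ holds it adds $C_2$, again denoting the intersection. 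Since $+$ is idempotent in the equational theory of Figure~\ref{fig:hist_eq}, the repeated additions that occur when one concatenation is a subtype of several others are harmless. Assembling, $\llbracket H_3\rrbracket = \bigcup_{i,j}\bigl(\llbracket H^c_{1_i}\rrbracket\cap\llbracket H^c_{2_j}\rrbracket\bigr) = \llbracket H_1\rrbracket\cap\llbracket H_2\rrbracket$; the hypothesis $H_3\neq\bot$ only excludes the degenerate case where no pair matched at all. Re-attaching $\Gamma_\psi$ and applying \textsc{Conjunction History} gives $\Gamma\vdash H_1\;\land\;H_2 = H_3$.

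The main obstacle is the pairwise claim, and specifically the direction asserting that whenever $\llbracket C_1\rrbracket\cap\llbracket C_2\rrbracket$ is non-empty the qualifiers of two skeleton-compatible concatenations are already comparable under implication, so that $\subconc$ succeeds one way or the other. This fails for arbitrary first-order qualifiers — two ``crossing'' predicates have a non-trivial common denotation without either entailing the other — so the argument must either restrict to the histories that actually occur when \histconj\ is invoked by the type system, where the pertinent $\psi$-qualifiers are the closed, replicated ones produced by \textsc{TNew}, \textsc{TGet} and \textsc{Ty} in \textsc{TConst}/\textsc{TOp} (whose denotations are comparable), or be read strictly as the soundness statement it is, with the acknowledged incompleteness of \subconc\ reflected in \histconj\ returning $\bot$. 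Everything else — the context elimination, the Normal-Form bookkeeping, the distribution of $\cap$ over $\cup$, and the idempotence cleanup — is routine, so the skeleton-mismatch sub-case and the qualifier-implication sub-case are where the real care is required.
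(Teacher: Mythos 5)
Your proposal follows essentially the same route as the paper's proof: exploit the Normal Form to reduce the conjunction to pairwise comparisons of concatenations, discharge each comparison via the correctness of \subconc, and reassemble the surviving concatenations with $+$. The obstacle you flag---that two skeleton-compatible concatenations can have a non-empty common denotation without either qualifier entailing the other, so exact equality with $\llbracket H_1 \rrbracket \cap \llbracket H_2 \rrbracket$ is not guaranteed---is real, and it is precisely the step the paper's own proof asserts without justification (``there cannot be a history with a denotation larger than $H_{common}$ that is contained in the intersection''); your two suggested repairs (restricting attention to the qualifiers the type system actually produces, or weakening the claim to soundness of $H_3$ as a subtype of the conjunction) are the natural ways to close it.
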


\begin{proof}
    Since both $H_1$ and $H_2$ are in Normal Form, they will have the following structure:
    \begin{equation}
        \begin{gathered}
            H_1 = H_{1_1}^c + H_{1_2}^c + \dots + H_{1_n}^c \\
            H_2 = H_{2_1}^c + H_{2_2}^c + \dots + H_{2_m}^c
        \end{gathered}
    \end{equation}
    Having fixed a concatenation $H_{1_i}^c$ of $H_1$ we will go on to find the history with the widest denotation that is a subtype of both $H_{1_i}^c$ and $H_2$.
    In fact, at each $j$-th iteration of the second for loop, from the correctness of the procedure \subconc\, if by inductive hypothesis it results that:
    \begin{equation}
        \begin{gathered}
            H_{common} <: H_{1_i}^c \\
            H_{common} <: H_{2_1}^c + H_{2_2}^c + \dots + H_{2_j}^c
        \end{gathered}
    \end{equation}
    At the $j+1$-th iteration it holds that:
    \begin{equation}
        \begin{gathered}
            H^{\star} <: H_{1_i}^c \\
            H^{\star} <: H_{2_{j+1}}^c
        \end{gathered}
    \end{equation}
    Where $H^{\star}$ is the concatenation found to be common between the two fixed. And with the transition to denotations, it is trivial to arrive at the following result:
    \begin{equation}
        \begin{gathered}
            H_{common} + H^{\star} <: H_{1_i}^c \\
            H_{common} + H^{\star} <: H_{2_1}^c + H_{2_2}^c + \dots + H_{2_j}^c + H_{2_{j+1}}^c
        \end{gathered}
    \end{equation}
    By making the transition to denotations, we can state that the denotation of $H_{common}$ is contained in the denotation of the intersection of the two histories:
    \begin{equation}
        \llbracket H_{common} \rrbracket \subseteq \llbracket H_{1_i}^c \rrbracket \cap \llbracket H_{2_1}^c + H_{2_2}^c + \dots + H_{2_j}^c \rrbracket
    \end{equation}
    And that therefore $H_{common}$ is a subtype of the history having as denotation the exact intersection of the two. But there cannot be a history with a denotation larger than $H_{common}$ that is contained in the intersection, this because each time we add a concatenation to $H_{common}$ this is the largest possible common fixed a $H_{1_i}^c$ and a $H_{2_j}^c$! And thus we can conclude that $H_{common}$ corresponds to the intersection.

    Proceeding iteratively on each concatenation $H_{1_i}^c$ of $H_1$, we can conclude that the history contained in $H_{common}$ will correspond to the intersection of $H_1$ and $H_2$.
\end{proof}

The extensions to the original auxiliary algorithms and the \textcolor{Green}{type synthesis} and \textcolor{purple}{type check} rules are given in Appendix \ref{app:alg} as the changes made result at this point quite intuitive.

\section{Discussion}\label{sec:discussion}
This paper addressed the issues related to the use of \emph{Coverage Type} \cite{coverage}. These, based on a dual logic to that of \emph{Hoare}, the \emph{Incorrectness Logic}, and relying on a type structure similar to \emph{Refinement Types} \cite{refinement}, allow the definition of a type that provides guarantees of completeness.

These guarantees may be defined through the use of \emph{method predicates}, i.e. uninterpreted functions that take as input one or more values associated with a given base type, also structured as lists or trees, and return a truth value based on the satisfiability of the property by the values passed. The \emph{method predicates} are used within the type qualifiers, which are encoded in first-order logical formulae. The advantage of using logical formulae is the possibility of verifying their satisfiability through the use of \emph{theorem prover} such as \emph{Z3} \cite{z3}; and they also make it easy to implement algorithms in a structured manner that allow for the more procedural use of typing rules; and they also allow for the verification of certain properties between types, such as subtype relations, in a faster and more intuitive manner.

The usefulness of this new notion of type was presented in the context of the \emph{Input Test Generator} functions used in \emph{Property-Based Testing} \cite{PBT1, PBT2}. In this way, statically, we can determine whether an input value generator for a programme, in addition to being \emph{correct}, is also \emph{complete}, and thus allows for the random generation of \emph{all} possible inputs for the programme.
The advantage of having a generator, which is complete, and which randomly generates an input on which to test the programme, not only allows the testing phase to be automated, but also increases the probability of finding errors and borderline cases.

The language $\lambda^{\textbf{TG}}$, however, did not provide for the use, and especially the management, of resources. The concept of \emph{History Expression} \cite{history2, history3, history} was therefore first presented: we have seen how these expressions represent a sequence of events involving resources, and being able to define ordering ($<_{\eta}$) and belonging ($\in$) relations on these, we can go on to define and verify, again by means of first-order logical formulae, the policies that we want to be respected, generally for each of the History Expression $\eta$ that will belong to the denotation of a History Expression $H$.

Thus, firstly, \emph{History Expressions} were formally introduced, with their grammar, semantics, the operations that can be performed on these, and properties of equality and $\alpha$-conversion.

Secondly, we extended the language $\lambda^{\textbf{TG}}$ with resource types, resource identifiers, and actions, i.e., special operators that will be tracked within the History Expressions, as they act on resources. Consequently, new terms have also been added to the language for creating resources, obtaining \emph{API} and invoking them, and for the use of actions.
The most important extension, certainly, concerns the changes made to the types of the language, with the introduction of \textbf{Extended Coverage Types}, which allow \emph{\textbf{all and only}} the values returned by a term to be tracked, enabling History Expressions to be correctly over-approximated; \textbf{History} $\pi$ types, and the addition of latent effects in \textbf{Function Types}.
Thus, we have created the pair $(\tau, \;H)$ in which $\tau$ is a \emph{Extended Coverage Type}, while $H$ is a History Expression: it is worth emphasising that while $\tau$ (by extracting the \emph{Coverage Type} it defines) is a \emph{Under-approximation Type}, $H$ is a \emph{Over-approximation Type}. This reflects the objective we want to focus on, i.e. we want to try to statically establish the \emph{completeness} of an \emph{Input Test Generator}, but we also want to check the \emph{correctness} of resource usage. This justified the use of \emph{over-approximation} logic in the definition of \emph{History Expressions}, as opposed to \emph{Coverage Types}.

After the extension of $\lambda^{\textbf{TG}}$, we presented the appropriate extensions to the \emph{type system} associated with $\lambda^{\textbf{TG}}$, which was shown not to be completely \emph{syntax-driven}, but uses algorithmic procedures such as \nextd, to obtain the next free identifier to be used; and \bind, since, as History Expression have been defined, they are independent of the type logic (over- or under-approximated) associated with the variables within them, and therefore the use of this procedure is necessary to bind the precise qualifier associated with a value or variable on the fly.

Subsequently, we also presented the extension concerning the algorithms for the \emph{\textcolor{Green}{type synthesis}}, for the \emph{\textcolor{purple}{type check}}, and for the auxiliary functions. In the latter case, it was seen that thanks to the logical separation between the return type (\emph{Coverage Type}) and the behaviour type (\emph{History Expression}), the changes made to the algorithms are minimal. Three algorithms, \subhist, \subconc\ and \histconj, were also introduced in order to procedurally verify the subtype relationship between two History Expression $H_1$ and $H_2$. In order to do this, it was necessary for the \emph{history} to be defined in a more structured and common way to facilitate the implementation of the algorithms, hence the \emph{Normal Form} for History Expressions was introduced, which makes use of the equality rules presented in Figure \ref{fig:hist_eq} and the $\alpha$-conversion rules presented in \ref{subsec:alpha}.

The main advantage gained from the expressive power of the History Expressions, especially in recursive functions, is the availability of qualifiers within the \emph{history}: this has made it possible to infer types that represent a very refined \emph{over-approximation}, which is close to the real behaviour of the programme.

\section{Conclusion Remarks}\label{sec:conclusion}
Possible future developments concern the implementation and use of our static analysis technique:
\begin{enumerate}
    \item The practical application of the type system through the implementation of typing algorithms in an experimental programming language.
    \item The application of our techniques in a actual Property-Based Testing system.
    \item Also provide for an under-approximated interpretation of History Expressions: while not intended for verifying resource correctness against policies, it can help identify bugs in resource usage by analysing behaviours that can be statically guaranteed within the program.
\end{enumerate}


\bibliographystyle{ACM-Reference-Format}
\bibliography{bibliography}

\newpage

\appendix

\section{Typing Rules}\label{app:type-system}

The following rules are in addition to those presented in Section \ref{sec:type_system}.

\begin{figure}[ht]
    \centering
    \noindent
    \begin{minipage}{0.3\textwidth}
        \begin{equation}
            \frac{
                \Gamma \vdash^{\textbf{WF}} (\text{Ty}(op), \; \epsilon)
            }{
                \Gamma \vdash op : (\text{Ty}(op), \; \epsilon)
            }
            \tag{\textsc{TOp}}
        \end{equation}
    \end{minipage}
    \hfill
    \begin{minipage}{0.6\textwidth}
        \begin{equation}
            \frac{
                \begin{gathered}
                    \Gamma \vdash e_x : (\tau_x, \; H_{e_x}) \quad \Gamma, x{:}\tau_x \vdash e : (\tau, \; H_e) \\
                    \Gamma \vdash^{\textbf{WF}} (\tau, \; H_{e_x} \cdot H_e)
                \end{gathered}  
            }{
                \Gamma \vdash \texttt{\small{let}}\;x = e_x\;\texttt{\small{in}}\;e : (\tau, \; H_{e_x} \cdot H_e)
            }
            \tag{\textsc{TLetE}}
        \end{equation}
    \end{minipage}

    \vspace{20pt}

    \begin{equation}
        \frac{
            \begin{gathered}
                \Gamma \vdash op : (\overline{a_i{:}\overa{v: b_i \;|\; \phi_i}} \rightarrow \tau_x, \; H_{op} ) \quad \forall i, \Gamma \vdash u_i : (\under{v: b_i \;|\; \phi_i}, \; H_{u_i}) \\
                \Gamma, x{:}\tau_x\overline{[a_i \mapsto u_i]} \vdash e : (\tau, \; H_e) \quad \Gamma \vdash^{\textbf{WF}} (\tau, \; H_{op} \cdot ( \underset{i}{\bullet} \; H_{u_i} ) \cdot H_e)
            \end{gathered}
        }{
            \Gamma \vdash \texttt{\small{let}}\;x = op\;\overline{u_i}\;\texttt{\small{in}}\;e : (\tau, \; H_{op} \cdot ( \underset{i}{\bullet} \; H_{u_i} ) \cdot H_e)
        }
        \tag{\textsc{TAppOp}}
    \end{equation}

    \vspace{20pt}

    \begin{equation}
        \frac{
            \begin{gathered}
                \neg(\kappa_x = \pi) \quad \Gamma \vdash v_1 : ((\tau_1 \rightarrow \kappa_1) \rightarrow \kappa_x, \; H_{v_1}) \quad \Gamma \vdash v_2: (\tau_1 \rightarrow \kappa_1, \; H_{v_2}) \\
                \Gamma, x : \kappa_x \vdash e: (\tau, \; H_e) \quad \Gamma \vdash^{\textbf{WF}} (\tau, \; H_{v_1} \cdot H_{v_2} \cdot H_e)
            \end{gathered}
        }{
            \Gamma \vdash \texttt{\small{let}}\;x = v_1\;v_2\;\texttt{\small{in}}\;e : (\tau, \; H_{v_1} \cdot H_{v_2} \cdot H_e)
        }
        \tag{\textsc{TAppFunMulti}}
    \end{equation}

    \vspace{20pt}

    \begin{equation}
        \frac{
            \begin{gathered}
                \Gamma \vdash v_1 : ((\tau_1 \rightarrow \kappa_1) \rightarrow (\tau_{v_1}, \; H_{\tau_{v_1}}), \; H_{v_1}) \quad \Gamma \vdash v_2: (\tau_1 \rightarrow \kappa_1, \; H_{v_2}) \\
                \Gamma, x : \kappa_x \vdash e: (\tau, \; H_e) \quad \Gamma \vdash^{\textbf{WF}} (\tau, \; H_{v_1} \cdot H_{v_2} \cdot H_{\tau_{v_1}} \cdot H_e)
            \end{gathered}
        }{
            \Gamma \vdash \texttt{\small{let}}\;x = v_1\;v_2\;\texttt{\small{in}}\;e : (\tau, \; H_{v_1} \cdot H_{v_2} \cdot H_{\tau_{v_1}} \cdot H_e)
        }
        \tag{\textsc{TAppFunLast}}
    \end{equation}
\end{figure}
\section{Typing Algorithms}\label{app:alg}

\subsection{Type Synthesis}

\begin{figure}[H]
    \begin{multicols}{2}
        \begin{equation}
            \frac{
                \Gamma \vdash^{\textbf{WF}} (\under{v: b \;|\; \bot \;|\; \bot}, \; err())
            }{
                \Gamma \vdash err \;\textcolor{Green}{\Rightarrow}\; (\under{v: b \;|\; \bot \;|\; \bot}, \; err())
            }
            \tag{\textsc{SynErr}}
            \label{eq:synerr}
        \end{equation}
        
        \columnbreak
    
        \begin{equation}
            \frac{
                \Gamma \vdash^{\textbf{WF}} (\text{Ty}(c), \; \epsilon)
            }{
                \Gamma \vdash c \;\textcolor{Green}{\Rightarrow}\; (\text{Ty}(c), \; \epsilon)
            }
            \tag{\textsc{SynConst}}
            \label{eq:synconst}
        \end{equation}
    \end{multicols}

    \begin{multicols}{2}
        \begin{equation}
            \frac{
                \Gamma \vdash^{\textbf{WF}} (\text{Ty}(op), \; \epsilon)
            }{
                \Gamma \vdash op \;\textcolor{Green}{\Rightarrow}\; (\text{Ty}(op), \; \epsilon)
            }
            \tag{\textsc{SynOp}}
            \label{eq:synop}
        \end{equation}

        \columnbreak

        \begin{equation}
            \frac{
                \Gamma \vdash^{\textbf{WF}} (\text{Ty}(\alpha), \; \epsilon)
            }{
                \Gamma \vdash \alpha \;\textcolor{Green}{\Rightarrow}\; (\text{Ty}(\alpha), \; \epsilon)
            }
            \tag{\textsc{SynAction}}
            \label{eq:synaction}
        \end{equation}
    \end{multicols}

    \begin{multicols}{2}
        \begin{equation}
            \frac{
                \Gamma \vdash^{\textbf{WF}} (\under{v: b \;|\; v = x \;|\; v = x}, \; \epsilon)
            }{
                \Gamma \vdash x \;\textcolor{Green}{\Rightarrow}\; (\under{v: b \;|\; v = x \;|\; v = x}, \; \epsilon)
            }
            \tag{\textsc{SynVarBase}}
            \label{eq:synvarb}
        \end{equation}

        \columnbreak

        \begin{equation}
            \frac{
                \Gamma(x) = (a: \tau_a \rightarrow \kappa) \quad \Gamma \vdash^{\textbf{WF}} (a:\tau_a \rightarrow \kappa, \; \epsilon)
            }{
                \Gamma \vdash x \;\textcolor{Green}{\Rightarrow}\; (a:\tau_a \rightarrow \kappa, \; \epsilon)
            }
            \tag{\textsc{SynVarFun}}
            \label{eq:synvarf}
        \end{equation}
    \end{multicols}

    \begin{equation}
        \frac{
            \begin{gathered}
                \neg(\kappa_x = \pi) \quad \Gamma \vdash v_1 \textcolor{Green}{\;\Rightarrow\;} (a{:}\overa{v: b \;|\; \phi} \rightarrow \kappa_x, \; H_{v_1}) \quad \Gamma \vdash v_2 \textcolor{Green}{\;\Rightarrow\;} (\under{v: b \;|\; \psi \;|\; \theta}, \; H_{v_2}) \\
                \Gamma' = a:\under{v: b \;|\; v = v_2 \;\land\; \phi \;|\; v = v_2 \;\land\; \phi}, \; x:\kappa_x \quad \Gamma, \Gamma' \vdash e \textcolor{Green}{\;\Rightarrow\;} (\tau, \; H_e) \\
                \tau' = \textsc{\textcolor{blue}{Ex}}(\Gamma', \;\tau) \quad H_e' = \bind(\Gamma'_{\psi}, \;H_e) \quad \Gamma \vdash^{\textbf{WF}} (\tau', \; H_{v_1} \cdot H_{v_2} \cdot H_e')
            \end{gathered}
        }{
            \Gamma \vdash \;\texttt{let}\; x = v_1 \; v_2 \;\texttt{in}\; e \;\textcolor{Green}{\Rightarrow}\; (\tau', \;H_{v_1} \cdot H_{v_2} \cdot H_e')
        }
        \tag{\textsc{SynAppBaseMulti}}
        \label{eq:synappbasemulti}
    \end{equation}

    \begin{equation}
        \frac{
            \begin{gathered}
                \Gamma \vdash v_1 \textcolor{Green}{\;\Rightarrow\;} (a{:}\overa{v: b \;|\; \phi} \rightarrow (\tau_{v_1}, \;H_{\tau_{v_1}}), \; H_{v_1}) \quad \Gamma \vdash v_2 \textcolor{Green}{\;\Rightarrow\;} (\under{v: b \;|\; \psi \;|\; \theta}, \; H_{v_2}) \\
                \Theta = \{[Y/X] \;|\; new(X) \in H_{\tau_{v_1}} \;\land\; \nextd(\Delta) = Y\} \\
                H_{\tau_{v_1}}^{\star} = \bind(a: \under{v: b \;|\; v = v_2 \;\land\; \phi}, \; H_{\tau_{v_1}}(\Theta)) \\
                \Gamma' = a: \under{v: b \;|\; v = v_2 \;\land\; \phi \;|\; v = v_2 \;\land\; \phi}, \; x:\tau_{v_1}(\Theta) \quad \Gamma, \Gamma' \vdash e \textcolor{Green}{\;\Rightarrow\;} (\tau, \;H_e) \\
                \tau' = \textsc{\textcolor{blue}{Ex}}(\Gamma', \;\tau) \quad H_e' = \bind(\Gamma'_{\psi}, \;H_e) \quad \Gamma \vdash^{\textbf{WF}} (\tau', \;H_{v_1} \cdot H_{v_2} \cdot H_{\tau_{v_1}}^{\star} \cdot H_e')
            \end{gathered}
        }{
            \Gamma \vdash \;\texttt{let}\; x = v_1 \; v_2 \;\texttt{in}\; e \textcolor{Green}{\;\Rightarrow\;} (\tau', \; H_{v_1} \cdot H_{v_2} \cdot H_{\tau_{v_1}}^{\star} \cdot H_e')
        }
        \tag{\textsc{SynAppBaseLast}}
        \label{eq:synappbaselast}
    \end{equation}

    \begin{equation}
        \frac{
            \begin{gathered}
                \Gamma \vdash e_x \textcolor{Green}{\;\Rightarrow\;} (\tau_x, \; H_{e_x}) \quad \Gamma' = x:\tau_x \quad \Gamma, \Gamma' \vdash e \textcolor{Green}{\;\Rightarrow\;} (\tau, \; H_e) \\
                \tau' = \textsc{\textcolor{blue}{Ex}}(\Gamma', \;\tau) \quad H_e' = \bind(\Gamma'_{\psi}, \;H_e) \quad \Gamma \vdash^{\textbf{WF}} (\tau', \; H_{e_x} \cdot H_e')
            \end{gathered}  
        }{
            \Gamma \vdash \texttt{\small{let}}\;x = e_x\;\texttt{\small{in}}\;e \textcolor{Green}{\;\Rightarrow\;} (\tau', \; H_{e_x} \cdot H_e')
        }
        \tag{\textsc{SynLetE}}
        \label{eq:synlete}
    \end{equation}

    \begin{equation}
        \frac{
            \begin{gathered}
                \Gamma \vdash op \textcolor{Green}{\;\Rightarrow\;} (\overline{a_i{:}\overa{v: b_i \;|\; \phi_i}} \rightarrow \tau_x, \; H_{op} ) \quad \forall i, \Gamma \vdash u_i \textcolor{Green}{\;\Rightarrow\;} (\under{v: b_i \;|\; \psi_i \;|\; \theta_i}, \;H_{u_i}) \\
                \Gamma' = \overline{a_i : \under{v: b_i \;|\; v = u_i \;\land\; \phi_i \;|\; u_i \;\land\; \phi_i}} \quad \Gamma, \Gamma' \vdash e : (\tau, \; H_e) \\
                \tau' = \textsc{\textcolor{blue}{Ex}}(\Gamma', \;\tau) \quad 
                H_e' = \bind(\Gamma'_{\psi}, \;H_e) \quad 
                \Gamma \vdash^{\textbf{WF}} (\tau', \; H_{op} \cdot ( \underset{i}{\bullet} \; H_{u_i} ) \cdot H_e')
            \end{gathered}
        }{
            \Gamma \vdash \texttt{\small{let}}\;x = op\;\overline{u_i}\;\texttt{\small{in}}\;e \textcolor{Green}{\;\Rightarrow\;} (\tau', \; H_{op} \cdot ( \underset{i}{\bullet} \; H_{u_i} ) \cdot H_e')
        }
        \tag{\textsc{SynAppOp}}
        \label{eq:synappop}
    \end{equation}
    \caption{Extension of type synthesis algorithms - Part I.}
    \label{fig:type-synthesis-1}
\end{figure}

\begin{figure}[H]
    \begin{equation}
        \frac{
            \begin{gathered}
                \Gamma \vdash u_a \textcolor{Green}{\;\Rightarrow\;} (\tau_{u_a}, \;H_{u_a}) \quad \forall i, \Gamma'_i = \overline{y: \tau_y} \quad \Gamma, \Gamma' \vdash d_i(\overline{y}) \textcolor{Green}{\;\Rightarrow\;} (\tau_{d_i}, \;H_{d_i}) \quad H_{d_i}' = \bind(\Gamma_{i'_{\psi}}, \;H_{d_i}) \\
                \forall i, \text{Ty}(d_i) = \overline{y{:}\overa{v: b_y \;|\; \theta_y}} \rightarrow \under{v: b \;|\; \psi_i \;|\; \psi_i} \\
                \Gamma''_{i} = \overline{y{:}\under{v: b_y \;|\; \theta_y \;|\; \theta_y}}, \;a{:}\under{v: b \;|\; v = u_a \;\land\; \psi_i \;|\; v = u_a \;\land\; \psi_i} \\
                \Gamma, \Gamma''_{i} \vdash e_i \textcolor{Green}{\;\Rightarrow\;} (\tau_i, \; H_i) \quad \tau'_i = \textcolor{blue}{\text{Ex}}(\Gamma''_{i}, \;\tau_i) \quad H_i' = \bind(\Gamma_{i''_{\psi}}, \;H_i) \\
                \Gamma \vdash^{\textbf{WF}} (\textcolor{blue}{\text{Disj}}(\overline{\tau'_i}), \;H_{u_a} \cdot (\underset{i}{\bullet} \;H_{d_i}) \cdot \underset{i}{\bigoplus} \;H_i')
            \end{gathered}
        }{
            \Gamma \vdash \texttt{match} \;u_a\; \texttt{with}  \; \overline{d_i\overline{y} \rightarrow e_i} \textcolor{Green}{\;\Rightarrow\;} (\textcolor{blue}{\text{Disj}}(\overline{\tau'_i}), \; H_{u_a} \cdot (\underset{i}{\bullet} \; H_{d_i}') \cdot \underset{i}{\bigoplus} \; H_i')
        }
        \tag{\textsc{SynMatch}}
        \label{eq:synmatch}
    \end{equation}

    \begin{equation}
        \frac{
            \begin{gathered}
                \neg(\kappa_x = \pi) \quad \Gamma \vdash v_1 \textcolor{Green}{\;\Rightarrow\;} ((\tau_1 \rightarrow \kappa_1) \rightarrow \kappa_x, \; H_{v_1}) \quad \Gamma \vdash v_2 \textcolor{purple}{\;\Leftarrow\;} (\tau_1 \rightarrow \kappa_1, \; H_{v_2}) \quad \Gamma' = x : \kappa_x \\
                \Gamma, \Gamma' \vdash e \textcolor{Green}{\;\Rightarrow\;} (\tau, \; H_e) \quad \tau' = \textsc{\textcolor{blue}{Ex}}(\Gamma', \;\tau) \quad H_e' = \bind(\Gamma'_{\psi}, \;H_e) \quad \Gamma \vdash^{\textbf{WF}} (\tau', \; H_{v_1} \cdot H_{v_2} \cdot H_e')
            \end{gathered}
        }{
            \Gamma \vdash \texttt{\small{let}}\;x = v_1\;v_2\;\texttt{\small{in}}\;e : (\tau', \; H_{v_1} \cdot H_{v_2} \cdot H_e')
        }
        \tag{\textsc{SynAppFunMulti}}
        \label{eq:synappfunmulti}
    \end{equation}
    
    \begin{equation}
        \frac{
            \begin{gathered}
                \Gamma \vdash v_1 \textcolor{Green}{\;\Rightarrow\;} ((\tau_1 \rightarrow \kappa_1) \rightarrow (\tau_{v_1}, \; H_{\tau_{v_1}}), \; H_{v_1}) \quad \Gamma \vdash v_2 \textcolor{purple}{\;\Leftarrow\;} (\tau_1 \rightarrow \kappa_1, \; H_{v_2}) \quad \Gamma' = x : \tau_{v_1} \\
                \Gamma, \Gamma' \vdash e \textcolor{Green}{\;\Rightarrow\;} (\tau, \; H_e) \quad \tau' = \textsc{\textcolor{blue}{Ex}}(\Gamma', \;\tau) \quad H_e' = \bind(\Gamma'_{\psi}, \;H_e) \quad \Gamma \vdash^{\textbf{WF}} (\tau', \; H_{v_1} \cdot H_{v_2} \cdot H_{\tau_{v_1}} \cdot H_e')
            \end{gathered}
        }{
            \Gamma \vdash \texttt{\small{let}}\;x = v_1\;v_2\;\texttt{\small{in}}\;e : (\tau', \; H_{v_1} \cdot H_{v_2} \cdot H_{\tau_{v_1}} \cdot H_e')
        }
        \tag{\textsc{SynAppFunLast}}
        \label{eq:synappfunlast}
    \end{equation}

    \begin{equation}
        \frac{
            \begin{gathered}
                \Gamma' = x: \under{v: r \;|\; v = X \;|\; v = X} \quad \Gamma, \Gamma' \vdash e \textcolor{Green}{\;\Rightarrow\;} (\tau, \; H_e) \quad \nextd(\Delta) = X \\
                \tau' = \textsc{\textcolor{blue}{Ex}}(\Gamma', \; \tau) \quad H_e' = \bind(\Gamma'_{\psi}, \;H_e) \quad \Gamma \vdash^{\textbf{WF}} (\tau', \; \texttt{\small{new}}_r(X) \cdot H_e')
            \end{gathered}
        }{
            \Gamma \vdash \texttt{\small{let}}\;x = \texttt{\small{new}}_r\;()\;\texttt{\small{in}}\;e : (\tau, \;\texttt{\small{new}}_r(X) \cdot H_e)
        }
        \tag{\textsc{SynNew}}
        \label{eq:synew}
    \end{equation}

    \begin{equation}
        \frac{
            \begin{gathered}
                \Gamma' = x : \under{v: api \;|\; v = F \;|\; v = F} \quad \Gamma, \Gamma' \vdash e \textcolor{Green}{\;\Rightarrow\;} (\tau, \; H_e) \quad \Delta(F) = \tau_{F} \\
                \tau' = \textsc{\textcolor{blue}{Ex}}(\Gamma', \;\tau) \quad H_e' = \bind(\Gamma'_{\psi}, \;H_e) \quad \Gamma \vdash^{\textbf{WF}} (\tau', \; \texttt{\small{get}}(F) \cdot H_e')
            \end{gathered}
        }{
            \Gamma \vdash \texttt{\small{let}}\;x = \texttt{\small{get}}\;F\;\texttt{\small{in}}\;e : (\tau, \; \texttt{\small{get}}(F) \cdot H_e)
        }
        \tag{\textsc{SynGet}}
        \label{eq:syngetapi}
    \end{equation}

    \begin{equation}
        \frac{
            \begin{gathered}
                \Gamma \vdash \alpha \textcolor{Green}{\;\Rightarrow\;} (\overline{a_i{:}\overa{v: b_i \;|\; \phi_i}} \rightarrow \tau_x, \;H_{\alpha}) \\
                \forall i, \; \Gamma \vdash u_i \textcolor{Green}{\;\Rightarrow\;} (\under{v: b_i \;|\; \psi_i \;|\; \theta_i}, \; H_{u_i}) \quad \Gamma' = \overline{a_i{:}\under{v: b_i \;|\; v = u_i \;\land\; \phi_i \;|\; v = u_i \;\land\; \phi_i}} \quad \Gamma'' = \Gamma', x:\tau_x \\
                \Gamma, \Gamma'' \vdash e \textcolor{Green}{\;\Rightarrow\;} (\tau, \; H_e) \quad \tau' = \textsc{\textcolor{blue}{Ex}}(\Gamma'', \;\tau) \quad H_e' = \bind(\Gamma''_{\psi}, \;H_e) \\
                H^{\star} = \bind(\Gamma'_{\psi}, \;\alpha(\overline{b_i{:}v = a_i})) \quad \Gamma \vdash^{\textbf{WF}} (\tau', \;H_\alpha \cdot ( \underset{i}{\bullet} \; H_{u_i} ) \cdot H^{\star} \cdot H_e')
            \end{gathered}
        }{
            \Gamma \vdash \texttt{\small{let}}\;x = \alpha \; \overline{u_i}\;\texttt{\small{in}}\;e \textcolor{Green}{\;\Rightarrow\;} (\tau', \; H_\alpha \cdot ( \underset{i}{\bullet} \; H_{u_i} ) \cdot H^{\star} \cdot H_e')
        }
        \tag{\textsc{SynLetAction}}
        \label{eq:synletaction}
    \end{equation}
    \caption{Extension of type synthesis algorithms - Part II.}
    \label{fig:type-synthesis-2}
\end{figure}

\begin{figure}[H]
    \begin{equation}
        \frac{
            \begin{gathered}
                \Gamma \vdash f \textcolor{Green}{\;\Rightarrow\;} \under{v: api \;|\; \phi \;|\; \psi} \quad \forall i, \; \Gamma \vdash y_i \textcolor{Green}{\;\Rightarrow\;} (\under{v: b_i \;|\; \phi_i \;|\; \psi_i}, \;H_{y_i}) \\
                \under{v: api \;|\; \phi' \;|\; \psi'} = \textsc{\textcolor{blue}{Ex}}(\Gamma, \;\under{v: api \;|\; \phi \;|\; \psi}) \quad Api = \{F \;|\; \phi'[v \mapsto F] \;\land\; \Delta(F)\downarrow\} \\
                \forall F \in Api, \;\Delta(F) = \overline{a_i{:}\overa{v: b_i \;|\; \theta_i}} \rightarrow (\tau_{F_x}, \; H_{F_x}) \;\land\; \Gamma \vdash^{\textbf{WF}} \overline{a_i{:}\overa{v: b_i \;|\; \theta_i}} \rightarrow (\tau_{F_x}, \; H_{F_x}) \\
                \textsc{\textcolor{blue}{Disj}}(\overline{\tau_{F_x}}) = \tau_x \quad \Gamma' = \overline{a_i: \under{v: b_i \;|\; v = y_i \;\land\; \theta_i \;|\; v = y_i \;\land\; \theta_i}}, \;x:\tau_x \\
                \Gamma, \Gamma' \vdash e \textcolor{Green}{\;\Rightarrow\;} (\tau, \; H_e) \quad \tau' = \textsc{\textcolor{blue}{Ex}}(\Gamma', \;\tau) \quad H_e' = \bind(\Gamma', \;H_e) \\
                \Gamma \vdash^{\textbf{WF}} (\tau', \; ( \underset{i}{\bullet} \; H_{y_i} ) \cdot call(\psi; \;\overline{a_i{:}(b_i{:}v = y_i \;\land\; \theta_i})) \cdot H_e')
            \end{gathered}
        }{
            \Gamma \vdash \texttt{\small{let}}\;x = f\;\overline{y_i}\; \texttt{\small{in}}\;e \textcolor{Green}{\;\Rightarrow\;} (\tau', \; ( \underset{i}{\bullet} \; H_{y_i} ) \cdot call(\psi; \;\overline{a_i{:}(b_i{:}v = y_i \;\land\; \theta_i})) \cdot H_e')
        }
        \tag{\textsc{SynAppAPI}}
        \label{eq:synappapi}
    \end{equation}
    \caption{Extension of type synthesis algorithms - Part III.}
    \label{fig:type-synthesis-3}
\end{figure}

\subsection{Type Check}

\begin{figure}[ht]
    \begin{multicols}{2}
        \begin{equation}
            \frac{
                \varnothing \vdash e \textcolor{Green}{\;\Rightarrow\;} \pi \quad \Gamma \vdash \pi <: \pi' \quad \Gamma \vdash^{\textbf{WF}} \pi'
            }{
                \Gamma \vdash e \textcolor{purple}{\;\Leftarrow\;} \pi'
            }
            \tag{\textsc{ChkSub}}
            \label{eq:chksub}
        \end{equation}

        \columnbreak
    
        \begin{equation}
            \frac{
                \Gamma, x: \tau_x \vdash e \textcolor{purple}{\;\Leftarrow\;} \pi \quad \Gamma \vdash^{\textbf{WF}} (x: \tau_x \rightarrow \pi, \; \epsilon)
            }{
                \Gamma \vdash \lambda x : \lfloor\tau_x\rfloor. e \textcolor{purple}{\;\Leftarrow\;} (x: \tau_x \rightarrow \pi, \; \epsilon)
            }
            \tag{\textsc{ChkFun}}
            \label{eq:chkfun}
        \end{equation}
    \end{multicols}
    
    \begin{equation}
        \frac{
            \Gamma, x: \tau_x \vdash e \textcolor{purple}{\;\Leftarrow\;} (y: \tau_y \rightarrow \kappa, \; \epsilon) \quad \Gamma \vdash^{\textbf{WF}} (x: \tau_x \rightarrow y: \tau_y \rightarrow \kappa, \; \epsilon)
        }{
            \Gamma \vdash \lambda x : \lfloor\tau_x\rfloor. e \textcolor{purple}{\;\Leftarrow\;} (x: \tau_x \rightarrow y: \tau_y \rightarrow \kappa, \; \epsilon)
        }
        \tag{\textsc{ChkFunFlat}}
        \label{eq:chkflat}
    \end{equation}

    \begin{equation}
        \frac{
            \begin{gathered}
                \nextd(\Delta, \;x{:}\overa{v: b \;|\; v \prec x \;\land\; \phi} \rightarrow \overline{\tau_i} \rightarrow (\tau, \;\epsilon)) = F \\
                \Gamma, x{:}\overa{v: b \;|\; \phi}, f{:}\under{v: api \;|\; v = F \;|\; v = F} \vdash e \textcolor{purple}{\;\Leftarrow\;} \overline{\tau_i} \rightarrow (\tau, \;H) \\
                A = \{x{:}(b{:}\phi)\} \cup \{a_i{:}(b_i{:}\psi_i) \;|\; \tau_i = a_i{:}\overa{v: b_i \;|\; \psi_i}\} \quad \Gamma \vdash^{\textbf{WF}} (x{:}\overa{v: b \;|\; \phi} \rightarrow \overline{\tau_i} \rightarrow (\tau, \;H), \;\mu F(A)(H))
            \end{gathered}
        }{
            \Gamma \vdash \text{fix}\;f{:}(b \rightarrow \lfloor\kappa\rfloor).\lambda x{:}b.\;e \textcolor{purple}{\;\Leftarrow\;} (x{:}\overa{v: b \;|\; \phi} \rightarrow \overline{\tau_i} \rightarrow (\tau, \;H), \;\mu F(A)(H))
        }
        \tag{\textsc{ChkFix}}
        \label{eq:chkfix}
    \end{equation}
    \caption{Extension of type check algorithms.}
    \label{fig:type-check}
\end{figure}

\newpage

\subsection{Auxiliary Functions}

\begin{algorithm}[ht]
    \caption{Exists}\label{alg:ex}
    \begin{algorithmic}[1]
        \Procedure {\textcolor{blue}{Ex}}{$x$, $\under{v: t \;|\; \phi_x \;|\; \psi_x}$, $\tau$}
            \Match {$\tau$}
                \Case {$\under{v: t \;|\; \phi \;|\; \psi}$}
                    \State \Return $\under{v: t \;|\; \exists x: t, \phi_x[v \mapsto x] \;\land\; \phi \;|\; \exists x: t, \psi_x[v \mapsto x] \;\land\; \psi}$
                \EndCase
                \Case {$(\tau_H, \;H)$}
                    \State \Return $(\Call{\textcolor{blue}{Ex}}{x, \under{v: t \;|\; \phi_x \;|\; \psi_x}, \;\tau_H}, \;\bind(x:\under{v: t \;|\; \psi_x}, \;H))$
                \EndCase
                \Case {$\overa{v: t \;|\; \phi}$}
                    \State \Return $\overa{v: t \;|\; \forall x: t, \phi_x[v \mapsto x] \Longrightarrow \phi}$
                \EndCase
                \Case {$a{:}\tau_a \rightarrow \kappa$}
                    \State $\tau_a' \gets \Call{\textcolor{blue}{Fa}}{x, \under{v: t \;|\; \phi_x \;|\; \psi_x}, \;\tau_a}$
                    \State \Return {$a{:}\tau'_a \rightarrow \Call{\textcolor{blue}{Ex}}{x, \under{v: t \;|\; \phi_x \;|\; \psi_x}, \;\kappa}$}
                \EndCase
            \EndMatch
        \EndProcedure
    \end{algorithmic}
\end{algorithm}

\begin{algorithm}[ht]
    \caption{ForAll}\label{alg:forall}
    \begin{algorithmic}[1]
          \Procedure {\textcolor{blue}{Fa}}{$x$, $\under{v: t \;|\; \phi_x \;|\; \psi_x}$, $\tau$}
            \Match {$\tau$}
                \Case {$\under{v: t \;|\; \phi \;|\; \psi}$}
                    \State \Return $\under{v: t \;|\; \forall x: t, \phi_x[v \mapsto x] \Longrightarrow \phi \;|\; \forall x: t, \psi_x[v \mapsto x] \Longrightarrow \psi}$
                \EndCase
                \Case {$(\tau_H, \;H)$}
                    \State \Return $(\Call{\textcolor{blue}{Fa}}{x, \under{v: t \;|\; \phi_x \;|\; \psi_x}, \;\tau_H}, \;\bind(x:\under{v: t \;|\; \psi_x}, \;H))$
                \EndCase
                \Case {$\overa{v: t \;|\; \phi}$}
                    \State \Return $\overa{v: t \;|\; \exists x: t, \phi_x[v \mapsto x] \;\land\; \phi}$
                \EndCase
                \Case {$a{:}\tau_a \rightarrow \kappa$}
                    \State $\tau_a' \gets \Call{\textcolor{blue}{Ex}}{x, \under{v: t \;|\; \phi_x \;|\; \psi_x}, \;\tau_a}$
                    \State \Return {$a{:}\tau'_a \rightarrow \Call{\textcolor{blue}{Fa}}{x, \under{v: t \;|\; \phi_x \;|\; \psi_x}, \;\kappa}$}
                \EndCase
            \EndMatch
        \EndProcedure
    \end{algorithmic}
\end{algorithm}

\begin{algorithm}[ht]
    \caption{Disjunction}\label{alg:disj}
    \begin{algorithmic}[1]
        \Procedure {\textcolor{blue}{Disj}}{$\Gamma$, $\tau_1$, $\tau_2$}
            \Match {$\tau_1$, $\tau_2$}
                \Case {$\under{v: t \;|\; \phi_1 \;|\; \psi_1}$, $\under{v: t \;|\; \phi_2 \;|\; \psi_2}$}
                    \State \Return $\under{v: t \;|\; \phi_1 \;\lor\; \phi_2 \;|\; \psi_1 \;\lor\; \psi_2}$
                \EndCase
                \Case {$(\tau_{H_1}, \;H_1)$, $(\tau_{H_2}, \;H_2)$}
                    \State \Return $(\Call{\textcolor{blue}{Disj}}{\Gamma, \;\tau_{H_1}, \;\tau_{H_2}}, \;H_1 + H_2)$
                \EndCase
                \Case {$\overa{v: t \;|\; \phi_1}$, $\overa{v: t \;|\; \phi_2}$}
                    \State \Return $\overa{v: t \;|\; \phi_1 \;\land\; \phi_2}$
                \EndCase
                \Case {$a{:}\tau_{a_1} \rightarrow \kappa_1$, $a{:}\tau_{a_2} \rightarrow \kappa_2$}
                    \State $\tau_a \gets \Call{\textcolor{blue}{Conj}}{\Gamma, \;\tau_{a_1}, \;\tau_{a_2}}$
                    \State \Return {$a{:}\tau_a \rightarrow \Call{\textcolor{blue}{Disj}}{\Gamma, \;\kappa_1, \;\kappa_2}$}
                \EndCase
            \EndMatch
        \EndProcedure
    \end{algorithmic}
\end{algorithm}

\begin{algorithm}[ht]
    \caption{Conjunction}\label{alg:conj}
    \begin{algorithmic}[1]
        \Procedure {\textcolor{blue}{Conj}}{$\Gamma$, $\tau_1$, $\tau_2$}
            \Match {$\tau_1$, $\tau_2$}
                \Case {$\under{v: t \;|\; \phi_1 \;|\; \psi_1}$, $\under{v: t \;|\; \phi_2 \;|\; \psi_2}$}
                    \State \Return $\under{v: t \;|\; \phi_1 \;\land\; \phi_2 \;|\; \psi_1 \;\land\; \psi_2}$
                \EndCase
                \Case {$(\tau_{H_1}, \;H_1)$, $(\tau_{H_2}, \;H_2)$}
                    \State $H_3 \gets \Call{\textcolor{Mahogany}{HistConj}}{\Gamma, \;H_1, \;H_2}$
                    \If{$H_3 = \bot$}
                        \State \Return \Failure
                    \Else
                        \State \Return $(\Call{\textcolor{blue}{Conj}}{\Gamma, \;\tau_{H_1}, \;\tau_{H_2}}, \;H_3)$
                    \EndIf
                \EndCase
                \Case {$\overa{v: t \;|\; \phi_1}$, $\overa{v: t \;|\; \phi_2}$}
                    \State \Return $\overa{v: t \;|\; \phi_1 \;\lor\; \phi_2}$
                \EndCase
                \Case {$a{:}\tau_{a_1} \rightarrow \kappa_1$, $a{:}\tau_{a_2} \rightarrow \kappa_2$}
                    \State $\tau_a \gets \Call{\textcolor{blue}{Disj}}{\Gamma, \;\tau_{a_1}, \;\tau_{a_2}}$
                    \State \Return {$a{:}\tau_a \rightarrow \Call{\textcolor{blue}{Conj}}{\Gamma, \;\kappa_1, \;\kappa_2}$}
                \EndCase
            \EndMatch
        \EndProcedure
    \end{algorithmic}
\end{algorithm}

\end{document}